							\newcommand{\R}{\mathds{R}}
							\newcommand{\Prob}{\mathds{P}}
							\newcommand{\Exp}{\mathds{E}}
							\newcommand{\one}{\mathds{1}}
							\newcommand{\X}{\textbf{X}}
							\newcommand{\x}{\textbf{x}}
							\newcommand{\Ubold}{\textbf{U}}
							\newcommand{\ubold}{\textbf{u}}
							\definecolor{light-gray}{gray}{0.90}
							\definecolor{mid-gray}{gray}{0.50}
\newtheorem{ass}[theorem]{Assumptions}
\begin{document}
\rowcolors{2}{gray!15}{white}
\title{Sequential Monte Carlo Samplers for capital allocation under copula-dependent risk models}
\titlerunning{SMC Samplers for capital allocation under copula-dependent risk models}

\author{Rodrigo S. Targino \and Gareth W. Peters \and Pavel V. Shevchenko}
\institute{Rodrigo S. Targino \at University College London, London, UK,\\
					\email{r.targino.12@ucl.ac.uk}
					\and Gareth W. Peters \at University College London, London, UK,
					\and Pavel V. Shevchenko \at CSIRO, Sydney, Australia and University College London, London, UK}
\date{\today}
\maketitle

\begin{abstract}
In this paper we assume a multivariate risk model has been developed for a portfolio and its capital derived as a homogeneous risk measure. The Euler (or gradient) principle, then, states that the capital to be allocated to each component of the portfolio has to be calculated as an expectation conditional to a rare event, which can be challenging to evaluate in practice. We exploit the copula-dependence within the portfolio risks to design a Sequential Monte Carlo Samplers based estimate to the marginal conditional expectations involved in the problem, showing its efficiency through a series of computational examples.
%
\end{abstract}

\keywords{Risk Management \and Capital Allocation \and Sequential Monte Carlo (SMC) \and Copula Models}
\vspace{3mm}
\hspace{-5mm}\textbf{MSC 2010:} 65C40, 65C05, 62P05 \\
\\
\textbf{JEL Classification:} C15, G21, G22

\section{Introduction}
\label{sec:intro}

Since financial institutions are in the business of managing and reallocating risks, an important part of their internal risk management is to have an appropriate level of capital as a buffer against unexpected losses. Typically, retail banks, investment banks and insurance companies must satisfy their local jurisdiction version of capital adequacy, which are usually specified by the local regulatory authorities according to some version of either the Basel II/III banking supervision guides or the Solvency II insurance guides.

In this context capital may refer to two different quantities: Economic Capital (EC) or Regulatory Capital (RC). The first is the capital that would have be chosen in the absence of regulation. It represents the amount the institution estimates in order to remain solvent at a given confidence level and fixed time horizon and is set in order to meet some credit risk rating. The Regulatory Capital, in turn, reflects the needs given by regulatory guidance and rules. It is important to note that the capital actually been held by the institution (henceforth referred to as \emph{capital}) will always be the maximum between the EC and the RC.

Both for banks and insurance companies, regulation has evolved towards Regulatory Capital based on risk measures (see Section \ref{sec:riskContributions}). For banks, the Basel II accord set the standard to the Value at Risk (still in use at Basel III) while insurance directives such as Solvency II and the Swiss Solvency Test diverge on the risk measure to be used (the first suggests the usage of Value at Risk and the former Expected Shortfall). It is important to note that although the RC and the EC will differ in most of the financial institutions, both quantities are usually based on the same class of risk measures, differing only on the confidence level.

This work is focused on studying the problem of capital allocation, with particular focus on Operational Risk (OpRisk) capital -- see \cite{shevchenko2011modelling} for an text-book introduction to OpRisk modelling. In order to solely study this aspect, we will assume that the parametric risk models have been selected and the parameter estimations performed in each business unit or division for all the relevant risk types. Then using these models the bank or insurance company has obtained an estimate of the total capital from the model. Based on this capital figure, our work aims to study a problem that follows the calculation of the capital to be held, namely the second order problem of the capital allocation to different divisions and business units as a capital charge (see Table \ref{tbl:businessLinesEventType}). Once the overall capital is calculated, the financial institution faces the problem of how to allocate this given capital among different risk sources, in order to understand how much each risk cell contributes to the total risk (capital) and in order to asses their risk management controls and performance, as part of the process discussed in the Pillar III of Basel II/III. Put another way there is a capital charge that must be allocated to each division and business unit which must reflect commensurately the risk profile of the given business unit. This continues to provide an incentive for banks following the Advanced Measurement Approach (AMA) to carefully model their dependence structures.


\begin{table}
\centering
\begin{tabular}{ll}
	\hline
		& \textbf{Business line} \\
	\hline
	1 & Corporate finance \\ 
	2 & Trading and sales \\
	3 & Retail banking \\ 
	4 & Commercial banking \\
	5 & Payment and settlement \\
	6 & Agency services \\
	7 & Asset management \\
	8 & Retail brokerage \\
	\hline
\end{tabular}
\hspace{0.5cm}
\begin{tabular}{ll}
\hline
\rowcolor{white}
	& \textbf{Event type} \\
	\hline
1 & Internal fraud \\
2 & External fraud \\
3 & Employment practices and workplace safety \\
4 & Clients, products and business practices \\
5 & Damage to physical assets \\
6 & Business disruption and system failures \\
7 & Execution, delivery and process management \\
\hline
\end{tabular}
\caption{Basel II business lines (left) and event types (right) -- see \cite{baselii}, Annexes 8 and 9.}
\label{tbl:businessLinesEventType}
\end{table}

Apart from the fact that losses in some of the these risk cells may be dependent, the Basel II accord \cite{baselii} $\mathsection 657$ ensures that the capital estimate can have diversification benefits if dependency modelling is approved by the local regulator. In other words, the bank may be authorized to set aside less Regulatory Capital if they can demonstrate evidence for dependence features in their loss processes between each business line or between risk types within a business line.

We will also assume the dependency among all risk cells in the portfolio is known from the first phase of model selection and estimation. More precisely, we will assume that the bank's portfolio consists of $d$ individual losses (in a risk cell level) denoted by $X_1,...,X_d$, each one modelled as random variables (rv's) with continuous cumulative distribution function (cdf) given by $F_i$, $i=1,...,d$.

The dependence structure of the losses will be given by a (known) copula $C(u_1,...,u_d)$ (see Appendix \ref{sec:gaussianCopula} for the definition and some results regarding copulas), leading to a joint distribution of the losses given by
$$F_\X ( \x ) = C\big( F_1(x_1),...,F_d(x_d)\big),$$
where $\X = (X_1,...,X_d)$, and $\x = (x_1,...,x_d)$.

In the recent years many academic works were devoted to the joint modelling of operational losses and it impact on capital calculation. Recently, \cite{brechmann2014flexible} introduced a zero-inflated dependence model, which is then coupled using different copulas (Archimedean, elliptical, individual Student's t and vine). Previously, \cite{Giacometti08aggregationissues} used $\alpha$-stable marginal distributions and Student's t copulas (both symmetric and skewed ones); \cite{bocker2008modelling} derived approximations for the operational Value at Risk assuming generalized Pareto distributions and L\'evy copulas. In the context of OpRisk, \cite{peters2009dynamic} developed a dynamic OpRisk model with copula dependence structures between the frequency, severity and annual loss as possible model structures. In addition, a detailed account of dependence modelling and capital estimation can be found in \cite{OpRiskBOOK1} and \cite{OpRiskBOOK2}.

The remainder of the paper is structured as follows. In Section \ref{sec:riskContributions} we present results related to risk contributions, and special attention is drawn to the Euler allocation rule and its duality with expectations conditional to rare events. After understanding the relationship between the Euler principle and conditional expectations, Section \ref{sec:copSMC} presents different methods to estimate these expectations: from a simple Monte Carlo scheme to recently developed Importance Sampling approaches. To be able to discuss the Sequential Monte Carlo Sampler (SMCS) algorithm proposed, Section \ref{sec:intermediateSets} briefly discusses how to design a sequence of probability densities converging to the conditional distribution involved in the allocation problem. The formal sequential procedure is, then, carefully described in Section \ref{sec:SMC} and the algorithm's ingredients necessary for the allocation problem designed in Section \ref{sec:linearConstraints}. We conclude the paper with some simulation examples on Section \ref{sec:Examples} and final remarks on Section \ref{sec:conclusions}.

\section{Risk contributions and capital allocations}
\label{sec:riskContributions}

This section recalls a few key results that have been developed primarily relating to coherent capital allocation principles. A brief overview is provided for the relevant theory of risk contributions, with a focus on Euler allocation principles, for which a more detailed introduction is provided in \cite{mcneil2010quantitative}, Section 6.3, \cite{tasche2008capital} and \cite{rosen2010risk}, Section 3.

In general one may consider the notation in which $X_1,...,X_d$ denotes the returns of $d$ different assets in a portfolio. In the case of OpRisk modelling, this notation will correspond to losses from $d$ different business unit and risk type combinations within divisions of a banking or insurance institution. 

If the weights of these assets in the portfolio are given by $\bm{\lambda} = (\lambda_1,...,\lambda_d) \in \Lambda \subset \R^d \setminus \{\textbf{0}\}$ then we will denote the portfolio-wide loss by
\begin{equation}
\label{eq:weightedPortfolio}
X(\bm{\lambda}) = \sum_{i=1}^d \lambda_i X_i.
\end{equation}
In particular, if $\bm{\lambda} = (1,...,1)$ we will write $X = X(\bm{\lambda})$. In the case of OpRisk modelling this aggregate loss amount given by $\sum_{i=1}^d \lambda_i X_i$ would represent the institution-wide total annual loss and typically the weights would be unity.

If all the losses $X_1,...,X_d$ are defined in a common probability space $(\Omega, \mathcal{F}, \Prob)$, then we denote $\rho \, : \, \Omega \rightarrow \R$ a generic risk measure. Given a specific risk measure $\rho$, the function $r_\rho \, : \, \Lambda \rightarrow \R $ such that $r_\rho(\bm{\lambda}) = \rho(X(\bm{\lambda}))$ will be called risk-measure function.

In the context of OpRisk, the Basel II/III guidelines specify clearly the recommended regulatory requirements for capital adequacy standards. For instance in the European Union (EU) region the Capital Requirements Directives (CRD) for the financial services industry have developed a supervisory framework to reflect its guidelines on capital measurement and capital standards. In the EU region the CRD-IV package entered into force on 17 July 2013 and reflects the Basel III standards expanding existing Basel II regulatory EU directives (2013/36/EU) and EU regulation (575/2013). These EU directives, like their counterparts in other jurisdictions include specifications on the amounts of Tier I and Tier II capital, liquidity ratios and other relevant capital adequacy criteria. In this paper we are not so interested in specific break up of capital components, instead in this regard, throughout this paper we will refer to capital rather loosely as corresponding to generically the total amount of assets that a financial institution must hold to mitigate their yearly loss exposure obtained from quantification of a risk measure based on the institutions OpRisk models.

From now on, let us assume the bank's capital is defined as a function of a given risk measure $\rho$, defined as either the standard deviation, the Value at Risk or the Expected Shortfall (see Definition \ref{def:riskMeasures}). For example, for regulatory purposes the OpRisk capital to be held is given by $\rho(X(\bm{\lambda})) = VaR_\alpha(X(\bm{\lambda}))$, with $\alpha = 0.999$ for the Regulatory Capital and $\alpha = 0.9995$ for the Economic Capital, for example.

\begin{definition}[Particular choices of risk measures] \label{def:riskMeasures} If $X_1,...,X_d$ are continuous random variables, and $X=\sum_{i=1}^d X_i$ three of the most popular choices of risk measures are given by
\begin{enumerate}
	\item Standard deviation: $\rho(X) = \sqrt{Var(X)}$;
	\item Value at Risk: $\rho(X) = VaR_\alpha(X) := \inf\{x \in \R \, : \, F_X(x) = \alpha\}$;
	\item  Expected Shortfall: $\rho(X) = ES_\alpha(X) := \Exp[X \, | \, X \geq VaR_\alpha(X)]$.
\end{enumerate}
\end{definition}

Assuming the risk measure $\rho$ has been chosen and that $\rho(X(\bm{\lambda}))$, the risk (capital) of the portfolio (institution) has already been calculated, the allocation process consists of understanding how much of the risk (capital) is due to each of the constituents of the portfolio. In the case of OpRisk, this involves understanding what each divisions total capital requirement would be across all the Basel III risk types, as well as what each individual business units and risk types combinations capital requirement should be, based on a ``fair'' risk based capital allocation from the institutions total capital requirement. This involves the disaggregation of the total institutions capital back down the business unit/risk type structure that will be specific for a given institution but can be generically represented for instance by the 56 business unit/risk type categories proposed in Basel II/III.

More formally, let us denote by $\mathscr{C}_i^{\rho}(\bm{\lambda})$ the capital allocated to one unit of $X_i$ when the portfolio's loss is given by $X(\bm{\lambda})$. For the sake of simplicity, to derive the Euler allocation we will accept the following set of assumptions.

\begin{ass} If the individual and portfolio losses are given, respectively by $X_1,...,X_d$ and Equation (\ref{eq:weightedPortfolio}) then we assume that
\begin{itemize}
	\item[(i)] the capital allocated to the position $\lambda_i X_i$ is given by $\lambda_i \mathscr{C}_i^{\rho}(\bm{\lambda})$;
	\item[(ii)] the overall risk capital $r_\rho(\bm{\lambda})$ is fully allocated to the individual positions in the portfolio: \begin{equation}
	\sum_{i=1}^d \lambda_i \mathscr{C}_i^{\rho}(\bm{\lambda}) = r_\rho(\bm{\lambda}).
\label{eq:perUnityCapitalAllocation}
\end{equation}
\end{itemize}
\end{ass}

The interpretation of the first assumption is that we require proportional positions to have proportional capital shares and the second one ensures the total capital will be allocated to the individual positions. As a result of these assumptions, we only need to calculate $\mathscr{C}_i^{\rho}(\bm{\lambda})$, for $i=1,...,d$.

Following the nomenclature in \cite{mcneil2010quantitative}, we will say $\mathscr{C}_i^{\rho} \, : \, \Lambda \rightarrow  \R^d$  is a per-unit capital allocation principle if (\ref{eq:perUnityCapitalAllocation}) is satisfied for all $\bm{\lambda} \in \Lambda$.

So far we have not imposed or assumed any specific requirement on the risk measure used as a base for the allocation principle, but to introduce the popular Euler allocation principle we need to restrict ourselves to the class of positive homogeneous (of degree one) risk measures.

\begin{definition}[Positive homogeneity] A risk measure $\rho$ is said to be positive homogeneous (of degree one) if $\rho(\lambda X) = \lambda \rho(X),$ for any random variable $X$ and $\lambda > 0$.
\end{definition}

It is straightforward to verify that Value at Risk (VaR) is positive homogeneous, as well as any coherent risk measure (in the sense of \cite{artzner1999coherent}). For positive homogeneous risk measures it is also trivial to show that the associated risk measure function $r_\rho$ satisfies $r_\rho(t \bm{\lambda}) = t r_\rho(\bm{\lambda})$. Therefore, applying Euler's homogeneous function theorem (see Appendix \ref{sec:eulerThm}) on $r_\rho$, we have that
\begin{equation}
r_\rho(\bm{\lambda}) = \sum_{i=1}^d \lambda_i \frac{\partial r_\rho}{\partial \lambda_i}(\bm{\lambda})
\label{eq:eulerPositiveHomogeneous}
\end{equation}

The combination of (\ref{eq:perUnityCapitalAllocation}) and (\ref{eq:eulerPositiveHomogeneous}) leads to the so-called Euler allocation principle (sometimes referred as allocation by the gradient), where the capital allocated to the i-th component of the portfolio is given by the partial derivatives, with $\mathscr{C}_i^{\rho}(\bm{\lambda}) := \frac{\partial r_\rho}{\partial \lambda_i}(\bm{\lambda})$.

The Euler allocation principle arises in different contexts in the literature. For example, in \cite{denault2001coherent} and \cite{kalkbrener2005axiomatic} the Euler principle is motivated by two (different) sets of axioms, leading to coherent allocation principles (for a relationship between coherent risk measures and coherent capital allocations see \cite{buch2008coherent}). 

Assuming that $X_1,...,X_d$ are continuous random variables at the point at which the risk measure is evaluated, we now present some explicit forms of the Euler contributions, based on the different risk measures presented in \ref{def:riskMeasures}.

\begin{proposition} 
\label{prop:eulerAllocation} If $X = \sum_{i=1}^d X_i$ and $\textbf{X} = (X_1,...,X_d)$ has a joint continuous density, then the Euler allocation takes the following form
\begin{enumerate}
\item Standard deviation: $\displaystyle \rho(X) = \sqrt{Var(X)} \Longrightarrow \mathscr{C}^{\sigma}_i(X_i) = \frac{Cov(X_i, \, X)}{\sqrt{Var(X)}}$;
\item Value at Risk: $\rho(X) = VaR_\alpha(X) \Longrightarrow \mathscr{C}^{VaR}_i(X_i) = \Exp[X_i \, | \, X = VaR_\alpha(X) ]$;
\item Expected Shortfall: $\rho(X) = ES_\alpha(X) \Longrightarrow \mathscr{C}^{ES}_i(X_i) = \Exp[X_i \, | \, X \geq VaR_\alpha(X) ]$.
\end{enumerate}
\end{proposition}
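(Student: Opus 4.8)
The plan is to compute, for each of the three risk measures, the Euler contribution $\mathscr{C}^{\rho}_i(X_i) = \partial r_\rho/\partial \lambda_i$ evaluated at $\bm{\lambda} = (1,\dots,1)$, where $r_\rho(\bm{\lambda}) = \rho(X(\bm{\lambda}))$ and $X(\bm{\lambda}) = \sum_{j=1}^d \lambda_j X_j$, and to verify that each partial derivative matches the asserted covariance ratio or conditional expectation. For the standard deviation the argument is elementary: I would write $r_\sigma(\bm{\lambda})^2 = Var(X(\bm{\lambda})) = \sum_{j,k} \lambda_j \lambda_k Cov(X_j,X_k)$, differentiate in $\lambda_i$, and use bilinearity of the covariance to collapse $\sum_k \lambda_k Cov(X_i,X_k)$ into $Cov(X_i, X(\bm{\lambda}))$. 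Dividing by $2 r_\sigma(\bm{\lambda})$ and setting $\bm{\lambda}=(1,\dots,1)$ yields $Cov(X_i,X)/\sqrt{Var(X)}$.

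The Value-at-Risk case is the crux of the proposition and where the real work lies. Writing $q(\bm{\lambda}) := VaR_\alpha(X(\bm{\lambda}))$, the defining identity is $\Prob(X(\bm{\lambda}) \leq q(\bm{\lambda})) = \alpha$ for every $\bm{\lambda}$. I would apply the implicit function theorem to $G(\bm{\lambda},t) := \Prob(X(\bm{\lambda}) \leq t)$, obtaining $\partial q/\partial \lambda_i = -(\partial G/\partial \lambda_i)/(\partial G/\partial t)$, where the denominator is simply the density $f_{X(\bm{\lambda})}(q)$. The delicate term is $\partial G/\partial \lambda_i$. To handle it I would split $X(\bm{\lambda}) = \lambda_i X_i + Z_i$ with $Z_i := \sum_{j \neq i} \lambda_j X_j$ free of the parameter $\lambda_i$, express $G$ as a double integral of the joint density $f_{X_i,Z_i}$ over the region $\{\lambda_i x + z \leq t\}$, and differentiate the inner integration limit $t - \lambda_i x$. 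A change of variables with unit Jacobian identifies $f_{X_i,Z_i}(x, t-\lambda_i x)$ with the joint density $f_{X_i,X(\bm{\lambda})}(x,t)$, giving $\partial G/\partial \lambda_i = -f_{X(\bm{\lambda})}(t)\,\Exp[X_i \mid X(\bm{\lambda})=t]$. The two density factors then cancel and the sign flips, leaving $\partial q/\partial \lambda_i = \Exp[X_i \mid X(\bm{\lambda})=q]$, which at $\bm{\lambda}=(1,\dots,1)$ is the claimed formula.

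The main obstacle in this middle case is the rigorous justification of differentiating under the integral sign; this is precisely where the blanket hypothesis that $\X$ admits a joint continuous density is used (together with the implicit nondegeneracy requirement $f_{X(\bm{\lambda})}(q) > 0$, needed for the implicit function theorem to apply). I would license the interchange of differentiation and integration by a dominated-convergence argument, and I expect this to be the only step requiring genuine care; the change of variables, by contrast, is a bookkeeping matter once the densities are in place.

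For the Expected Shortfall I would avoid differentiating the indicator directly and instead exploit the integral representation $ES_\alpha(Y) = (1-\alpha)^{-1}\int_\alpha^1 VaR_u(Y)\,du$, valid for continuous $Y$. Differentiating under the $u$-integral and substituting the Value-at-Risk result from the previous step gives $\partial r_{ES}/\partial \lambda_i = (1-\alpha)^{-1}\int_\alpha^1 \Exp[X_i \mid X(\bm{\lambda})=q_u]\,du$, with $q_u := VaR_u(X(\bm{\lambda}))$. The final step is the change of variables $u = F_{X(\bm{\lambda})}(y)$, $du = f_{X(\bm{\lambda})}(y)\,dy$, which converts the integrand into $\int x\, f_{X_i,X(\bm{\lambda})}(x,y)\,dx$ integrated over $y \geq q_\alpha$; recognising this as $(1-\alpha)^{-1}\Exp[X_i \one\{X(\bm{\lambda}) \geq q_\alpha\}]$ and using $\Prob(X(\bm{\lambda}) \geq q_\alpha)=1-\alpha$ delivers $\Exp[X_i \mid X(\bm{\lambda}) \geq VaR_\alpha(X(\bm{\lambda}))]$, which at $\bm{\lambda}=(1,\dots,1)$ is exactly $\mathscr{C}^{ES}_i(X_i)$.
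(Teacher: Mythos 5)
Your proposal is correct, but it does not follow the paper, because the paper offers no proof at all: it disposes of Proposition \ref{prop:eulerAllocation} with a citation to \cite{mcneil2010quantitative}, Section 6.3, which in turn rests on \cite{tasche1999risk} and \cite{gourieroux2000sensitivity}. What you have written is, in substance, a self-contained reconstruction of that classical argument. Your standard-deviation computation is the routine bilinearity calculation; your VaR derivation --- the implicit function theorem applied to $G(\bm{\lambda},t)=\Prob(X(\bm{\lambda})\leq t)$, the split $X(\bm{\lambda})=\lambda_i X_i + Z_i$, the unit-Jacobian identification $f_{X_i,Z_i}(x,t-\lambda_i x)=f_{X_i,X(\bm{\lambda})}(x,t)$, and the cancellation of $f_{X(\bm{\lambda})}(q)$ --- is exactly the mechanism behind Tasche's result; and your ES step via the representation $ES_\alpha(Y)=(1-\alpha)^{-1}\int_\alpha^1 VaR_u(Y)\,du$ followed by the substitution $u=F_{X(\bm{\lambda})}(y)$ is a clean way to inherit the ES case from the VaR case, consistent with the paper's definition $ES_\alpha(X)=\Exp[X\,|\,X\geq VaR_\alpha(X)]$ for continuous $X$. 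What your route buys over the paper's is transparency about where the hypotheses bite: you correctly isolate the two points of genuine delicacy, namely differentiation under the integral sign and the nondegeneracy condition $f_{X(\bm{\lambda})}(q)>0$ required by the implicit function theorem --- the latter being precisely the kind of ``other technical condition'' the paper's remark alludes to for the non-continuous case. The only soft spot is that your dominated-convergence justification is gestured at rather than executed; a fully rigorous version needs integrability of $X_i$ and enough regularity of the joint density to dominate the difference quotients locally in $\lambda_i$ (conditions spelled out in \cite{tasche1999risk}), but as a proof sketch under the proposition's blanket hypothesis this is an acceptable and honestly flagged gap rather than an error.
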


\begin{proof} See \cite{mcneil2010quantitative}, Section 6.3 and references therein.
\end{proof}

\begin{remark}
Proposition \ref{prop:eulerAllocation} is still valid even if the distribution of $\textbf{X}$ is not continuous but other technical conditions should be satisfied (see \cite{tasche1999risk} and \cite{gourieroux2000sensitivity}).
\end{remark}

\subsection{\textbf{Euler allocation in a hierarchical structure}} \label{sec:hierarchical_allocation}
Here we briefly extend the concept of Euler allocations to a Bank structure divided, for example, in Business Units and Event types, as in Table \ref{tbl:businessLinesEventType}.

\begin{figure}[h]
\begin{center}
\includegraphics[width=0.6\textwidth]{./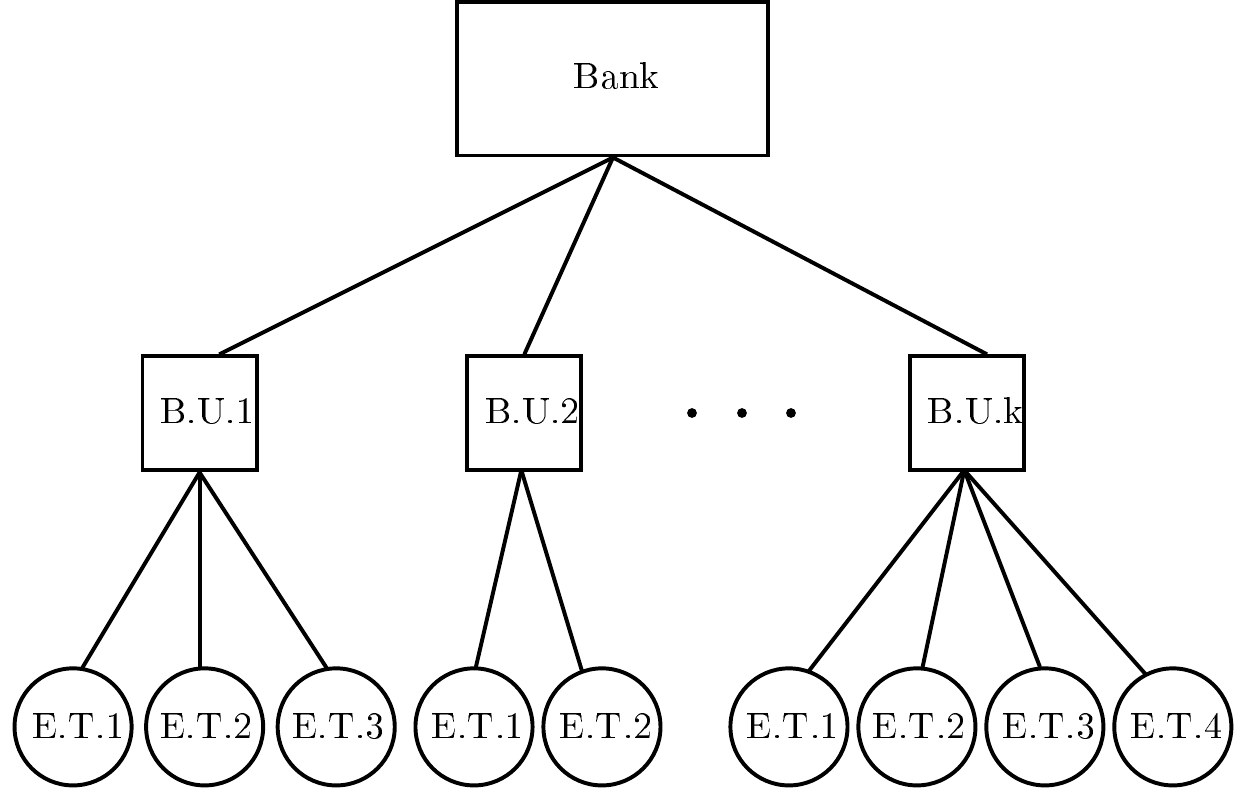}
\end{center}
\caption{Hierarchical bank structure, with $k$ B.U.'s.}
\label{fig:hierarchical_bank}
\end{figure}

Let us assume a bank has a structure given as in Figure \ref{fig:hierarchical_bank}, comprising of $K$ Business Units (B.U.'s) and $d_l$ Event Types (E.T's) in each of its B.U.'s ($l=1,...,K$). In this context we define $d=\sum_{l=1}^K d_l$ as the total number of cells for which capital should be allocated, $X = \sum_{i=1}^d X_i$ the bank loss and $X_{[l]} = \sum_{m=1}^{d_l} X_{m}$ for $l=1,...,K$ the loss in the $l$-th B.U..

Assuming the bank capital is given by $ES_\alpha(X)$ (as in Definition \ref{def:riskMeasures}) then the Euler principle states that the capital at the B.U. level should be given by $\Exp[X_{[l]} \, | \, X > VaR_\alpha(X)]$ for each B.U. $l=1,...,K$. To allocate the capital calculated for the $l$-th unit to its E.T.'s we can assume this capital is a (homogeneous) risk measure defined as
\begin{equation*}
\rho(X_{[l]}) = \Exp[X_{[l]} \, | \, X > VaR_\alpha(X)].
\label{eq:hierarchical_rho}
\end{equation*}
Then, following the Euler principle it is easy to check the allocation for the $m$-th E.T. in the $l$-th B.U. is given by
$$\Exp[X_{m} \, | \, X > VaR_\alpha(X)].$$
The reader should note that the same allocations could have been derived ``heuristically" in the following way. First the total capital $ES_\alpha(X)$ is allocated directly to each of the $d$ E.T.'s. Then for the $l$-th B.U. the capital is computed as the sum of the allocations at the $d_l$ E.T.'s in this unit. Although the result would be the same, we emphasize the first method, as it is a direct application of the Euler principle (twice).

\section{From capital allocation to conditional expectations} \label{sec:copSMC} 

In this section we explore the fact that, using the two most important risk measures for OpRisk, the capital allocation framework just described under the Euler allocation principle can be redefined as the calculation of conditional expectations. This is practically a very appealing result as it means that the allocation of the capital can be estimated using specialized Monte Carlo sampling solutions, presented in detail below.

Throughout it will be assumed that the marginal loss processes are continuous random variables. Therefore, one has that all the marginal inverse distribution functions (quantile functions) $F_i^{-1}$ are well defined and are also continuous. Moreover, one may apply Sklar's theorem to state that the dependence structure of the loss vector $\X$ will be uniquely determined by a copula function $C$ (see Appendix \ref{sec:gaussianCopula} for some results on copula theory). Formally the joint cdf and pdf of the vector $\X$ can be written as
$$F_\X(\x) = \Prob[X_1 \leq x_1, ..., X_d \leq x_d] = C(F_1(x_1), ..., F_d(x_d)) \text{ and}$$
$$f_\X(\x) = c\big( F_1(x_1),...,F_d(x_d)\big) \prod_{i=1}^d{f_i(x_i)},$$
where $C$ and $c$ are the copula and copula density, respectively.

From Proposition \ref{prop:eulerAllocation} we can see both the allocation based on VaR and on ES can be calculated as an expectation of the form
\begin{equation}
C^{\rho}_i(X_i) = \Exp[h(\X) \,| \, g(\X) \in A],
\label{eq:conditionalExpectation}
\end{equation}
for the following choices of $h$, $g$ and $A$:
\begin{enumerate}
	\item For VaR: $h(\X) = X_i, \ g(\X) = \sum_{i=1}^d X_i, \ A=\left[ VaR_\alpha(\sum_{i=1}^d X_i ), \, VaR_\alpha(\sum_{i=1}^d X_i ) \right]$;
	\item For ES: $h(\X) = X_i, \ g(\X) = \sum_{i=1}^d X_i, \ A=\left[ VaR_\alpha(\sum_{i=1}^d X_i ), \, + \infty \right)$.
\end{enumerate}

Note that in any practical scenario $\alpha$ will be chosen to be close to one and both conditioning events will have very small probabilities, ie, $\Prob[g(\X) \in A] \approx 0 $. More precisely, the event will have probability $1-\alpha$ for the ES case and probability zero (regardless of the choice of $\alpha$) in the VaR case, since the set $A$ is just a point. Therefore in practice, in the VaR case one would work instead with an $\epsilon$ approximation by setting the conditioning event as the $\epsilon$-ball of $A$ given by $\mathcal{B}_{\epsilon}(A)$ specified according to $\mathcal{B}_\epsilon(A) =\left[ VaR_\alpha(\sum_{i=1}^d X_i ) - \epsilon, \, VaR_\alpha(\sum_{i=1}^d X_i )+\epsilon \right]$ for some small positive $\epsilon$ in the neighbourhood of zero, $\epsilon \in n.e.(0^+)$. This type of approach was advocated in \cite{glasserman2005measuring} and \cite{del2013introduction}.

From a risk management perspective, many other choices of $h$ and $g$ can be of interest. For example, if one is interested in measuring the impact of marginal tail events in the portfolio, one could calculate expectations of the form $\Exp[ \sum_{i=1}^d X_i \, | \, X_k > VaR_\alpha(X_k) ]$, where the choices of $h$, $g$ and $A$ are trivial.

In practice, apart from very particular choices of dependence structure and marginal distributions (eg, \cite{asimit2013evaluating}), analytic representations for the expectation in (\ref{eq:conditionalExpectation}) will not be available and we will need to resort in some numerical approximation to evaluate such quantities in practice. There are several classes of algorithm of relevance to undertake this task so, before presenting our Sequential Monte Carlo (SMC) Samplers based solution, we review some of the most recent proposals in the literature.

\subsection{\textbf{Simulation methods to calculate conditional expectations of rare-events}} \label{sec:simForConditionalExp}
In order to fix the notation to be used throughout this work, in this section we describe a generic Monte Carlo based estimator, before detailing a few specialized approaches to be compared with our proposed method. It is important to recognize the difference between Monte Carlo, Importance Sampling and Sequential Monte Carlo Sampler based solutions. These are different categories of solution technique as will be detailed below.

The generic expectation $\Exp[h(\X) \, | \, g(\X) \in A]$ can be approximated via a Monte Carlo simulation through the use of a set of $N$ weighted samples $\{ \x^{(j)}, \, W^{(j)} \}_{j=1}^N$, with $\sum_{j=1}^N W^{(j)}=1$, from the conditional distribution $f_\X(\x \, | \, g(\x) \in A)$. The approximation for the expectation is, then, given by
$$\Exp[h(\X) \, | \, g(\X) \in A] \approx \sum_{j=1}^N W^{(j)} h(\x^{(j)}).$$

For example, if we can directly sample i.i.d. realizations from the distribution of $\X \, | \, g(\X) \in A$ (eg, using a rejection scheme) we would have $W^{(j)} = 1/N$. In general, though, samples from the above conditional distribution are not readily available and we need to use an importance sampling distribution (coupled with a rejection step), calculating the weights accordingly to remove bias. In both cases a rejection mechanism can be used, accepting only those particles satisfying the conditioning event. However, if done naively this would result in very large rejection rates that would behave poorly as the conditioning event becomes rarer or the dimension $d$ gets large.

In this setup, the conditioning event we are interested in can be defined as $\left[ \X \in \mathcal{G}_\X \right],$ where
\begin{equation} \label{eq:G_Z}
	\mathcal{G}_\X : = \{ \x \in \R^d \, : \, g(\x) \in A \},
\end{equation}
since $g(\X) \in A \Longleftrightarrow \X \in \mathcal{G}_\X$. Given that our multivariate loss model is uniquely characterized by a copula (either explicitly or implicitly, through a parametric joint distribution) the region $\mathcal{G}_\X$ in $\R^d$ holds a close relationship with some region in $[0,1]^d$. Formally, if we define
\begin{equation} \label{eq:G_U}
	\mathcal{G}_{\textbf{U}} : = \left\{ \textbf{u} \in [0,1]^d \, : \, \big(F_1^{-1}(u_1),..., F_d^{-1}(u_d) \big) \in \mathcal{G}_\X \right\}
\end{equation}
then it holds that
$$\x \in \mathcal{G}_{\X} \Longleftrightarrow \textbf{u} \in \mathcal{G}_{\textbf{U}}.$$

Therefore, similarly to the simulation of an unconditional multivariate distribution, to sample $\x = (x_1,...,x_d)$ from the distribution of $\Big(\X \, | \, g(\X) \in A\Big)$ we can
\begin{enumerate}
	\item Produce a weighted sample $\{\textbf{u}^{(j)}, \, W^{(j)} \}_{j=1}^N$ from $C$ such that $\textbf{u}^{(j)} = (u_1^{(j)},...,u_d^{(j)}) \in \mathcal{G}_{\textbf{U}}$ for all $j=1,...,N$;
	\item Return the weighted sample $\{\x^{(j)}, \, W^{(j)} \}_{j=1}^N$ where $x_i^{(j)} = F_i^{-1}(u_i^{(j)})$, for $i=1,...,d$, $j=1,...,N$. 
\end{enumerate}

Note that one can calculate conditional expectations with respect to $\X$ as follows 
\begin{align} \label{eq:conditionalExpectation_u}
\Exp[h(\X) \, | \, g(\X) \in A]  &= \Exp\Big[h\big(F_1^{(-1)}(u_1),...,F_d^{(-d)}(u_d) \, \big)| \, \textbf{u} \in  \mathcal{G}_{\textbf{U}}  \Big]  \\
                                 &\approx \sum_{j=1}^N W^{(j)} h\big(F_1^{(-1)}(u_1^{(j)}),...,F_d^{(-d)}(u_d^{(j)}) \big).
\end{align}

Clearly, if all the marginal quantile functions $F_i^{-1}$ are known, then the difficulty of the proposed approach is to sample from the constrained copula. The idea of performing the sampling procedure in the constrained copula space has been independently developed by Arbenz, Cambou and Hofert in \cite{arbenz2014importance} (see an overview of their algorithm in section \ref{sec:ACH}), where an importance sampling distribution is designed to target the distribution of $\textbf{u} \, | \, \textbf{u} \in \mathcal{G}_{\textbf{U}}$.

In this paper, instead of targeting the rare region $\textbf{u} \in \mathcal{G}_{\textbf{U}}$ we propose to sequentially target less rare regions, in a specially designed SMC Sampler procedure that is made precise in Sections \ref{sec:intermediateSets} and \ref{sec:SMC}.

Before presenting these specialized algorithms, it is informative to briefly comment on alternative Importance Sampling (IS) based approaches in the actuarial literature. At this stage we observe that there are many different types of rare-event simulation algorithms available, and the choice of a particular type will depend principally on how one defines the notion of a ``rare-event'' in the sample space. Although the following brief discussions on alternative IS based solutions are not directly targeting the same type of multivariate rare-event problems as faced in the case of capital allocation, they are informative to discuss especially with regard to the concept of relative error.

We also note that there are classes of asymptotic approximation results available for approximation of capital allocations. For instance, in order to estimate expectations of the form (\ref{eq:conditionalExpectation}) in a bivariate set-up, \cite{cai2014estimation} assume that large values of $g(\X)$ correspond to high values of $h(\X)$ (in their case $h(\X) = X_i$). Under these constraints, the authors use results from Extreme Value Theory (EVT) to derive an estimator of (\ref{eq:conditionalExpectation}) and study some of its properties. We refer the interested reader to references in \cite{cai2014estimation} for further background on such asymptotic approximations and instead we continue to focus upon sampling based solutions.

\subsection{\textbf{Related Monte Carlo and Importance Sampling approaches}}


In the particular case of Gaussian copulas \cite{glasserman2005measuring} presents an IS scheme to approximate conditional expectations. For the same family of models, \cite{siller2013measuring} more recently proposed a method based on Fourier transforms to compute marginal risk contributions. In \cite{mcleish2010bounded} the author develops a class of IS based estimators that satisfy a condition of bounded relative error. In particular they estimate tail events in a univariate framework via IS based distributions constructed from exponential families which will guarantee bounded relative error in estimated tail functionals. The class of methods they develop revolves around exponential tilting of the tails of the target distribution, also known in the actuarial literature as the Esscher transform or tempering. In addition, it is argued in \cite{mcleish2010bounded} that for tail events, the variance or standard error is a less desirable quantity to consider in assessing the performance of algorithm, compared to a version scaled by the mean such as the relative error. Therefore, \cite{mcleish2010bounded} proposed consideration of the relative error which is simply the ratio of the estimators standard error deviation to its mean. In Section \ref{sec:variance} we present some discussion of this concept in the context of the SMC algorithms proposed.

The approach proposed in \cite{mcleish2010bounded} selects the IS distribution to minimize the relative error of the rare-event probability by rewriting the problem as the solution to the minimization of the Renyi generalized divergence. In several simple univariate examples one can prove the relative error can be bounded if selected in this manner. This is interesting as it is contrary to other IS based approaches which seek to minimize the importance sampling weights variance. The closest class of IS based solution to our proposed SMC Sampler solution has been developed recently by \cite{arbenz2014importance}. We present this briefly before detailing our approach.

\subsection{\textbf{Arbenz-Cambou-Hofert algorithm}} \label{sec:ACH}
In common with the proposed sampling method in this paper, the approach recently developed in \cite{arbenz2014importance} involves sampling from a target distribution given by a constrained copula. This is particularly relevant as it leads to convenient bounded state spaces for sampling, since the support of the copula when unconstrained is $[0,1]^d$ and consequently the constrained copula will be on a sub-space of this hypercube.

Unlike our proposed solution, the approach of \cite{arbenz2014importance} involves developing an Importance Sampling (IS) scheme targeting the constrained copula distribution. However, in contrast to our approach their method does not involve any intermediate sequence of constrained regions leading smoothly up to the rare-event constraint. Instead, they try to directly approximate the optimal importance sampling proposal. This is in general a very challenging task and they have some interesting insight. For this reason we briefly present their methodology below as it will form a direct comparison with the approach we develop.

In \cite{arbenz2014importance} the aim of their work is to generate a sample from the unconditional copula with most of the particles satisfying a condition such as (\ref{eq:G_U}). In order to generate these samples, an importance sampling distribution $F_\textbf{V}$ is designed as a mixture of conditional copulas. More formally, the IS distribution is defined as
\begin{equation}
F_\textbf{V}(\textbf{u}) = \int_0^1 C^{[\lambda]}(\textbf{u}) dF_\Lambda(\lambda),
\label{eq:f_v_arbenz}
\end{equation}
where $C^{[\lambda]}$ is the distribution of $\textbf{U}$ conditional on the event that at least one of its components exceeds $\lambda$, ie,
$$C^{[\lambda]}(\textbf{u}) = \Prob[U_1\leq u_1,...,U_d\leq u_d \, | \, \max\{U_1,...,U_d \} > \lambda].$$

In the main algorithm presented in \cite{arbenz2014importance}, samples from the importance distribution are generated by rejection, but a ``conditional sampling algorithm'' is also presented. Overall, an appealing aspect of their proposed method is that it does not make use of the copula density explicitly. This can be advantageous in settings in which the copula density is computationally expensive to be calculated or even unknown. However, as with all Monte Carlo methods, there are also drawbacks to the proposed approach that we will argue can be overcome through development not of an IS solution but instead via a Sequential Monte Carlo Sampler (SMC Sampler) solution in the constrained copula space.

Under simplifying assumptions on the joint behaviour of $\textbf{U}$, an optimal distribution for $F_\Lambda$ is presented, but in general the only restriction on the choice of the mixing distribution $F_\Lambda$ is that $\Prob[\Lambda = 0] > 0$. To sample from $\X \, | \, \X > B$ one of the algorithms proposed is given as follows.

{\small
\begin{algorithm}[H]
 \SetAlgoLined
 \textbf{Inputs:} N: desired sample size for $F_\textbf{V}$; $F_\Lambda$: mixing distribution (as in (\ref{eq:f_v_arbenz})) \;
 	\For{$j=1,...N$}{
		Sample $\Lambda^{(j)} \sim F_\Lambda$ \;
		Sample $\textbf{u}^{(j)} \sim C$ until $\max\{u_1^{(j)},...,u_d^{(j)} \} > \Lambda^{(j)}$ \;
		Define $x_i^{(j)}:= F_i^{-1} \left( u_i^{(j)} \right)$ for $i=1,...,d$ \;
		Compute the importance weight $w^{(j)}: = w(\textbf{u}^{(j)})$ as in \cite{arbenz2014importance}, Section 5 \;
		Compute the normalized importance weight $W^{(j)} = \frac{w^{(j)}}{\sum_{j=1}^N w^{(j)}}$ \;
		Define $\{ \widetilde{\textbf{u}}^{(j_k)} \}_{k=1}^{\widetilde{N}}$ as the sub-set of $\{ \textbf{u}^{(j)} \}_{j=1}^N$ such that  $\sum_{i=1}^d \widetilde{x}_i^{(j_k)} > B$\;
	}
	\KwResult{Weighted random samples (of random sample size $\widetilde{N}$): $\big\{ \textbf{u}^{(j)}, \ W^{(j)} \big\}_{j=1}^{\widetilde{N}}$;}
\caption{IS-ACH algorithm from \cite{arbenz2014importance}.}
\label{algo:IS_arbenz}
\end{algorithm}
}

Some points need to be stressed about Algorithm \ref{algo:IS_arbenz}. First, let $\Exp[N_\textbf{V}]$ denote the expected number of draws from $C$ in order to have a sample satisfying $\max\{u_1,...,u_d \} > \Lambda$. It can be easily shown that (see \cite{arbenz2014importance}, Lemma 4.2)
$$\Exp[N_\textbf{V}] = \int_0^1  \frac{1}{1- C(\lambda \textbf{1})} dF_\Lambda(\lambda),$$
where $\textbf{1} = (1,...,1) \in \R^d$. Therefore, to generate a sample of fixed size $N$ from $F_\textbf{V}$ it is necessary to sample (on average) $N \times \Exp[N_\textbf{V}]$ times from $C$.

Another important aspect of this algorithm pertains to an understanding of the number of ``particles'' (samples) which are obtained with non-zero weight. Since $p_0 := \Prob[\Lambda=0]$ is necessarily positive, we can ensure some of the $N$ samples from $F_\textbf{V}$ will be actually from the unconditional copula $C$, meaning that $p_0 \times 0.99 \times 100\%$ of the particles are expected not to satisfy the condition $\sum_{i=1}^d X_i > VaR_{0.99}(\sum_{i=1}^d X_i)$. For $\lambda > 0$ the same behaviour is expected, leading, in practice, to $\widetilde{N}$ (as defined in the last step of Algorithm \ref{algo:IS_arbenz}) being smaller than $N$, and in cases of relevance to capital allocation, this difference can be significant, with $\widetilde{N} << N$. In capital allocation problems such cases can prove to be a serious problem in terms of computational cost and efficiency for this IS based approach as will be discussed in Section \ref{sec:Examples}.


Next we will present a completely different class of methods to the IS based solutions discussed. These will be based on a class of algorithms that extends IS solutions to sequential settings, known in statistics literature as Sequential Monte Carlo Samplers (SMC Samplers). To understand SMC Samplers we first recall the SMC algorithm before showing its generalization to the class of SMC Sampler algorithms.

\section{Reaching rare-events through sequences of intermediate sets} \label{sec:intermediateSets}

The idea of using intermediate sets to approximate the conditional density $f_{\X | g(\X) \in A}(\x)$ is to start sampling from the unconditional distribution $f_\X(\x)$ and move the weighted particles towards the rare conditioning set through ``not so rare'' sets.

Using the notation from the previous section, for a fixed function $g$ and set $A$, let $\{A_t\}_{t=1}^T$ be a sequence of nested sets shrinking to $A$, ie, $A_t \subset A_{t-1}$ and $A_t \downarrow A$, when $t \rightarrow T$. This sequence of sets defines a sequence of regions (as before):
\begin{align*}
	\mathcal{G}_{\X_t} &: = \{ \x_t \in \R^d \, : \, g(\x_t) \in A_t \}, \\
	\mathcal{G}_{\textbf{U}_t} &: = \left\{ \textbf{u}_t \in [0,1]^d \, : \, \big(F_1^{-1}(u_{t,1}),..., F_d^{-1}(u_{t,d}) \big) \in \mathcal{G}_{\X_t} \right\}.
\end{align*}

Although it is true that
$$\x_t \in \mathcal{G}_{\X_t} \Longleftrightarrow \textbf{u}_t \in \mathcal{G}_{\textbf{U}_t},$$
with $\textbf{u}_t = (u_{t,1},...,u_{t,d}):= \big(F_1(x_{t,1}),...,F_d(x_{t,d}) \big)$ we will see in the sequel that working in the bounded space $[0,1]^d$ we will have some advantages in the design of the algorithm. In this set-up our goal will be ultimately to have (weighted) samples $\left\{\textbf{u}^{(j)}_T, \, W_T^{(j)} \right\}_{j=1}^N$ from the conditional copula $c(\textbf{u}_T \, | \textbf{u}_T \in \mathcal{G}_{\textbf{U}_T})$ which would be then transformed through the marginal inverse cdf's in order to get a weighted sample from $f_\X(\x \, | \, \x \in \mathcal{G}_{\X_T})$. 

Following the notation to be used in Section \ref{sec:SMCsamplers} we define our target distribution at each time step (level) $t=1,...,T$ as
\begin{equation}
\label{eq:target}
\pi_t(\textbf{u}_t):= \frac{c(\textbf{u}_t)\one_{\left\{\textbf{u}_t \in \mathcal{G}_{\textbf{U}_t} \right\}}  (\textbf{u}_t) }{\Prob[\textbf{U}_t \in \mathcal{G}_{\textbf{U}_t}]}.
\end{equation}


\subsection{\textbf{Copula Constrained Geometry}} \label{sec:copSMCn}
Before we formalize the algorithm to sample from the constrained copula we will study some properties of the restricted region in the copula space, defined in (\ref{eq:G_Z}) and (\ref{eq:G_U}), for the particular case where $g(\X) = \sum_{i=1}^d X_i$ and $A =[B, +\infty)$. The idea is that the knowledge of the restricted region can help us to design more efficient sampling schemes. Similar analyses can be performed for different restrictions.

In $\R^d$ our interest is to study points such that $\sum_{i=1}^d x_i = B$ which turn out to be equivalent to points in $[0,1]^d$ such that $\sum_{i=1}^d F_i^{-1}(u_i) = B$. It is easy to see that each of these curves (in $\R^d$ or in $[0,1]^d$) will lie in a $d-1$ dimensional space. Formally, these curves are defined through the following mappings
\begin{equation*} 
	\widetilde{\mathcal{G}}_\X : = \left\{ (x_1,...,x_{d-1}) \in \R^{d-1} \, : \, (x_1,...,x_{d-1}, B- \sum_{i=1}^{d-1}x_i)  \right\},
\end{equation*}

\begin{equation} \label{eq:Gtilde_U}
	\widetilde{\mathcal{G}}_\textbf{U} : = \left\{ \textbf{u}_{-d} := (u_1,...,u_{d-1}) \in [0,1]^{d-1} \, : \, \big(u_1,...,u_{d-1}, r(\textbf{u}_{-d}) \big)  \right\},
\end{equation}
where $r(\textbf{u}_{-d}): = F_d\Big( B - \sum_{i=1}^{d-1} F_i^{-1} (u_i) \Big)$.

First of all, note that if $g$ is a generic continuous function  and all the marginal cdfs $F_1,...,F_d$ are continuous then the curve $\widetilde{\mathcal{G}}_\textbf{U}$ (defined similarly to (\ref{eq:Gtilde_U}) ) will be continuous. Moreover, the region $\mathcal{G}_\textbf{U}$ in (\ref{eq:G_U}) will not be the union of disjoint set, but only one continuous region. Some other properties of these regions may be derived in particular cases. For example, we know that in the linear case ($g(\X) = \sum_{i=1}^d X_i$) the curve in $[0,1]^d$ will pass through the points $\big(F_1(B), 0, ..., 0\big), \, \big(0, F_2(B), 0,...,0\big), ..., \, \big(0,...,0,F_d(B)\big)$.

Another interesting information about the curve $\widetilde{\mathcal{G}}_\textbf{U}$ is given by its curvature, as seen in the next Proposition.

\begin{proposition} \label{prop:generalConvexity} The curve $\widetilde{\mathcal{G}}_{\textbf{U}}$ defined in $(\ref{eq:Gtilde_U})$ is convex at $(u_1,..,u_{d-1}, r(\textbf{u}_{-d}))$ if
	$$\langle \textbf{u}_{-d}, \, \nabla^2 r(\textbf{u}_{-d}) \textbf{u}_{-d} \rangle >  0, \ \ \forall \textbf{u}_{-d} \in \R^{d-1},$$
where $\langle \textbf{x}, \textbf{y} \rangle$ is the inner product of $\textbf{x}$ and $\textbf{y}$ and $\nabla^2 f$ is the Hessian matrix of $f$.
\end{proposition}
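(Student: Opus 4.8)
The plan is to read ``convexity of the curve $\widetilde{\mathcal{G}}_\textbf{U}$'' as convexity of the region it bounds, and then reduce everything to the second-order (Hessian) test for a convex function. Concretely, for the linear tail constraint $g(\x)=\sum_{i=1}^d x_i$ and $A=[B,+\infty)$, the region $\mathcal{G}_\textbf{U}$ of (\ref{eq:G_U}) is $\{\textbf{u}\in[0,1]^d : \sum_{i=1}^d F_i^{-1}(u_i)\geq B\}$. Since $F_d^{-1}$ is increasing, solving the constraint for the last coordinate gives $u_d\geq F_d\big(B-\sum_{i=1}^{d-1}F_i^{-1}(u_i)\big)=r(\textbf{u}_{-d})$, with $r$ as in (\ref{eq:Gtilde_U}). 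Hence $\mathcal{G}_\textbf{U}$ is exactly the epigraph $\{(\textbf{u}_{-d},u_d): u_d\geq r(\textbf{u}_{-d})\}$ of the scalar field $r$, and the hypersurface $\widetilde{\mathcal{G}}_\textbf{U}$ (its lower boundary) is convex precisely when $r$ is a convex function of $\textbf{u}_{-d}$.

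First I would record the regularity needed to differentiate twice. Under the standing assumption that each marginal is continuous with a positive, continuously differentiable density $f_i$, the inverse function theorem gives $(F_i^{-1})'(u_i)=1/f_i\big(F_i^{-1}(u_i)\big)>0$, so each $F_i^{-1}$ is $C^2$; composing with $F_d$ (also $C^2$ by the same argument) through $r(\textbf{u}_{-d})=F_d\big(B-\sum_{i=1}^{d-1}F_i^{-1}(u_i)\big)$ shows $r\in C^2$ on the interior of its domain, so that $\nabla^2 r$ exists and is continuous.

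With $r\in C^2$ I would invoke the standard second-order characterization of convexity: a twice continuously differentiable function is locally convex at a point exactly when its Hessian is positive semidefinite there, i.e. $\langle \textbf{v},\nabla^2 r(\textbf{u}_{-d})\textbf{v}\rangle\geq 0$ for every direction $\textbf{v}\in\R^{d-1}$, with strict positivity yielding strict convexity (positive curvature) at that point. Combined with the epigraph identification of the first paragraph, this is exactly the stated sufficient condition, where the vector appearing in the quadratic form is understood to range over all test directions in $\R^{d-1}$.

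The hard part is conceptual rather than computational: one must pin down what ``the curve is convex at a point'' means for a graph hypersurface and identify it with local convexity of $r$ via the epigraph picture, after which the second-order test does the rest. I would also flag the notation in the statement: the vector in $\langle\textbf{u}_{-d},\nabla^2 r(\textbf{u}_{-d})\textbf{u}_{-d}\rangle$ should be read as an arbitrary test direction (this is the role of the quantifier $\forall\,\cdot\in\R^{d-1}$), i.e. as positive definiteness of $\nabla^2 r$, rather than as the position vector itself; with that reading the argument is complete. If an explicit marginal-level criterion is wanted, the one remaining routine step is to expand $\nabla^2 r$ by the chain rule, obtaining a rank-one term proportional to $f_d'$ plus a diagonal term built from the $(F_i^{-1})''$, and to read off positivity from the signs of $f_d'$ and of the $(F_i^{-1})''$ at the point in question.
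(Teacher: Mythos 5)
Your proof is correct, and it supplies an argument the paper itself omits: Proposition \ref{prop:generalConvexity} is stated with no proof at all, the text passing directly to the explicit computation of the entries of $\nabla^2 r$. Your route --- using monotonicity of $F_d^{-1}$ to rewrite the constraint in (\ref{eq:G_U}) as $u_d \geq r(\textbf{u}_{-d})$, so that $\mathcal{G}_{\textbf{U}}$ is the epigraph of $r$ and convexity of the graph hypersurface $\widetilde{\mathcal{G}}_{\textbf{U}}$ at a point is local convexity of $r$, then applying the second-order sufficiency test (positive definiteness of $\nabla^2 r$ at the point, which by continuity of the Hessian persists in a neighbourhood) --- is exactly the standard reasoning the proposition implicitly invokes. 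Your regularity assumptions (positive, continuously differentiable densities) are also precisely what the paper's subsequent Hessian formulas require, since they involve $f_d'$ and divide by $f_j(x_j)f_k(x_k)$ and $\left[f_j(x_j)\right]^3$; and your reading of the displayed condition as positive definiteness, with the quantified $\textbf{u}_{-d}$ an arbitrary test direction rather than the position vector, is clearly the intended interpretation of the paper's overloaded notation. Your closing chain-rule remark (rank-one $f_d'$ term plus a diagonal term built from the $(F_i^{-1})''$, noting $(F_j^{-1})'' = -f_j'/f_j^3$ composed with $F_j^{-1}$) reproduces the paper's displayed second derivatives exactly.

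One small imprecision, inessential to the result: your parenthetical claim that a $C^2$ function is ``locally convex at a point exactly when its Hessian is positive semidefinite there'' is not a biconditional --- positive semidefiniteness at a single point does not imply local convexity (consider $x \mapsto x^3$ at the origin, where the second derivative vanishes). Only the sufficient direction, strict definiteness at the point together with continuity of $\nabla^2 r$, is needed for the stated implication, and that is the direction you actually use, so the proof stands.
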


In the particular case where $r(\textbf{u}_{-d}): = F_d\Big( B - \sum_{i=1}^{d-1} F_i^{-1} (u_i) \Big)$ the general terms of the Hessian matrix are given by
\begin{align*}
&\frac{\partial^2 r}{ \partial u_j\partial u_k} (\textbf{u}_{-d}) =  \frac{f_d'\left(B- \sum_{i=1}^{d-1} x_i \right) }{f_j(x_j) f_k(x_k)}, \  \ \forall j \neq k, \ \ j,k=1,...,d-1 \\
&\frac{\partial^2 r}{ \partial u_j^2} (\textbf{u}_{-d}) = \frac{ f_d'\left(B- \sum_{i=1}^{d-1} x_i \right)  f_j(x_j)  + f_d\left(B- \sum_{i=1}^{d-1} x_i \right)  f_j'(x_j)}{  \left[ f_j(x_j)  \right]^3  }, \\ 
& \quad \forall j=k, \ \ j=1,...,d-1
\end{align*}
where, once again, we use the notation $x_i = F_i^{-1}(u_i)$ to make the above formulas more appealing.

From Proposition \ref{prop:generalConvexity}, in the very particular case where $d=2$ and $X_1,X_2 \geq 0$ (representing losses, for example) the concavity of $\widetilde{\mathcal{G}}_{\textbf{U}}$ is determined only by the sign of 
$$ f_2'\left(B- x_1 \right)  f_1(x_1)  + f_2\left(B- x_1 \right)  f_1'(x_1).$$
This is due to the fact that the denominator is the power of a density function (non-negative) and that $x_1$ is non-negative.

On Figure \ref{fig:constraitRegions} we can see that for different constraint levels the curve in $[0,1]^2$ presents different shapes, continuously varying from a convex to a concave region.

\begin{figure}%
\includegraphics[width=1\textwidth]{./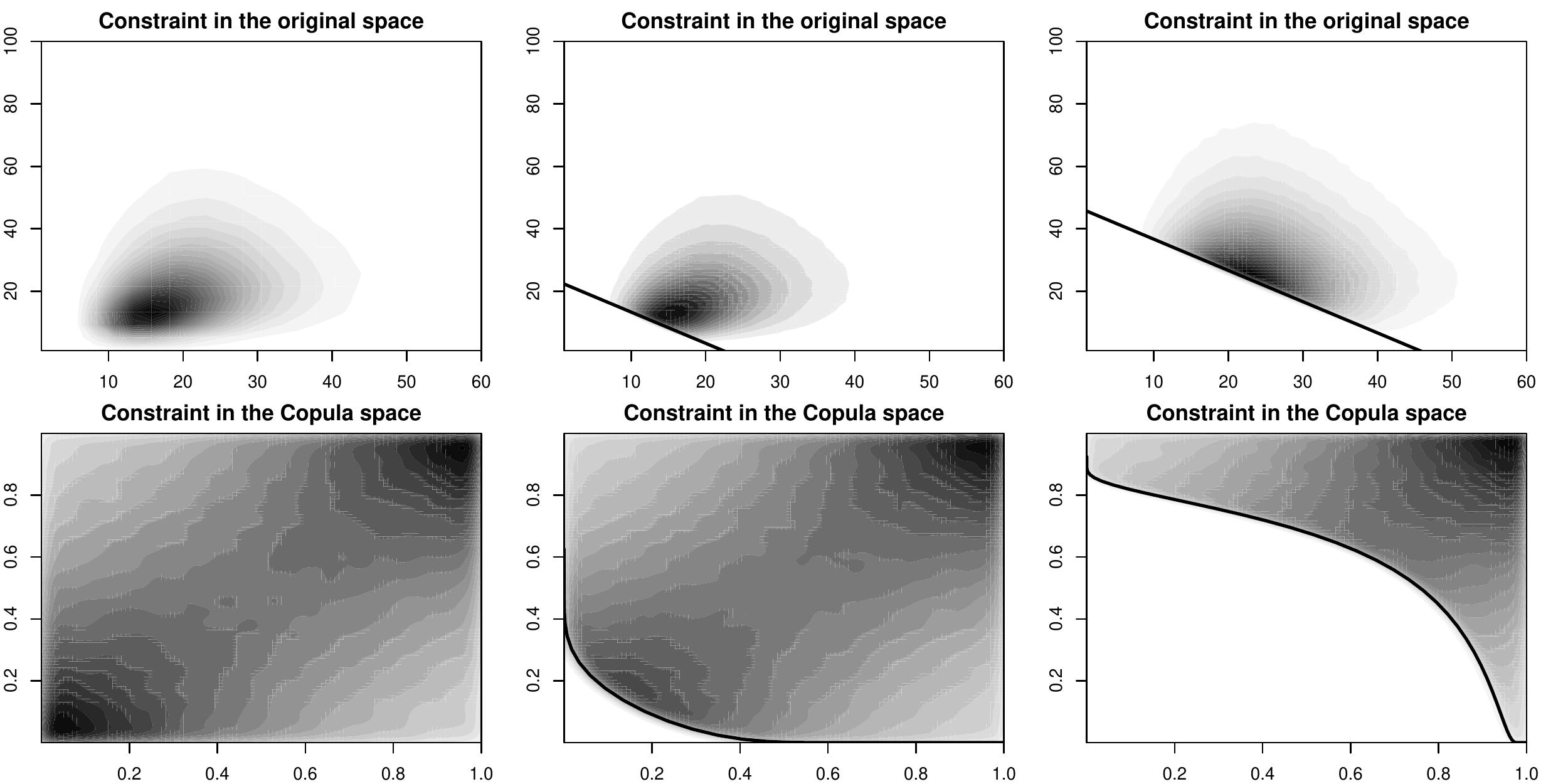}
\caption{\small Frank copula with parameter 2, Log-Normal marginal cdf's, both with same $\mu=3$, and $\sigma_1 = 0.4$, $\sigma_2 = 0.6$. (Top row) Constraint in the original space, for $B=0, 23, 46$ (Bottom row) Constraint in the Copula space, $[0,1]^2$, for equivalent levels.}
\label{fig:constraitRegions}
\end{figure}

\section{Sequential Monte Carlo Methods (SMC) \label{SMC}}
\label{sec:SMC}
In this section we will introduce the general class of algorithms known as Sequential Monte Carlo (SMC) and an important variant for rare-event simulation, the SMC Samplers classes of methods. This family of Monte Carlo algorithms has been developed to approximate sequences of integrals constructed from a sequence of probability density functions. Of course, adjustments are possible when the interest lies only in one distribution, such as the terminal distribution in a sequence of intermediate increasingly rare events, such as was shown in the ideas presented in Section \ref{sec:intermediateSets}).

SMC methods have  emerged out of the fields of engineering, probability and statistics in recent years. Variants of the methods sometimes appear under the names of particle filtering or interacting  particle systems e.g. \cite{ristic2004beyond}, \cite{doucet2001sequential}, \cite{del2004feynman}, and their theoretical properties have been extensively studied  in \cite{crisan2002survey}, \cite{del2004feynman}, \cite{Chopin2004}, \cite{kunsch2005recursive}. For a recent survey in the topic, with focus on economics, finance and insurance applications the reader is referred to \cite{creal2012survey} and \cite{del2013introduction}. For an application to rare events in a financial context, see \cite{carmona2009interacting}.

The general context of a standard SMC method is that one wants to approximate a (often naturally occurring) sequence of probability density functions (pdf's) $\big\{ \widetilde{\pi}_t \big\}_{t \geq 1}$ such that the support of every function in this sequence is defined as $supp\big( \widetilde{\pi}_t \big) = E_t$ and the dimension of $E_t$ forms an increasing sequence, i.e., $dim \big(E_{t-1} \big) < dim \big( E_t \big)$. For example, the reader can think of $E_1 = \R^d, ..., E_t = \R^{d\times t}$, which will be precisely the sequence to be used throughout this work.

We may also assume that $\widetilde{\pi}_t$ is only known up to a normalizing constant, 
$$\widetilde{\pi}_t(\x_{1:t}) = Z_t^{-1} \widetilde{f}_t(\x_{1:t}),$$
where $\x_{1:t} := (\x_1,...,\x_t) \in E_t = \R^{d\times t}$. As in section \ref{sec:simForConditionalExp}, the approximation for $\widetilde{\pi}_t$ is given by a weighted sum of random samples (also known as ``particles'').

Procedurally, we initialize the algorithm sampling a set of $N$ particles from the distribution $\widetilde{\pi}_1$ and set the normalized weights to $W_1^{(j)}= 1/N$, for all $j=1,...,N$. If it is not possible to sample directly from $\widetilde{\pi}_1$, one should sample from an importance distribution $\widetilde{q}_1$ and calculate its weights accordingly (see Algorithm \ref{algo:SMC}). Then the particles are sequentially propagated thorough each distribution $\widetilde{\pi}_t$ in the sequence via three main processes: mutation, correction (incremental importance weighting) and resampling. In the first step (mutation) we propagate particles from time $t-1$ to time $t$ and in the second one (correction) we calculate the new importance weights of the particles.

This method can be seen as a sequence of IS steps, where the target distribution at each step $t$ is $\widetilde{f}_t(\x_{1:t})$ (the unnormalized version of $\widetilde{\pi}_t$) and the importance distribution is given by 
\begin{equation}
\widetilde{q}_t(\x_{1:t}) =\widetilde{q}_1(\x_1)\prod_{j=2}^t K_j(\x_{j-1},\x_j),
\label{eq:qTilde}
\end{equation}
where $K_j(\x_{j-1}, \, \cdot \, )$ is the mechanism used to propagate particles from time $t-1$ to $t$, known as the mutation stage. The algorithm works in the following way:
\vspace{0.2cm}

{\small
\begin{algorithm}[H]
 \SetAlgoLined
 \textbf{Inputs:} IS density $\widetilde{q}_1$, (forward) mutation kernels $\big\{K_t(\x_{t-1}, \x_t) \big\}_{t=1}^T$\;
 	\For{$j=1,...,N$}{
		Sample $\x_1^{(j)}$ from $\X_1 \sim \widetilde{q}_1(\, \cdot \,)$ \emph{(Mutation step)}\;
		Calculate the weights $w_1^{(j)} = \frac{\widetilde{f}_1(\x_1^{(j)})}{\widetilde{q}_1(\x_1^{(j)})}$\;
	}
	Calculate the normalized weights $W_1^{(j)} = \frac{w_1^{(j)}}{\sum_{j=1}^N w_1^{(j)}}$ \emph{(Correction step)}\;
	\For{$t=2,\hdots,T$}{
		\For{$j=1,...,N$}{
		Sample $\x_t^{(j)}$ from $\X_t \big| \X_{t-1} = \x_{t-1}^{(j)} \sim K_t(\x_{t-1}^{(j)}, \ \cdot \ )$ \emph{(Mutation step)}\;
		Create the vector $\x_{1:t}^{(j)}:= (\x_{1:(t-1)}^{(j)}, \, \x_t^{(j)})$ \;
		Calculate the weights $\displaystyle w_t^{(j)} = \frac{\widetilde{f}_t(\x_{1:t}^{(j)})}{\widetilde{q}_t(\x_{1:t}^{(j)})} = w_{t-1}^{(j)} \underbrace{  \frac{\widetilde{f}_t(\x_{1:t}^{(j)})}{\widetilde{f}_{t-1}(\x_{1:t-1}^{(j)})K_t(\x_{t-1}^{(j)},\x_t^{(j)})}  }_{\text{incremental weight: } \widetilde{\alpha}(x_{1:t}^{(j)})}$ \;
	 }
	Calculate the normalized weights $W_t^{(j)} = \frac{w_t^{(j)}}{\sum_{j=1}^N w_t^{(j)}}$ \emph{(Correction step)}.
	}
	\KwResult{Weighted random samples $\big\{ \x_{1:t}^{(j)}, \ W_t^{(j)} \big\}_{j=1}^N$ approximating $\widetilde{\pi}_t$, for all $t=1,...,T$;}
\label{algo:SMC}
\caption{Standard SMC algorithm.}
\end{algorithm}
}

If $\big\{ \x_{1:t}^{(j)}, \ W_t^{(j)} \big\}_{j=1}^N$ is a set of weighted particles returned by the SMC algorithm then  
\begin{equation}
\sum_{j=1}^N W_t^{(j)} \varphi(\x_{1:t}^{(j)}) \longrightarrow \Exp_{\widetilde{\pi}_t}[\varphi(\X_{1:t}) ] := \int_{E^t} \varphi(\x_{1:t}) \widetilde{\pi}_t(\x_{1:t}) d\x_{1:t},
\label{eq:SMC_approximation}
\end{equation}
$\widetilde{\pi}_t$--almost surely as $N \rightarrow +\infty$, for any test function $\varphi$ such that the expectation of $\varphi$ under $\widetilde{\pi}_t$ exists.

\begin{remark} The reader should note that the knowledge of $\widetilde{\pi}_t$ up to a normalizing constant is sufficient for the implementation of a generic SMC algorithm, since the normalized version of the weights would be the same for both $\widetilde{\pi}_t$ and $\widetilde{f}_t$.
\end{remark}

The optimal selection of the mutation kernel (SMC importance distribution) for SMC methods is widely studied and a good tutorial review on the optimal choice minimizing the variance of the incremental importance sampling weights is overviewed in \cite{doucet2009tutorial}. There are also a range of known probabilistic properties of the SMC algorithm available in the literature, for a tutorial in the insurance context on these properties see \cite{del2013introduction}. This includes details on central limit theorem results for SMC algorithms along with asymptotic variance expressions, finite sample bias decompositions and propagation of chaos as well as finite sample concentration inequality bounds. There are also tutorials available on SMC algorithms in general such as \cite{doucet2009tutorial} and the book length coverage of \cite{del2004feynman}.

\subsection{\textbf{Resampling and Moving particles}} \label{ResampleMove}
In practice the generic algorithm presented in the previous section will eventually (as $t$ increases) be based only in a few distinct particles, in the sense that almost all the other particles will have negligible weights. This phenomenon is known as particle degeneracy. 

To overcome this problem, when the system is too degenerate one can resample all the particles $x_{1:t}$ after the correction step, choosing the $j$-th one with probability proportional to $W_t^{(j)}$. In \cite{liu1998sequential} it was suggested using the Effective Sample Size (ESS) to measure the sample degeneracy, where
$$ESS_t : = \left[ \sum_{j=1}^N (W_t^{(j)})^2 \right]^{-1}$$
and resample steps should be performed only when $ESS_t < M $ -- as a rule of thumb we can set $M = N/2$. It is important to note that after this step we need to set $W_t^{(j)} = 1/N$ for all particles, since they are all identically distributed.

Although the resample step alleviates the degeneracy problem, its successive reapplication, at each stage of the sampler, produces the so-called sample impoverishment, where the number of distinct particles is extremely small. In \cite{gilks2001following} it was proposed to add a move with any kernel such that the target distribution is invariant with respect to it to rejuvenate the system. This kernel may be, for example, a Markov Chain kernel, which would begin with equally weighted samples from the target distribution and then perturb them under a single step of a Metropolis Hastings accept-reject mechanism. This would preserve the target distribution and add diversity to the particle cloud. 

More precisely, we can apply any kernel $M(\x_{1:t}, \, \x_{1:t}^*)$ that leaves $\widetilde{\pi}_t$ invariant to move particle $\x_{1:t}$ to $\x_{1:t}^*$ (the star will denote particles \emph{after} the ``move'' step), ie,
$$\widetilde{\pi}_t(\x_{1:t}^*) = \int M(\x_{1:t}, \, \x_{1:t}^*) \widetilde{\pi}_t(\x_{1:t}) d\x_{1:t}.$$

Two of the simplest ways to construct such a kernel $M$ are to use a Gibbs sampler or a Metropolis-Hastings (M-H) algorithm. To use a Gibbs sampler algorithm, the full conditional distributions \\
$\widetilde{\pi}_t(\x_{1:t,i} \, | \, \x_{1:t,1},...,\x_{1:t,i-1},\x_{1:t,i+1},...,\x_{1:t,d})$, for $i=1,...,d$, must be known up to proportionality, while for the M-H they are not necessary. On the other hand, in the M-H algorithm one needs to design a proposal density $Q(\x_{1:t}, \x_{1:t}^*)$ that moves the particle $\x_{1:t}$ to $\x_{1:t}^*$ or some component of it such as $\x_t$ to $\x_t^*$. The Gibbs sampler is presented as Algorithm \ref{algo:gibbs_sampler}. For the Metropolis-Hastings and/or more details on MCMC methods, see, for example,\cite{gamerman2006markov}. 

\vspace{0.2cm}
{\small
\begin{algorithm}[H]
 \SetAlgoLined
 \textbf{Inputs:} Full conditional pdf's: $\widetilde{\pi}_t(\x_{1:t,1} \, | \, \x_{1:t,2},...,\x_{1:t,d}),...,\widetilde{\pi}_t(\x_{1:t,d} \, | \, \x_{1:t,1},...,\x_{1:t,d})$; Sample from $\widetilde{\pi}_t$: $\x_{1:t} =(\x_{1:t,1},...,\x_{1:t,d})$ \;
	Sample $\x_{1:t,1}^* \sim \widetilde{\pi}_t(\x_{1:t,1} \, | \, \x_{1:t,2},...,\x_{1:t,d})$ \;
	Sample $\x_{1:t,2}^* \sim \widetilde{\pi}_t(\x_{1:t,2} \, | \, \x_{1:t,1}^*,\x_{1:t,3},...,\x_{1:t,d})$ \;
	\vdots
	Sample $\x_{1:t,d}^* \sim \widetilde{\pi}_t(\x_{1:t,d} \, | \, \x_{1:t,1}^*,...,\x_{1:t,d-1}^*)$ \;
	\KwResult{New sample from $\widetilde{\pi}_t$: $\x_{1:t}^* =(\x_{1:t,1}^*,...,\x_{1:t,d}^*)$ ;}
\label{algo:gibbs_sampler}
\caption{Gibbs Sampler algorithm.}
\end{algorithm}
}

\vspace{0.2cm}
Including both the resampling and the ``move'' steps into the generic SMC algorithm leads to the ``Resample-Move'' algorithm, first presented in \cite{gilks2001following} and subsequently widely used in the SMC literature.

The generic class of SMC algorithms whilst widely used in practice can not be directly applied to the problems addressed in this work, since all the distributions in the sequence (\ref{eq:target}) are defined over the same support, ie, $E_t = E$ and not $E_t = E \times ... \times E$ as required by the SMC algorithms just described. To overcome this problem a specialized variation of this method, named SMC Samplers is introduced in the next section.

\subsection{\textbf{SMC Samplers}} \label{sec:SMCsamplers}
Having presented the SMC class of algorithms, we now present in contrast to these the class of SMC Sampler algorithms which involve the same mechanism as the SMC algorithm also using a mutation, correction and resampling stage at each iteration. However, the class of SMC Sampler algorithm is importantly different in the space that the sequence of distributions being sampled from are defined upon. Differently from Section \ref{SMC}, our interest now is to approximate a generic sequence of probability distributions $\{\pi_t\}_{t=1}^T$ such that $supp(\pi_t) = supp(\pi_{t-1}) = E$ for all $t=1,...,T$ (once again one may think of $E = \R^d$). Here we may also assume our target distribution is only known up to a normalizing constant, ie, $\pi_t(\x_t) = Z_t^{-1} f_t(\x_t).$ For notational clarity, functions in the enlarged space will be denoted with a upper tilde, as $\widetilde{f}_t : E^t \longrightarrow \R$.

The idea presented on \cite{peters2005topics} and \cite{DelMoral2006} is to transform this problem into one resembling the usual SMC formulation, where the sequence of target distributions $\{ \widetilde{f}_t \}_{t =1}^T$ is defined on the product space, i.e., $supp(\widetilde{f}_t) = E \times E \times ... \times E = E^t$. 

The construction of $\widetilde{f}_t$ (the density in the \emph{path} space) is carried out as:
\begin{equation} \label{eqn:enlarged_f}
 \widetilde{f}_t(\x_{1:t}) = f_t(\x_t)\widetilde{f}_t(\x_{1:t-1}| \x_t), \text{ for } t=2,...,T
\end{equation}
where $\widetilde{f}_t(\x_{1:t-1}| \x_t)$ is a probability distribution on the space $E^{t-1}$, for all $\x_t \in E$. Similarly to (\ref{eq:qTilde}) we can carry out the construction of the importance distribution at time $t$. 

As noticed in \cite{peters2005topics} and \cite{DelMoral2006}, a wise choice for $\widetilde{f}_t(\x_{1:t-1}| \x_t)$ is given by
$$\widetilde{f}_t(\x_{1:t-1}| \x_t) = \prod_{s=1}^{t-1}L_s(\x_{s+1}, \x_s),$$
where the each $L_s$ is the density of an (artificial) backward Markov kernel. It is important to note that, by construction, $\widetilde{f}_t(\x_{1:t})$ admits $f_t(\x_t)$ as a marginal, since
$$\int \widetilde{f}_t(\x_{1:t}) d\x_{1:t-1} = f_t(\x_t) \int \prod_{s=1}^{t-1}L_s(\x_{s+1}, \x_s) d\x_{1:t-1} = f_t(\x_t), \ \forall t>1.$$

Moreover, provided that $\widetilde{f}_t$ admits $f_t$ as a marginal the normalizing constant of the enlarged density will be the same as the original density:
$$\int \widetilde{f}_t(\x_{1:t}) d\x_{1:t} = \int \int \widetilde{f}_t(\x_{1:t}) d\x_{1:t-1} d\x_t = \int f_t(\x_t) d\x_t = Z_t.$$

Now that we are back to the SMC framework from last section, we can easily write the SMC Sampler algorithm (Algorithm \ref{algo:SMC_sampler}). Moreover, the Resample-Move strategy from Section \ref{ResampleMove} can still be utilized.

{\small
\begin{algorithm}
 \SetAlgoLined
 \textbf{Inputs:} IS density $q_1$, (forward) mutation kernels $\big\{K_t(\x_{t-1}, \x_t)\big\}_{t=1}^T$, (artificial) backward kernels $\big\{L_{t-1}(\x_{t}, \x_{t-1})\big\}_{t=1}^T$, move kernel $M_t(\widehat{\x}_t, \, \x_t)$\;
 	\For{$j=1,...,N$}{
		Sample $\x_1^{(j)}$ from $\X_1 \sim \widetilde{q}_1(\, \cdot \,)$ \emph{(Mutation step)}\;
		Calculate the weights $w_1^{(j)} = \frac{\widetilde{f}_1(\x_1^{(j)})}{\widetilde{q}_1(\x_1^{(j)})}$\;
	}
	Calculate the normalized weights $W_1^{(j)} = \frac{w_1^{(j)}}{\sum_{j=1}^N w_1^{(j)}}$ \emph{(Correction step)}\;
	\For{$t=2,\hdots,T$}{
		\For{$j=1,...,N$}{
		Sample $\x_t^{(j)}$ from $\X_t \big| \X_{t-1} = \x_{t-1}^{(j)} \sim K_t(\x_{t-1}^{(j)}, \ \cdot \ )$ \emph{(Mutation step)}\;
	Using (\ref{eqn:enlarged_f}) and (\ref{eq:qTilde}), calculate the weights $\displaystyle w_t^{(j)} = \frac{\widetilde{f}_t(\x_{1:t}^{(j)})}{\widetilde{q}_t(\x_{1:t}^{(j)})} = w_{t-1}^{(j)} \underbrace{\frac{f_t(\x_t^{(j)}) {L_{t-1}(\x_t^{(j)}, \x_{t-1}^{(j)}) } }{f_{t-1}(\x_{t-1}^{(j)})K_t(\x_{t-1}^{(j)},\x_t^{(j)})}}_{\text{incremental weight: } \alpha(\x_{t-1}^{(j)}, \x_t^{(j)})}$ \;
	 }
	Calculate the normalized weights $W_t^{(j)} = \frac{w_t^{(j)}}{\sum_{j=1}^N w_t^{(j)}}$ \emph{(Correction step)}. \;
  \If{$ESS_t < N/2$}{
		\For{j=1,...,N}{
			Resample $\widehat{\x}_t^{(j)} = \x_t^{(k)}$ with prob. $W_t^{(k)}$ \emph{(Resample step)} \;
			Sample $\x_t^{(j)} \sim M_t(\widehat{\x}_t^{(j)}, \, \cdot \, )$ \emph{(Move step)}\;
			Set $W_t^{(j)} = 1/N$ \;
		}
	}
 }
	\KwResult{Weighted random samples $\big\{ \x_{t}^{(j)}, \ W_t^{(j)} \big\}_{j=1}^N$ approximating $\pi_t$, for all $t=1,...,T$;}
\label{algo:SMC_sampler}
\caption{SMC Sampler algorithm.}
\end{algorithm}
}

\subsubsection{Backward kernels selection} \label{optimalBackwardKernels}
The introduction of the sequence of kernels $\{L_{t-1}\}_{t=2}^T$ creates a new degree of freedom in SMC samplers when compared with usual SMC algorithms, where only the forward mutation kernels $\{K_t\}_{t=1}^T$ should be designed. In this section we will discuss how to, given the kernels $\{K_t\}_{t=1}^T$, optimize the choice of backward kernels $\{L_{t-1}\}_{t=2}^T$.

Denote by $q_t(\x_t)$ the marginal importance distribution at time $t$, which is given by
\begin{equation}
q_t(\x_t) = \int \widetilde{q}_t(\x_{1:t}) d\x_{1:t-1} = \int q_1(\x_1) \prod_{j=2}^t K_j(\x_{j-1}, \x_j) d\x_{1:t-1}.
\label{eq:q_t}
\end{equation}

In the case in which we know how to calculate $q_t$ in exact form we can simply approximate the target distribution $f_t$ by a weighted sample $\{\x_t^{(j)}, W_t^{(j)}\}$, where $\x_t \sim q_t$ and $W_t$ is the normalized version of $w_t := \frac{f_t(\x_t)}{q_t(\x_t)}.$ From the definition of $q_t$ we can see that sampling from $q_t$ is simple if it is easy to sample from $q_1$ and from all the kernels $K_t$. On the other hand, the density of $q_t$ will only be tractable if we are able to solve the marginalization integral (in $t-1$ dimensions) -- which, in practice will hardly ever be the case.

The introduction of backward kernels $\{L_{t-1}\}_{t=2}^T$ helps us (in most of the practical cases) to avoid the computation of $q_t$. On the other hand, since $\widetilde{f}_t$ and $\widetilde{q}_t$ admits, respectively, $f_t$ and $q_t$ as marginals, Lemma \ref{lemma:marginalization} tells us the price we need to pay: \emph{an increase in the variance of the importance weights}. Fortunately, the same Lemma provides us some insights on how to optimally choose the backward kernels.

\begin{lemma} \label{lemma:marginalization}Let $f(\x_1,\x_2)$ and $g(\x_1,\x_2)$ be two probability densities with $supp(f) \subset supp(g)$. Then 
$$Var_g\left( \frac{f(\X_1, \X_2)}{g(\X_1,\X_2)} \right) \geq Var_g\left( \frac{f_1(\X_1)}{g_1(\X_1)}\right),$$
where $f_1(\x_1) = \int f(\x_1,\x_2) d\x_2$ and $g_1(\x_1) = \int g(\x_1,\x_2) d\x_2$.
\end{lemma}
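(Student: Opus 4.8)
The plan is to recognize this as a Rao--Blackwell / conditional-variance statement. The key object is the likelihood ratio $W := f(\X_1,\X_2)/g(\X_1,\X_2)$, viewed as a random variable under $(\X_1,\X_2)\sim g$; the hypothesis $supp(f)\subset supp(g)$ guarantees that $W$ is well defined $g$-almost everywhere (wherever $g=0$ we also have $f=0$, so those points carry no $g$-mass and contribute nothing to the $g$-integrals). Both $W$ and $f_1(\X_1)/g_1(\X_1)$ have $g$-mean equal to $1$, since integrating either ratio against its reference density returns the total integral of $f$ (respectively $f_1$), so the two variances in the statement differ only through their second moments.

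First I would compute the conditional expectation of $W$ given $\X_1$ under $g$. Writing the conditional density as $g(\x_2 \mid \x_1)=g(\x_1,\x_2)/g_1(\x_1)$, a direct substitution gives
$$
\Exp_g\!\left[ W \mid \X_1=\x_1 \right]
= \int \frac{f(\x_1,\x_2)}{g(\x_1,\x_2)}\,\frac{g(\x_1,\x_2)}{g_1(\x_1)}\,d\x_2
= \frac{1}{g_1(\x_1)}\int f(\x_1,\x_2)\,d\x_2
= \frac{f_1(\x_1)}{g_1(\x_1)}.
$$
Thus $\Exp_g[W\mid\X_1]=f_1(\X_1)/g_1(\X_1)$; that is, the marginal likelihood ratio is exactly the $\X_1$-conditional mean of the joint likelihood ratio. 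This identity is the crux of the argument, and the substitution above is where the definitions of $f_1$ and $g_1$ as marginals get used.

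The inequality then follows immediately from the law of total variance,
$$
Var_g(W)=\Exp_g\!\left[Var_g(W\mid\X_1)\right]+Var_g\!\left(\Exp_g[W\mid\X_1]\right)
=\Exp_g\!\left[Var_g(W\mid\X_1)\right]+Var_g\!\left(\frac{f_1(\X_1)}{g_1(\X_1)}\right).
$$
Since the first term is the expectation of a (nonnegative) conditional variance, it is at least zero, which yields the claim, with equality precisely when $W$ is $g$-a.s. a function of $\X_1$ alone.

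There is essentially no hard step here; the only points requiring a little care are confirming that the ratios are well defined $g$-a.s. via the support hypothesis, and that both ratios are genuinely $g$-integrable (each having $g$-mean $1$). An alternative route that avoids invoking the conditional-variance identity as a black box would be to expand $Var_g(W)-Var_g(f_1/g_1)$ directly as a difference of second moments (the means coincide at $1$) and to verify that the cross term $\Exp_g[(W-\Exp_g[W\mid\X_1])\,\Exp_g[W\mid\X_1]]$ vanishes by the tower property; I would nonetheless prefer the total-variance formulation, as it is shorter and makes the equality condition transparent.
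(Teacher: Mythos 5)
Your proof is correct and follows essentially the same route as the paper's: you establish the key identity $\Exp_g[f(\X_1,\X_2)/g(\X_1,\X_2) \mid \X_1] = f_1(\X_1)/g_1(\X_1)$ by direct substitution with the conditional density, and then apply the law of total variance to conclude. The paper presents these two steps in the reverse order, and your additional observations (both ratios have $g$-mean $1$, the equality condition) are correct refinements rather than departures.
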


\begin{proof}
From the variance decomposition, we have that
\begin{align*}
	Var_g\left( \frac{f(\X_1,\X_2)}{g(\X_1,\X_2)}\right) &= Var_g\left( \Exp_g\left[\frac{f(\X_1,\X_2)}{g(\X_1,\X_2)} \Big| \X_1 =\x_1 \right] \right) \\
	&+ \Exp_g\left[ Var_g\left(\frac{f(\X_1,\X_2)}{g(\X_1,\X_2)} \Big| \X_1=\x_1\right) \right] \\
	                                                 &\geq Var_g\left( \Exp_g\left[\frac{f(\X_1,\X_2)}{g(\X_1,\X_2)} \Big| \X_1= \x_1 \right] \right),
\end{align*}
since $f,g\geq 0$ (they are densities).

The result follows from the fact that the ratio of marginal densities can be rewritten as the following conditional expectation:
\begin{align*}
	\frac{f_1(\x_1)}{g_1(\x_1)} & = \int \frac{f(\x_1,\x_2)}{g_1(\x_1) g_{2|1}(\x_2|\x_1)} g_{2|1}(\x_2|\x_1) d\x_2 \\
													  & = \Exp_g\left( \frac{f(\X_1,\X_2)}{g(\X_1,\X_2)} \Big| \X_1 = \x_1 \right).
\end{align*}

\begin{flushright}
\qed
\end{flushright}

\end{proof}

As mentioned previously, Proposition \ref{prop:optimalBackwardKernel} shows how to design the backward kernels $\{L_{t-1}\}_{t=2}^T$ in order to minimize the variance of the importance weights.

\begin{proposition}[Optimal backward kernel] \label{prop:optimalBackwardKernel} The kernel  \\ $\displaystyle L_t^{opt}(\x_{t+1}, \x_t) := \frac{q_t(\x_t)K_{t+1}(\x_t,\x_{t+1})}{q_{t+1}(\x_{t+1})} $ is optimal in the sense that $Var_{\widetilde{q}_t}(w_t^{opt}(\X_{1:t})) \leq Var_{\widetilde{q}_t}(w_t(\X_{1:t}))$, where $\displaystyle  w_t^{opt}(\x_{1:t}) = \frac{f_t(\x_t)}{q_t(\x_t)}$.
\end{proposition}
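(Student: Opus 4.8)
The plan is to reduce the statement to a direct application of Lemma~\ref{lemma:marginalization}, recognising $w_t^{opt}$ as precisely the ratio of marginals that the lemma singles out as variance-minimal. For an arbitrary choice of backward kernels $\{L_s\}$, the path-space importance weight is $w_t(\x_{1:t}) = \widetilde{f}_t(\x_{1:t})/\widetilde{q}_t(\x_{1:t})$, and by the construction in Section~\ref{sec:SMCsamplers} the densities $\widetilde{f}_t$ and $\widetilde{q}_t$ admit $f_t$ and $q_t$, respectively, as their marginals over the history $\x_{1:t-1}$. I would therefore invoke Lemma~\ref{lemma:marginalization} with the terminal coordinate $\x_t$ playing the role of $\x_1$ and the history $\x_{1:t-1}$ the role of $\x_2$. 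Since $\int \widetilde{f}_t(\x_{1:t})\,d\x_{1:t-1} = f_t(\x_t)$ and $\int \widetilde{q}_t(\x_{1:t})\,d\x_{1:t-1} = q_t(\x_t)$, the lemma yields
$$Var_{\widetilde{q}_t}\!\left( \frac{\widetilde{f}_t(\X_{1:t})}{\widetilde{q}_t(\X_{1:t})} \right) \geq Var_{q_t}\!\left( \frac{f_t(\X_t)}{q_t(\X_t)} \right).$$
Because the right-hand integrand depends on $\x_t$ alone, its variance under the marginal $q_t$ coincides with its variance under the full path law $\widetilde{q}_t$, so the bound reads $Var_{\widetilde{q}_t}(w_t) \geq Var_{\widetilde{q}_t}(w_t^{opt})$ for every admissible backward kernel.

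It then remains to check that the proposed kernel $L^{opt}$ actually attains this lower bound. Substituting $L_s^{opt}(\x_{s+1},\x_s) = q_s(\x_s)K_{s+1}(\x_s,\x_{s+1})/q_{s+1}(\x_{s+1})$ into $\widetilde{f}_t(\x_{1:t}) = f_t(\x_t)\prod_{s=1}^{t-1} L_s(\x_{s+1},\x_s)$, the intermediate marginals $q_2,\dots,q_{t-1}$ cancel in a telescoping product, leaving $\widetilde{f}_t(\x_{1:t}) = f_t(\x_t)\,\widetilde{q}_t(\x_{1:t})/q_t(\x_t)$ after recalling the definition $\widetilde{q}_t(\x_{1:t}) = q_1(\x_1)\prod_{j=2}^t K_j(\x_{j-1},\x_j)$ from (\ref{eq:qTilde}). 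Hence the path weight collapses to $w_t(\x_{1:t}) = f_t(\x_t)/q_t(\x_t) = w_t^{opt}(\x_{1:t})$, so the inequality above holds with equality for $L^{opt}$, which establishes its optimality.

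I do not expect a serious obstacle, as the analytic content is carried entirely by Lemma~\ref{lemma:marginalization}; the two points requiring care are essentially bookkeeping. First, the telescoping cancellation must be verified carefully, keeping track of which marginals $q_s$ survive at the two endpoints $s=1$ and $s=t$. Second, to apply Lemma~\ref{lemma:marginalization} legitimately one must ensure the support condition $supp(\widetilde{f}_t) \subset supp(\widetilde{q}_t)$, which I would note follows from the standard assumption that the forward kernels $K_t$ and the initial proposal $q_1$ have support covering that of $f_t$, so that no region reachable under the target distribution is assigned zero proposal mass.
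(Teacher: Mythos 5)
Your proposal is correct and follows essentially the same route as the paper: the paper also substitutes $L^{opt}$ into the path weight, telescopes the product to obtain $w_t^{opt}(\x_{1:t}) = f_t(\x_t)/q_t(\x_t)$, and then invokes Lemma~\ref{lemma:marginalization} (with $\x_t$ as the retained coordinate and the history $\x_{1:t-1}$ marginalized out) for the variance lower bound. You merely present the two steps in the opposite order and make explicit the bookkeeping --- the marginal identities $\int \widetilde{f}_t\,d\x_{1:t-1} = f_t$, $\int \widetilde{q}_t\,d\x_{1:t-1} = q_t$ and the support condition --- that the paper leaves implicit.
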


\begin{proof} If we substitute the optimal kernel into the definition of importance weights we have that
\begin{align*}
	w^{opt}_t(\x_{1:t}) &= \frac{\widetilde{f}_t(\x_{1:t})}{\widetilde{q}_t(\x_{1:t})} \\
							 &= \frac{f_t(\x_t) \prod_{s=1}^{t-1}L^{opt}_s(\x_{s+1}, \x_s) }{ q_1(\x_1) \prod_{j=2}^t K_j(\x_{j-1}, \x_j)} \\ 
							 &= \frac{f_t(\x_t) \prod_{s=1}^{t-1} \frac{q_t(\x_t)K_{t+1}(\x_t,\x_{t+1})}{q_{t+1}(\x_{t+1})}    }{ q_1(\x_1) \prod_{j=2}^t K_j(\x_{j-1}, \x_j)} \\ 
							 &= \frac{f_t(\x_t)}{q_t(\x_t)}.
\end{align*}
The result, then, follows from Lemma \ref{lemma:marginalization}. 

\end{proof}

From Proposition \ref{prop:optimalBackwardKernel} we can see that if we know how to sample from $q_t(\x_t)$ then the SMC sampler algorithm reduces to a sequence of Importance Sampling steps, where at each time $t$ we sample particles from $q_t(\x_t)$ and correct the bias through the weights $w_t(\x_t) = \frac{f_t(\x_t)}{q_t(\x_t)}$.

\paragraph{\textbf{Independent kernel}}
One situation where we know how to calculate $q_t(\x_t)$ is when $K_t(\x_{t-1}, \x_t)$ does not depend on $\x_{t-1}$, making the mutation step completely memory-less. As an abuse of notation, let $K_t(\x_t) := K_t(\x_{t-1}, \x_t)$. This choice is not always recommended to be used in practice due to difficulties in designing an appropriate kernel. In this case it is easy to see that it is possible to perform a sequence of Importance Sampling steps, since
\begin{align*}
	q_t(\x_t) &= \int q_1(\x_1) \prod_{j=2}^t K_j(\x_{j-1}, \x_j) d\x_{1:t-1} \\
					 &= \int q_1(\x_1) \prod_{j=2}^t K_j(\x_j) d\x_{1:t-1} \\
					 &= \left( \int q_1(\x_1) d\x_1 \right) \left(\prod_{j=2}^{t-1}  \int K_j(\x_j) d\x_{1:t-1} \right) \left( \int K_t(\x_t) d\x_t \right) \\
					& = K_t(\x_t).
\end{align*}

\paragraph{\textbf{Approximations of the optimal kernel}}
Various approximations of the optimal backward kernel have been proposed in the literature (see, for example \cite{DelMoral2006}, Section 3.3.2) but here we will discuss only one of them.

If we rewrite the optimal backward kernel from Proposition \ref{prop:optimalBackwardKernel} as
$$L_t^{opt}(\x_{t+1}, \x_t) = \frac{q_t(\x_t) K_{t+1}(\x_t, \x_{t+1}) }{\int q_t(\x_t) K_{t+1}(\x_t, \x_{t+1}) d\x_t}$$
it suggests that a sensible approximation for this kernel is to use $\pi_t$ instead of $q_t$. In this case,
\begin{equation} \label{eq:approxOptimalBackwardKernel}
L_t^{opt}(\x_{t+1}, \x_t) \approx \frac{f_t(\x_t) K_{t+1}(\x_t, \x_{t+1}) }{\int f_t(\x_t) K_{t+1}(\x_t, \x_{t+1})d\x_t},
\end{equation}
since the normalizing constants of $\pi$ cancel out.

Although the integral in the denominator of (\ref{eq:approxOptimalBackwardKernel}) is (usually) not analytically tractable we can use the weighted sample $\{\x_{1:t}^{(j)}, w_t^{(j)} \}_{j=1}^N$ from $\pi_t$ generated by the SMC sampler procedure to approximate $L_t^{opt}$ as
\begin{equation} \label{eq:approxOptimalBackwardKernel2}
	L_t(\x_{t+1}, \x_t) = \frac{f_t(\x_t) K_{t+1}(\x_t, \x_{t+1}) }{\sum_{j=1}^N w_t^{(j)} K_{t+1}(\x_t^{(j)}, \x_{t+1}) }.
\end{equation}

\begin{figure}[ht]%
\includegraphics[width=1\textwidth, trim = 0mm 0.5cm 1cm 0mm]{./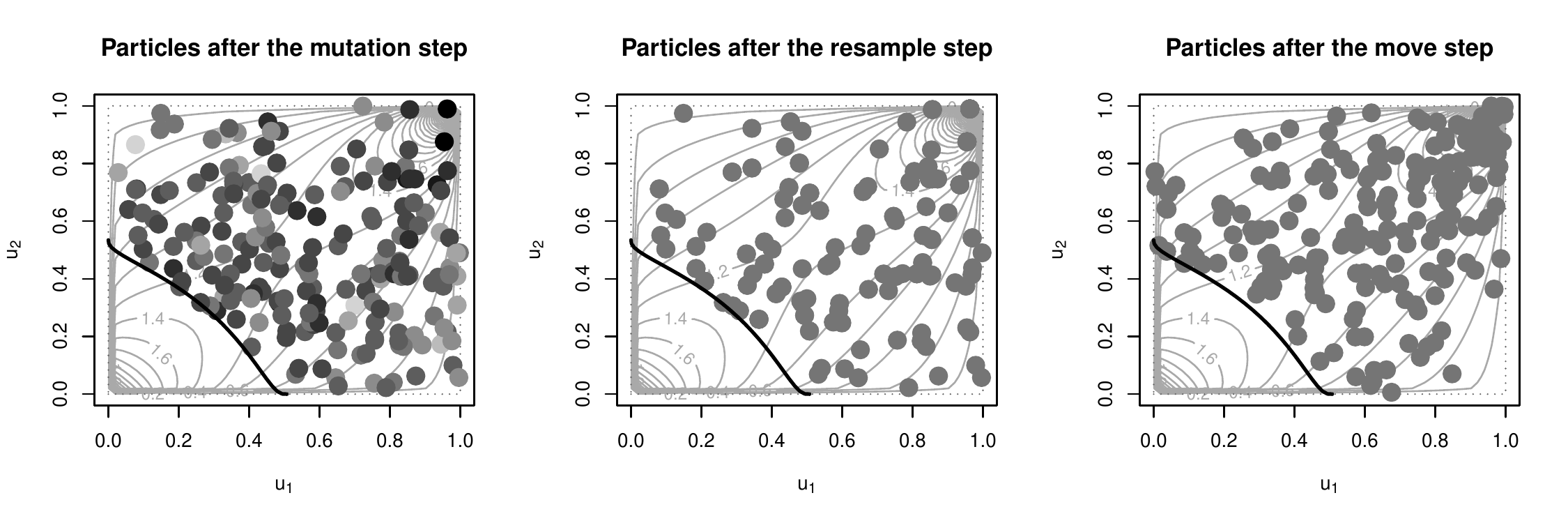}
\includegraphics[width=1\textwidth, trim = 0mm 0.5cm 1cm 0mm]{./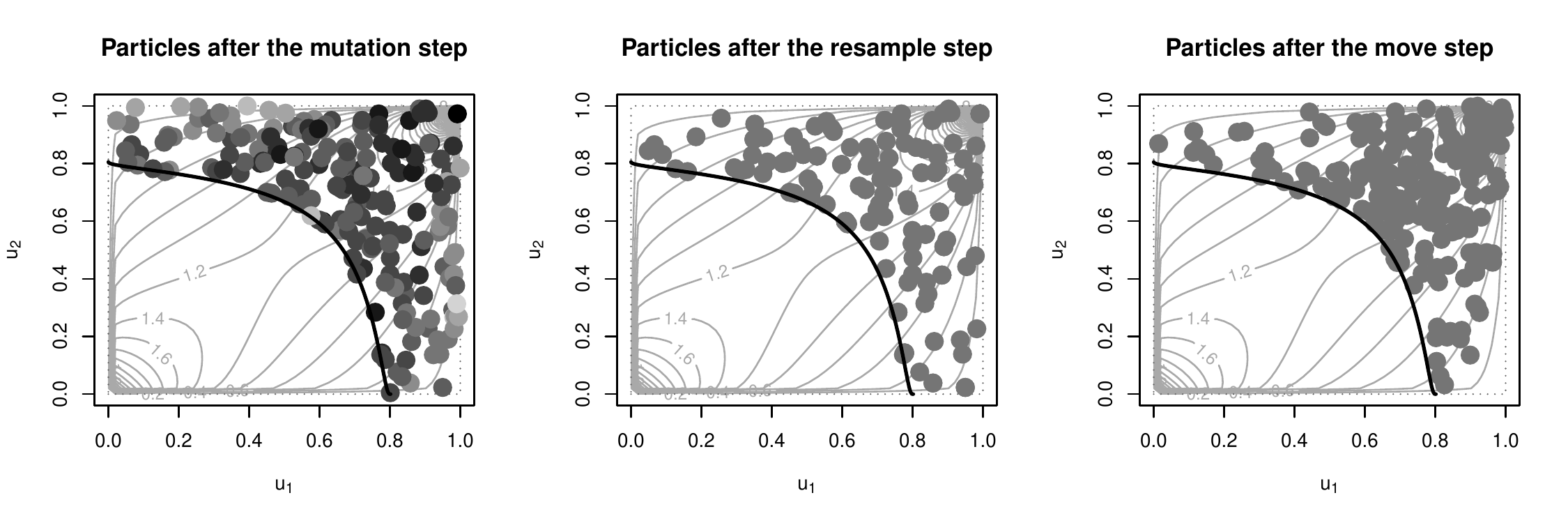}
\includegraphics[width=1\textwidth, trim = 0mm 0.5cm 1cm 0mm]{./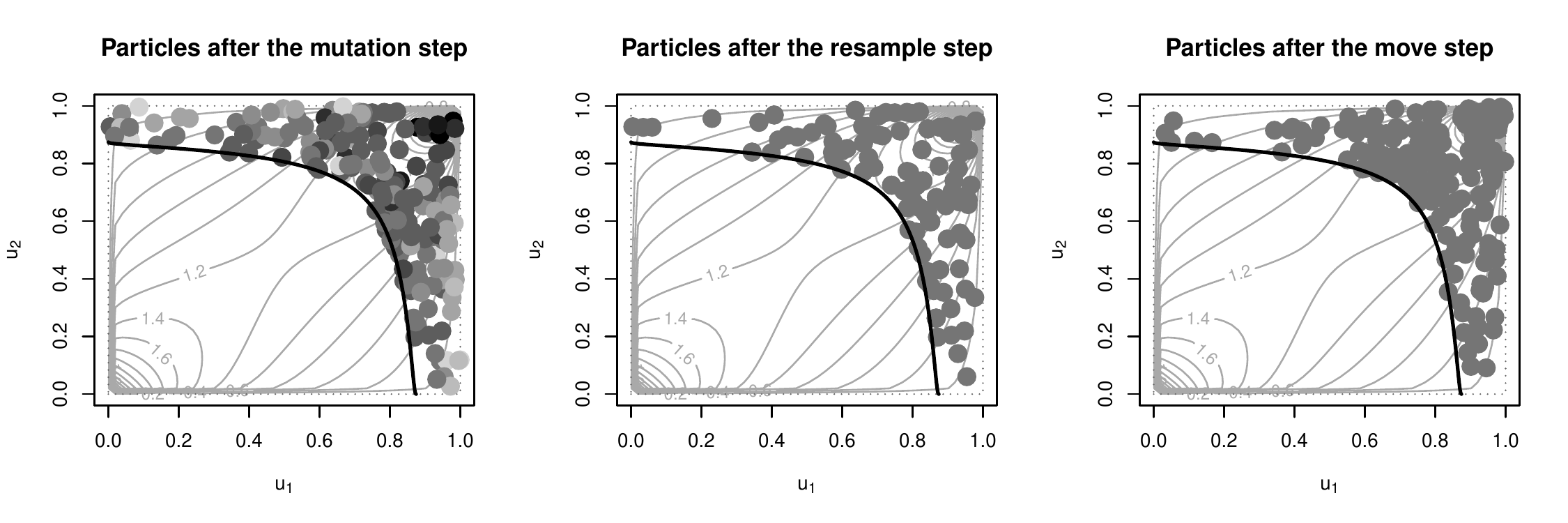}
\caption{\small Evolution of particles in the SMC algorithm, with marginals $X_{t,i} \sim LN(10-0.1i, \, 1+0.2i)$, $i=1,2$, $t=1,...,T$; Gumbel($\theta=1.5$); $N=200$ particles and $T=20$ equidistant levels from $B_1=0,...,B_T=100,000$; (top) $t=4$, (middle) $t=10$, (bottom) $t=14$. The level sets in the background represent the structure of the base, unconstrained copula.}
\end{figure}


\subsubsection{Forward kernels selection} \label{optimalForwardKernels}
So far we have discussed how to design backward kernels $L_t$ that are optimal for specific choices of forward kernels $K_t$. We now present possible choices of $K_t$ in order to have simple forms of importance weights when used with the optimal choice of backward kernel.

In this sense, one convenient choice for the forward kernels $K_t$ is to assume they are such that $K_t(\x_{t-1}, \x_t)$ has $\pi_t$ as invariant density, ie,
$$\pi_t(\x_{t+1}) = \int K_{t+1}(\x_{t}, \x_{t+1}) \pi_{t+1}(\x_{t}) d\x_{t}.$$
Proceeding this way, we can choose the backward kernel as follows,
\begin{equation}
L_t(\x_{t+1}, \x_t) = \frac{\pi_{t+1}(\x_t)K_{t+1}(\x_t, \x_{t+1})}{\pi_{t+1}(\x_{t+1})},
\label{eq:reversalKernel}
\end{equation}
which is a reasonable approximation for the optimal backward kernel from Proposition \ref{prop:optimalBackwardKernel}. It also worth noticing the kernel defined on (\ref{eq:reversalKernel}) is the reversal Markov kernel associated with $K_{t+1}$.

For theoretical purposes, we may also assume the forward kernel mixes perfectly, ie, $K_{t+1}(\x_t, \x_{t+1}) = \pi_{t+1}(\x_{t+1}).$ This choice of kernels is obviously not feasible in practice, since $\pi_{t+1}(\x_{t+1})$ is precisely the density we are trying to sample from, but it can provide us interesting insights. In this case, the incremental weights of the SMC sampler algorithm (see Algorithm \ref{algo:SMC_sampler}) are given by
\begin{align*}
	\alpha(\x_{t-1}, \x_{t}) &= \frac{f_t(\x_t) {L_{t-1}(\x_t, \x_{t-1}) } }{f_{t-1}(\x_{t-1})K_t(\x_{t-1},\x_t)} \propto 
	\frac{\pi_t(\x_{t-1})}{\pi_{t-1}(\x_{t-1})},
\end{align*}
which makes the weights at time $t$ independent of the particles sampled at time $t$.

\subsection{\textbf{Generalizing the relative error expression to the path-space IS and SMC Sampler}} \label{sec:variance}

As in Equation (\ref{eq:conditionalExpectation_u}), throughout this section we will assume we are interested in estimating conditional expectations such as $\Exp[h(\X) \, | \, \X \in \mathcal{G}_\X ] = \Exp[\varphi(\Ubold) \, | \, \Ubold \in \mathcal{G}_\Ubold]$ where $h(\X) = h\big(F_1^{(-1)}(u_1),...,F_d^{(-d)}(u_d)\big) =: \varphi(\Ubold)$. Moreover, let us denote by $p_t = \Prob[\X_t \in \mathcal{G}_{\X_t} ] = \Prob[ \Ubold_t \in \mathcal{G}_{\Ubold_t}]$. Throughout this section we will denote by $\Exp_{\pi_T}[ \cdot ]$ the expectation with respect to the density $\pi_T$.

In general, given $\widehat{\varphi}$ an estimate for $\Exp[\varphi(\Ubold) \, | \, \Ubold \in \mathcal{G}_\Ubold]$ a reasonable efficiency measure is given by its variance (the smaller the variance the better). In many cases, however, both the quantity of interest and the variance of the estimator are so small that using the variance as a measure efficiency is meaningless. For example, let us assume the quantities of interest are given by $z_s = \Prob[\Ubold \in A_s]$, for some sequence of sets $A_s \downarrow \emptyset$ when $s \rightarrow \infty$. It is trivial to see that if the estimator of the probability via simple Monte Carlo is denoted by $\widehat{\varphi_s}$ then $Var(\widehat{\varphi_s}) \rightarrow 0$ when $s \rightarrow \infty$.

In such scenarios, as discussed in \cite{asmussen2007stochastic}, Chapter VI, the relevant performance measure is to verify if the (unbiased) estimators possess bounded relative error, ie, if
$$\limsup_{s\rightarrow \infty} \frac{Var(\widehat{\varphi_s})}{z_s^2} < \infty.$$
In this section we present the generic variance expression for three different estimators, targeting $\Exp_{\pi_t}[\varphi(\Ubold_t)]$, namely: Importance Sampling (IS), Path space Importance Sampling (PIS) and Sequential Monte Carlo Sampler (SMCS). The reader should note, from (\ref{eq:asymptoticVar}), that the last two algorithm are \emph{asymptotically} unbiased, but may have some finite sample bias. For some specific quantities, \cite{arbenz2014importance} show that their estimator will produce bounded relative error and the same is valid for the Importance Sampling algorithm from \cite{mcleish2010bounded}, when targeting small probabilities.

For SMC Samplers \cite{DelMoral2006} provide two Central Limit Theorem (CLT) results in ``extreme'' cases: the first one deals with the case when no resampling is performed and the second one for the case of resampling (using multinomial resample) at each iteration of the algorithm. These two cases will be briefly reviewed in the sequel, in order for us to have interpretable results for the asymptotic variances.

In both cases, under suitable integrability conditions discussed in \cite{DelMoral2006}, Proposition 2, the following convergence in distribution is valid when the number of particles tends to infinity,
\begin{equation}
\sqrt{N} \Big( \sum_{j=1}^N W_t^{(j)} \varphi(\Ubold^{(j)}) - \Exp_{\pi_t}[\varphi(\Ubold_t)]\Big) \underset{N\rightarrow + \infty}{\Longrightarrow} N(0, \, \sigma^2_t), \quad \forall t=1,...,T
\label{eq:asymptoticVar}
\end{equation}
where the normalized weights $W_t^{(j)}$ calculated in the SMC Sampler algorithm (see Algorithm \ref{algo:SMC_sampler}) and the limiting variance $\sigma^2_t$ are dependent on the resampling strategy.

\subsubsection{Importance Sampling (IS)}
Let us assume we use a density $q_0(\ubold_0)$ as an importance distribution targeting $\pi_T(\ubold) = \pi(\ubold)\one_{\{\ubold \in \mathcal{G}_{\Ubold_T} \}}(\ubold) / p_T$ (as in \ref{eq:target})).  If we denote the Importance Sampling estimator for $\varphi(\Ubold)$ by 
\begin{equation}
\widehat{\varphi}_{IS} : = \frac{1}{N}\sum_{j=1}^N \varphi(\ubold^{(i)}_0) w_{IS}(\ubold^{(i)}_0),
\label{eq:IS_est}
\end{equation}
where $w_{IS}(\ubold_0):= w_{IS}(\ubold_0) = \frac{\pi_T(\ubold_0)}{q_0(\ubold_0)}$ then we have that
\begin{equation*}
	Var(\widehat{\varphi}_{IS}) = \frac{1}{N}\left[\int \frac{\pi_T^2(\ubold)}{q_0^2(\ubold)}\varphi(\ubold) q_0(\ubold) d\ubold - \left( \int \frac{\pi_T(\ubold)}{q_0(\ubold)}\varphi(\ubold) q_0(\ubold) d\ubold \right)^2\right]. 
\end{equation*}
In particular, if $\pi_T \geq q_0$ $\pi_T$-almost surely (eg, the IS distribution is the unconditional distribution $\pi$), then
\begin{equation}
Var(\widehat{\varphi}_{IS}) \geq \frac{1}{N} Var_{\pi_T}(\varphi(\Ubold)).
\label{eq:var_IS}
\end{equation}

In the capital allocation problem we know the probability of the conditioning event is chosen to be $p_T = 1-\alpha$, meaning that the target distribution $\pi_T$ will be known (including its proportionality constant) if $\pi$ is perfectly known. In this scenario, if the importance distribution is the unconditional density, ie, $q_0 \equiv \pi$ then the importance weights can be perfectly calculated. However, in many interesting cases either $\pi$ or $q_0$ may only be known up to a constant. In this case it is necessary to use the ``self-normalized'' version of the importance weights, namely,
\begin{equation}
w_{IS\text{-}SN}^{(i)} = \frac{\pi_T(\ubold^{(i)}) / q_0(\ubold^{(i)}) }{\sum_{j=1}^N \pi_T(\ubold^{(j)}) / q_0(\ubold^{(j)})}.
\label{eq:IS_self_normalized}
\end{equation}

When the self-normalized importance weights are used to estimate $\varphi(\Ubold)$ it is easy to show the variance of the new estimator will be larger than the one defined in (\ref{eq:IS_est}). More formally, if we define $\widehat{\varphi}_{IS\text{-}SN} : = \sum_{j=1}^N \varphi(\ubold^{(i)}_0) w_{IS\text{-}SN}(\ubold^{(i)}_0),$ then 
$$\frac{1}{N} Var_{\pi_T}(\varphi(\Ubold)) \leq Var(\widehat{\varphi}_{IS}) \leq Var(\widehat{\varphi}_{IS\text{-}SN}).$$
\subsubsection{Path space Importance Sampling (PIS)}
When no resampling is performed in the SMC Sampler algorithm, it is easy to see that, at each time step, it collapses to an Importance Sampling algorithm in the Path space (PIS). Since the particles have no interaction, the (unnormalized) weights in (\ref{eq:asymptoticVar}) will be given by
$$w_{PIS,t}(\Ubold_t) = \frac{\widetilde{\pi}_t(\ubold_{1:t})}{\widetilde{q}_t(\ubold_{1:t})},$$
where $\widetilde{q}_t(\ubold_{1:t}) =\widetilde{q}_1(\ubold_1)\prod_{j=2}^t K_j(\ubold_{j-1},\ubold_j)$ and $\widetilde{\pi}_t(\ubold_{1:t}) = \pi_t(\ubold_t) \prod_{s=1}^{t-1}L_s(\ubold_{s+1}, \ubold_s)$. In this case the following result is valid.

\begin{proposition}[Path Importance Sampling (PIS) CLT] Under the integrability assumptions given in \cite{DelMoral2006}, Proposition 2, the asymptotic variance, as the number of particles increases to infinity, in (\ref{eq:asymptoticVar}) is given by
\begin{equation}
\sigma_{t}^2 = \sigma_{PIS,t}^2 := \int \frac{\widetilde{\pi}_t^2(\ubold_{1:t})}{\widetilde{q}_t(\ubold_{1:t})}\{\varphi(\ubold_{t}) - \Exp_{\pi_t}[\varphi(\Ubold_t)] \}^2 d\ubold_{1:t}
\label{eq:asymptoticVar_PIS}
\end{equation}

\end{proposition}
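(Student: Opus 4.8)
The plan is to exploit the observation made immediately before the statement: once resampling is switched off, the $N$ path-valued particles $\ubold_{1:t}^{(j)}$ are independent and identically distributed draws from $\widetilde{q}_t$, so the SMC Sampler estimator $\sum_{j=1}^N W_t^{(j)}\varphi(\ubold_t^{(j)})$ is \emph{exactly} the self-normalized importance sampling estimator of $\Exp_{\pi_t}[\varphi(\Ubold_t)]$ on the enlarged space $E^t$, with weights $w_{PIS,t}=\widetilde{\pi}_t/\widetilde{q}_t$. First I would write this estimator as a ratio of two sample averages,
$$\sum_{j=1}^N W_t^{(j)}\varphi(\ubold_t^{(j)}) = \frac{N^{-1}\sum_{j=1}^N w_{PIS,t}(\ubold_{1:t}^{(j)})\,\varphi(\ubold_t^{(j)})}{N^{-1}\sum_{j=1}^N w_{PIS,t}(\ubold_{1:t}^{(j)})} =: \frac{A_N}{B_N},$$
so that the whole statement reduces to a standard delta-method / Slutsky argument for a ratio of i.i.d.\ averages.

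The second step is to identify the limits of $A_N$ and $B_N$ using the marginalization property of $\widetilde{\pi}_t$ established in Section~\ref{sec:SMCsamplers}. Because $\widetilde{\pi}_t(\ubold_{1:t}) = \pi_t(\ubold_t)\prod_{s=1}^{t-1}L_s(\ubold_{s+1},\ubold_s)$ integrates to one and admits $\pi_t$ as its $\ubold_t$-marginal, a change of measure to $\widetilde{q}_t$ gives $\Exp_{\widetilde{q}_t}[w_{PIS,t}]=\int \widetilde{\pi}_t\,d\ubold_{1:t}=1$ and $\Exp_{\widetilde{q}_t}[w_{PIS,t}\,\varphi(\ubold_t)]=\int \widetilde{\pi}_t\,\varphi(\ubold_t)\,d\ubold_{1:t}=\Exp_{\pi_t}[\varphi(\Ubold_t)]=:\mu$. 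Hence $B_N\to 1$ and $A_N\to\mu$ almost surely by the strong law of large numbers.

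For the central limit theorem itself I would linearize the ratio. Writing $\widehat\varphi-\mu = (A_N-\mu B_N)/B_N$ and using $B_N\to 1$, Slutsky's theorem reduces the task to the asymptotics of $\sqrt{N}(A_N-\mu B_N) = \sqrt{N}\,N^{-1}\sum_{j=1}^N Y^{(j)}$, where $Y^{(j)} := w_{PIS,t}(\ubold_{1:t}^{(j)})\{\varphi(\ubold_t^{(j)})-\mu\}$. The two identities above give $\Exp_{\widetilde{q}_t}[Y]=\mu-\mu=0$, and a direct computation yields
$$Var_{\widetilde{q}_t}(Y)=\Exp_{\widetilde{q}_t}\!\big[w_{PIS,t}^2\{\varphi(\ubold_t)-\mu\}^2\big]=\int \frac{\widetilde{\pi}_t^2(\ubold_{1:t})}{\widetilde{q}_t(\ubold_{1:t})}\{\varphi(\ubold_t)-\mu\}^2\,d\ubold_{1:t}=\sigma_{PIS,t}^2.$$
The ordinary i.i.d.\ CLT applied to the centred, zero-mean variables $Y^{(j)}$ then delivers $\sqrt{N}(A_N-\mu B_N)\Rightarrow N(0,\sigma_{PIS,t}^2)$, and a final application of Slutsky's theorem transfers this to $\sqrt{N}(\widehat\varphi-\mu)$, which is precisely the claimed form of (\ref{eq:asymptoticVar}).

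The only genuinely delicate point, and the one I expect to be the main obstacle, is integrability: both the i.i.d.\ CLT and the validity of the linearization require $\Exp_{\widetilde{q}_t}[w_{PIS,t}^2\{\varphi-\mu\}^2]<\infty$ (equivalently $\sigma_{PIS,t}^2<\infty$) together with a finite second moment of the weights. These are exactly the integrability hypotheses borrowed from \cite{DelMoral2006}, Proposition~2, so rather than re-derive them I would simply invoke them; everything else is the routine self-normalized importance sampling variance calculation specialised to the path space $E^t$.
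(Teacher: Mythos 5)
Your proof is correct. Note, however, that the paper itself offers no derivation of this proposition: it simply invokes Proposition 2 of \cite{DelMoral2006} under that paper's integrability conditions, having already observed in the lead-in text that without resampling the sampler collapses to importance sampling on the path space with weights $w_{PIS,t}=\widetilde{\pi}_t/\widetilde{q}_t$. Your route makes the special case self-contained: since the path particles $\ubold_{1:t}^{(j)}$ are i.i.d.\ from $\widetilde{q}_t$, the self-normalized estimator is a ratio $A_N/B_N$ of i.i.d.\ averages; the marginalization identity $\int\widetilde{\pi}_t(\ubold_{1:t})\,d\ubold_{1:t-1}=\pi_t(\ubold_t)$ (established in Section \ref{sec:SMCsamplers}) gives $\Exp_{\widetilde{q}_t}[w_{PIS,t}]=1$ and $\Exp_{\widetilde{q}_t}[w_{PIS,t}\,\varphi(\ubold_t)]=\Exp_{\pi_t}[\varphi(\Ubold_t)]$; and the ordinary CLT for the centred variables $Y^{(j)}=w_{PIS,t}(\ubold_{1:t}^{(j)})\{\varphi(\ubold_t^{(j)})-\mu\}$ combined with Slutsky's theorem yields exactly (\ref{eq:asymptoticVar_PIS}). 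What the elementary argument buys is transparency and independence from the Feynman--Kac machinery; what the paper's citation buys is a uniform treatment that also covers the interacting (resampling) case of the subsequent proposition, which your i.i.d.\ argument cannot reach. One small refinement: your hypothesis of ``a finite second moment of the weights'' is slightly stronger than necessary --- the linearization only needs $B_N\to 1$, which follows from the strong law at first-moment order (automatic, since $\Exp_{\widetilde{q}_t}[w_{PIS,t}]=1$), so $\sigma_{PIS,t}^2<\infty$ together with the absolute-continuity condition $supp(\widetilde{\pi}_t)\subset supp(\widetilde{q}_t)$ suffices.
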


In the particular case where the forward kernel is perfectly mixing and the backward kernels are given by (\ref{eq:reversalKernel}) then it is easy to show that
\begin{equation}
\sigma_{PIS,t}^2 = \int \frac{\pi_1^2(\ubold_{0})}{q_{0}(\ubold_{0})} d\ubold_0 \prod_{k=2}^t \int \frac{\pi_k^2(\ubold_{k-1})}{\pi_{k-1}(\ubold_{k-1})} d\ubold_{k-1} Var_{\pi_t}(\varphi(\Ubold)).
\label{eq:asymptoticVar_PIS_2}
\end{equation}

Moreover, let us assume the sequence of distributions is given by (\ref{eq:target}). Then, each integral can be simplified to
\begin{align*}
\int \frac{\pi_k^2(\ubold_{k-1})}{\pi_{k-1}(\ubold_{k-1})} d\ubold_{k-1} & = \int \frac{\pi^2(\ubold_{k-1}) \one_{\{\ubold_{k-1} \in \mathcal{G}_{\Ubold_{k}}\}}(\ubold_{k-1})}{\pi(\ubold_{k-1}) \one_{\{\ubold_{k-1} \in \mathcal{G}_{\Ubold_{k-1}}\}}(\ubold_{k-1})} \frac{p_{k-1}}{p_k^2} d\ubold_{k-1} \\
&= \frac{p_{k-1}}{p_k} \int \frac{\pi(\ubold_{k-1}) \one_{\{\ubold_{k-1} \in \mathcal{G}_{\Ubold_{k}}\}}(\ubold_{k-1})}{p_k} d\ubold_{k-1} \\
& = \frac{p_{k-1}}{p_k}.
\end{align*}

But since the events become rarer the larger the index $k$, then $\frac{p_{k-1}}{p_k} > 1$
(and the same argument holds for the first integral, if, as in the IS algorithm, we start sampling from the unconditional distribution) and for $N$ sufficiently large, 
$$Var(\widehat{\varphi}_{PIS,T}) \approx \frac{1}{N} \sigma_{PIS,t}^2 > \frac{1}{N} Var_{\pi_T}(\varphi(\Ubold)).$$

\subsubsection{Sequential Monte Carlo Sampler (SMCS) -- resampling at every time step}
When resampling is performed at every step of the SMC algorithm we can still calculate the variance $\sigma^2$ from (\ref{eq:asymptoticVar}) through the Proposition \ref{eq:asymptoticVar_AR_SMCS}. The estimate calculated through this scheme will be denoted here $\widehat{\varphi}_{SMCS}$.

\begin{proposition}[Always-Resampling SMC Sampler (SMCS) CLT] Under the integrability assumptions given in \cite{DelMoral2006}, Proposition 2, the asymptotic variance in (\ref{eq:asymptoticVar}) is given by
\begin{equation}
{\small
\begin{split}
\sigma_{t}^2 = &\sigma_{SMCS,t}^2 := \int \frac{\widetilde{\pi}^2_t(\ubold_{1})}{\mu_t(\ubold_{1})}\left(\int \varphi(\ubold_t)\widetilde{\pi}\left(\ubold_t|\ubold_1\right)d\ubold_t - E_{\pi_t}(\varphi(\Ubold)) \right)^2 d\ubold_{1}\\
&+ \sum_{k=2}^{t-1}\int \frac{\left( \widetilde{\pi}_t(\ubold_k) L_{k-1}(\ubold_k,\ubold_{k-1}) \right)^2}{\left( \pi_{k-1}(\ubold_{k-1}) K_{k}(\ubold_{k-1},\ubold_k) \right)}\left(\int \varphi(\ubold_t)\widetilde{\pi}\left(\ubold_t|\ubold_k\right)d\ubold_t - E_{\pi_t}(\varphi(\Ubold)) \right)^2 d\ubold_{k-1}d\ubold_k\\
&+ \int \frac{\left( \pi_t(\ubold_t) L_{t-1}(\ubold_t,\ubold_{t-1}) \right)^2}{\left( \pi_{t-1}(\ubold_{t-1}) K_{t}(\ubold_{t-1},\ubold_t) \right)}\left(\varphi(\ubold_t) - E_{\pi_t}(\varphi) \right)^2 d\ubold_{t-1}d\ubold_t.,
\end{split}
}
\label{eq:asymptoticVar_AR_SMCS}
\end{equation}
where \small{$\widetilde{\pi}_n(\ubold_k):=\int \widetilde{\pi}_n(\ubold_{1:n})d\ubold_{1:k-1}d\ubold_{k+1:n}$} and \small{$\widetilde{\pi}_n(\ubold_n|\ubold_k):=\int \widetilde{\pi}_n(\ubold_{1:n})d\ubold_{1:k-1}d\ubold_{k+1:n}/\widetilde{\pi}(\ubold_k)$}.

\end{proposition}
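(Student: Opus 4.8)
The plan is to realise the always-resampling SMC Sampler as a standard Feynman--Kac interacting particle system on the path space $E^t$ and then to specialise the general Central Limit Theorem of \cite{DelMoral2006}, Proposition 2, reading off the asymptotic variance in the notation of this section. First I would identify the Feynman--Kac ingredients: the sequence of unnormalised path-space targets $\widetilde{f}_t$ from (\ref{eqn:enlarged_f}), the forward mutation kernels $K_t$, and the potentials given by the incremental weights appearing in Algorithm \ref{algo:SMC_sampler}. Because multinomial resampling is applied after \emph{every} correction step, the weights are reset to $1/N$ at each iteration and the resulting particle cloud is precisely the canonical mutation--selection system to which the Del Moral CLT applies; thus (\ref{eq:asymptoticVar}) holds with some $\sigma_t^2$, and the remaining work is to compute it explicitly.

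Next I would invoke the standard variance decomposition for such systems: $\sigma_t^2$ is a sum of $t$ local contributions, one per sampling/resampling stage, each equal to the fluctuation variance introduced at that stage applied to the test function ``carried back'' from time $t$. The carried-back function at coordinate $k$ is the conditional expectation $\int \varphi(\ubold_t)\,\widetilde{\pi}(\ubold_t\,|\,\ubold_k)\,d\ubold_t$ under the path-space target, centred at its global mean $\Exp_{\pi_t}[\varphi(\Ubold)]$; this is exactly the inner bracket appearing in every term of (\ref{eq:asymptoticVar_AR_SMCS}). The decomposition itself is the martingale-difference telescoping of $\sum_j W_t^{(j)}\varphi(\ubold^{(j)}) - \Exp_{\pi_t}[\varphi(\Ubold)]$ into the errors accrued at successive stages, whose asymptotic independence (the martingale structure) is what guarantees that the local variances add.

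It then remains to identify the sampling distribution at each stage so as to pin down the importance ratios. At the initial stage $k=1$ particles are imported from the proposal $\mu_t$, so the ratio is $\widetilde{\pi}_t(\ubold_1)/\mu_t(\ubold_1)$ and its square weights the first integral. At an intermediate stage $2\le k\le t-1$, resampling collapses the empirical measure onto the marginal target $\pi_{k-1}$ (the weights having been reset) and particles are then mutated by $K_k$, so the effective proposal for the pair $(\ubold_{k-1},\ubold_k)$ is $\pi_{k-1}(\ubold_{k-1})K_k(\ubold_{k-1},\ubold_k)$ while the relevant target factor is $\widetilde{\pi}_t(\ubold_k)L_{k-1}(\ubold_k,\ubold_{k-1})$; their squared ratio produces the summand in the middle term. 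The terminal stage $k=t$ is structurally identical, except that no further conditioning is needed and the carried-back function is simply $\varphi(\ubold_t)$, yielding the last integral. Summing the three pieces gives (\ref{eq:asymptoticVar_AR_SMCS}).

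I expect the main obstacle to be the careful bookkeeping of the path-space marginals $\widetilde{\pi}_t(\ubold_k)$ and conditionals $\widetilde{\pi}(\ubold_t\,|\,\ubold_k)$ induced by the backward-kernel construction $\widetilde{f}_t(\ubold_{1:t-1}\,|\,\ubold_t)=\prod_{s=1}^{t-1}L_s(\ubold_{s+1},\ubold_s)$, together with verifying that multinomial resampling really does make the limiting pre-mutation measure at stage $k$ equal to $\pi_{k-1}$ rather than some reweighted variant. One must also check that the integrability hypotheses of \cite{DelMoral2006}, Proposition 2, are met in the present constrained-copula setting, so that the martingale-difference CLT and the additivity of the local variances are legitimate.
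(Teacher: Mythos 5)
Your proposal is correct and takes essentially the same route as the paper, which establishes this proposition purely by appeal to Proposition 2 of \cite{DelMoral2006}: that cited result is derived exactly as you describe, by embedding the always-resampling sampler as a Feynman--Kac particle system on the path space, telescoping the estimation error into $t$ asymptotically independent local contributions via the martingale-difference decomposition, and identifying the stage-wise proposal/target pairs ($\mu_t$ against $\widetilde{\pi}_t$ at initialisation, $\pi_{k-1}(\ubold_{k-1})K_k(\ubold_{k-1},\ubold_k)$ against $\widetilde{\pi}_t(\ubold_k)L_{k-1}(\ubold_k,\ubold_{k-1})$ at intermediate stages) with the carried-back centred test functions $\int \varphi(\ubold_t)\,\widetilde{\pi}(\ubold_t\,|\,\ubold_k)\,d\ubold_t - \Exp_{\pi_t}[\varphi(\Ubold)]$. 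The two subtleties you flag --- that multinomial resampling with weight reset makes the limiting pre-mutation measure at stage $k$ equal to $\pi_{k-1}$, and the bookkeeping of the path-space marginals and conditionals induced by the backward-kernel construction --- are precisely the content of the cited proof, so nothing essential is missing.
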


As in the previous section, if we assume the mutation kernel is perfectly mixing and if we use the approximation of the optimal kernel given by (\ref{eq:reversalKernel}) then the variance in (\ref{eq:asymptoticVar_AR_SMCS}) can be simplified to
\begin{equation}
\sigma_{SMCS,t}^2 = \int \frac{\pi_t^2(\ubold_{t-1})}{\pi_{t-1}(\ubold_{t-1})} d\ubold_{t-1} Var_{\pi_t}(\varphi(\Ubold)),
\label{eq:asymptoticVar_AR_SMCS_2}
\end{equation}
from which it becomes clear from (\ref{eq:asymptoticVar_PIS_2}) and (\ref{eq:asymptoticVar_AR_SMCS_2}) that $\sigma_{PIS,t}^2 > \sigma_{SMCS,t}^2 $, which implies that
$$Var(\widehat{\varphi}_{SMCS,T}) < Var(\widehat{\varphi}_{PIS,T}) .$$ 


\section{Design of a SMC sampler with linear constraints for capital allocation} \label{sec:linearConstraints}
In this section we return to the problem of sampling from the distribution of \\ $(\X \, \big| \, \sum_{i=1}^d X_i > B)$ producing samples from $\textbf{U} \in \mathcal{G}_{\textbf{U}}$, as explained in section \ref{sec:simForConditionalExp}. To use the algorithm specified in Section \ref{sec:SMCsamplers} we still need to design: (1) the forward kernels $K_t(u_{t-1}, u_t)$, (2) the backward kernels $L_{t-1}(u_t, u_{t-1})$ and (3) a Markov Chain move kernel $M$ (in the spirit of Section \ref{ResampleMove}). For the backward kernel we will use the approximation to the optimal one presented in section \ref{optimalBackwardKernels}; the forward kernel and the ``move'' kernel will be presented, respectively in sections \ref{sec:forward_kernel_linear} and \ref{sec:moveKernel}

\subsection{\textbf{Forward Kernel}} \label{sec:forward_kernel_linear}
For the forward kernel $K_t(\ubold_{t-1}, \ubold_t),$ if we can guarantee that any move from $\ubold_{t-1}$ to $\ubold_t$ will already be in $\mathcal{G}_{\textbf{U}_t}$ then we will not loose any particle in the mutation step, improving the efficiency of the algorithm. Since we are developing the sampling procedure in $[0,1]^d$ then, under the assumption that we can precisely characterize the constraint region $\mathcal{G}_{\textbf{U}_t}$ (ie, we can calculate $F_i^{-1}$ for all $i=1,...,d$) then we can propose a ``slice-sampling'' procedure for $K_t$.

The idea of this type of kernel is that we can first sample $d-1$ coordinates of the $\textbf{u}_t$ vector (chosen randomly) and then, conditional on these values, sample the last component constrained to an interval that will ensure that $\sum_{i=1}^d x_i > B_t$. In general these kernels will look like 
\begin{equation}
K_t(\textbf{u}_{t-1}, \, \textbf{u}_t) = \sum_{m=1}^d \left[ K_t^{(-m)}(\textbf{u}_{t-1,-m}, \textbf{u}_{t,-m}) K_t^{(m)}(u_{t-1,m}, u_{t,m}) \right] p_m,
\label{eq:mixtureMutationKernel}
\end{equation}
where $p_m$ is the probability of the $m$-th coordinate being the last one to be chosen,\\ $\textbf{u}_{t,-m}= (u_{t,1},...,u_{t,m-1},u_{t,m+1},...,u_{t,d})$ is the vector $\ubold_t$ without its $m$-th coordinate and $K_t^{(-m)}$ is the kernel that moves the $d-1$ dimensions of $\textbf{u}_{t-1,-m}$ to time $t$. Similarly, $K_t^{(m)}$ denotes the kernel that moves $u_{t-1,m}$ to $u_{t,m}$ ensuring that $\sum_{i=1}^d x_{t,i} > B_t$.

To guarantee the condition is satisfied, $K_t^{(m)}$ needs to be defined over $[B_t^u(m), 1]$, where 
\begin{equation}
B_t^u(m):= F_m^{-1}(B_t^x(m))
\label{eq:B_u}
\end{equation}
with
\begin{equation}
B_t^x(m) := \max\Bigg\{0, \,  B_t - \sum\limits_{\substack{i=1 \\ i\neq m}}^{d} F_i(u_{t,i}) \Bigg\}.
\label{eq:B_x}
\end{equation}

For simplicity, we can choose the last move to be uniformly distributed in $[B_t^u(m), 1]$, leading to
$$K_t^{(m)}(u_{t-1,m}, u_{t,m}) =  \frac{u_{t,m}}{1 - B_t^u(m)} \one_{\{ u_{t,m} \in [B_t^u(m), 1]\}} (u_{t,m}).$$

Again, for the sake of simplicity, we will only discuss the case where $K_t^{(-m)}$ consists of independent moves in each dimension, ie,
\begin{equation}
K_t^{(-m)}(\textbf{u}_{t-1}, \textbf{u}_t) = \prod\limits_{\substack{i=1 \\ i\neq m}}^{d}K_t^{(-m,i)}(u_{t-1,i}, u_{t,i}).
\label{eq:marginalMutationKernel}
\end{equation}
Moreover, it will be assumed that $p_m=1/d$, for all $m=1,...,d$.

\paragraph{\textbf{Uniform moves in} $\mathcal{G}_{\textbf{U}}$ }
The first (na\"ive) idea is to define the move in each component of $\textbf{u}$ as uniform, leading to a marginal kernels $K_t^{(-m,i)}(u_{t-1,i},  \, u_{t,i}) = u_{t,i} \one_{\{u_{t,i} \in [0,1] \}}(u_{t,i})$ and, consequently,
$$K_t(\textbf{u}_{t-1}, \textbf{u}_t) = \frac{1}{d} \sum_{m=1}^d \left( \prod\limits_{\substack{i=1 \\ i\neq m}}^{d} u_{t,i} \one_{\{u_{t,i} \in [0,1] \}}(u_{t,i})\right) \left( \frac{u_{t,m}}{1 - B_t^u(m)} \one_{\{ u_{t,m} \in [B_t^u(m), 1]\}} (u_{t,m})\right).$$
 
As we can see from the construction of this kernel, it is clearly independent of $u_{t-1}$ and the comments by the end of Section \ref{optimalBackwardKernels} apply, meaning that the problem reduces to a series of Importance Sampling problems.

\paragraph{\textbf{Global adaptive Beta moves in} $[0,1]^{d-1}$ }
One strategy to use the information contained in $\textbf{u}_{t-1}$ in the mutation step is to use the whole set of weighted particles at $t-1$ to estimate the parameters of the mutation kernel (subject to some restriction).

Since our kernels are defined in $[0,1]$ a reasonable idea is to use a global Beta kernel, in the sense that all particles at time $t-1$ will be mutated through the same kernel. To select the parameters of the Beta distribution we match the first two moments of the Beta distributions with its sample moments at time $t-1$. Formally, let us denote $\{ \textbf{u}_{t-1}^{(j)}, \, W_{t-1}^{(j)} \}_{j=1}^N$ the set of $N$ weighted particles at time $t-1$ and $K_t^{(-m, i)}(u_{t-1,i}, \, u_{t,i}) = Beta(u_{t,i}; \, \alpha_{t-1,i}, \, \beta_{t-1,i})$, where the RHS term denotes the density of a random variable $Y_{t-1,i}$ Beta distributed with parameters $\alpha_{t-1,i}$ and $\beta_{t-1,i}$ evaluated at $u_{t,i}$. Then matching the first two moments we have
{\small
\begin{align*}
\Exp[Y_{t-1,i}] &= \frac{\alpha_{t-1,i}}{\alpha_{t-1,i} + \beta_{t-1,i} } = \sum_{j=1}^N W_{t-1}^{(j)} u_{t-1,i}^{(j)} =: \widehat{\mu}_{t-1,i}, \\
	Var(Y_{t-1,i}) & = \frac{  \alpha_{t-1,i}\beta_{t-1,i}   }{   ( \alpha_{t-1,i} + \beta_{t-1,i})^2 (\alpha_{t-1,i} + \beta_{t-1,i} +1)   } \approx \widehat{\mu}^2_{t-1,i}  - \sum_{j=1}^N W_{t-1} \left( u_{t-1,i}^{(j)} \right)^2 : = \widehat{\sigma^2}_{t-1,i}
\end{align*}
}
and after some algebra we find
\begin{align*}
\widehat{\alpha}_{t-1,i} &=  \frac{(1- \widehat{\mu}_{t-1,i} )}{\widehat{\sigma^2}_{t-1,i} } \widehat{\mu}_{t-1,i}^2 \\
\widehat{\beta}_{t-1,i} &= \widehat{\alpha}_{t-1,i}\left(\frac{1}{\widehat{\mu}_{t-1,i}} -1 \right).
\end{align*}

Therefore an approximation for the mutation kernel at time $t$ and dimension $i$ -- as in (\ref{eq:marginalMutationKernel}) -- is given by
\begin{equation}
K_t^{(-m,i)}(u_{t-1,i}, \, u_{t,i}) \approx Beta(u_{t,i}; \, \widehat{\alpha}_{t-1,i}, \, \widehat{\beta}_{t-1,i}).
\label{eq:globalBetaKernel}
\end{equation}

\begin{remark} It is important to emphasize the kernel in (\ref{eq:globalBetaKernel}), when plugged into (\ref{eq:marginalMutationKernel}) and (\ref{eq:mixtureMutationKernel}), does not require any tuning and is not independent of $u_{t-1,i}$, since the parameters of the Beta distribution depend on these values.
\end{remark}

\begin{figure}%
\includegraphics[width=1\textwidth]{./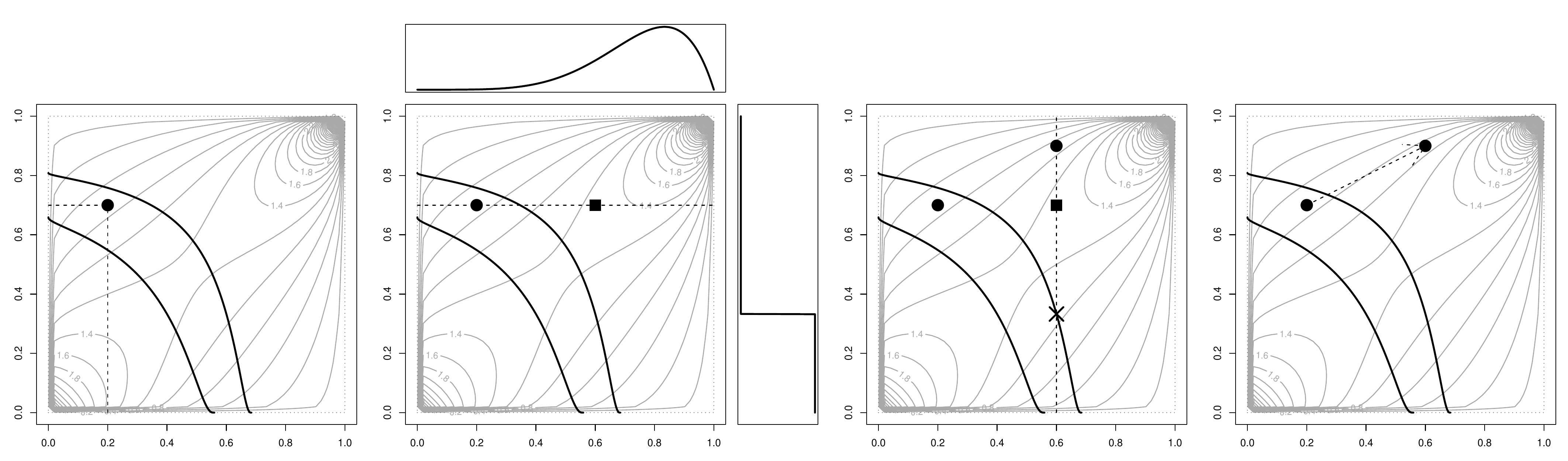}
\caption{\small Example of the Global Beta kernel for a Gumbel$(1.5)$ copula with Log-Normal marginals: $(\mu_1 = 0.6, \sigma_1 = 1.4)$, $(\mu_2=0.4, \sigma_2 =1)$. The boundary are such that $F_1^{-1}(u_1) + F_2^{-1}(u_2) > 2.25 \text{ and } 3.57$.}
\label{fig:beta_mutation}
\end{figure}

Figure \ref{fig:beta_mutation} exemplifies the mutation of one particle $u_{t-1} = (0.2,0.7)$ (which is in the $(t-1)$-th level set) to $u_t = (0.6,0.9)$ (which is in the $t$-th level set). The mutation starts moving the first coordinate of $u_{t-1}$ through a Beta distribution and then the second coordinate is moved following a uniform distribution defined in the appropriate region.

\subsection{\textbf{Markov Chain move kernel}} \label{sec:moveKernel}
Since the forward kernels designed in Section \ref{sec:forward_kernel_linear} ensure the new particles will satisfy the condition at level $t$ one possibility is to use the same kernel as a proposal in a M-H algorithm. The drawback would be that in higher dimensions the acceptance rate of the M-H would be extremely small. Instead, here we propose the usage of a Gibbs sampling algorithm, that should be always preferred when the full conditional densities are known.

Suppressing the dependence in $t$ in the vector $\textbf{u}$ and denoting $v^*(m) :=(u_1^*,..., u_m^*,$ $u_{m+1},...,u_{d})$ we have that the full conditional for a generic $m=1,...,d$ can be written as
\begin{align*}
	\pi_t(u_m^* \, | \, u_1^*,...,u_{m-1}^*,u_{m+1},...,u_d) & = \frac{\pi_t(u_1^*,...,u_m^*,u_{m+1},...,u_{d})}{\pi_t( u_1,...,u_{m-1},u_{m+1},...,u_d )} \\
	& \propto \pi_t(v^*(m)) \\
	& \propto c(v^*(m)) \one_{\left\{v^*(m) \in \mathcal{G}_{\textbf{U}_t} \right\}}(v^*(m)).
\end{align*}
Note that the full conditional distribution for the $m$-th coordinate of $\textbf{u}$ is a probability distribution for $u_m^*$. On the other hand, since $u_1^*,...,u_{m-1}^*,u_{m+1},...,u_d$ are fixed, we can rewrite the condition $[ v^*(m) \in \mathcal{G}_{\textbf{U}_t} ] $ as $u_m^* \in [B^u(m), 1]$, with $B^u(m)$ as in (\ref{eq:B_u}).

To sample $u_m^*$ from its full conditional distribution one can use a univariate slice sampler algorithm (see \cite{neal2003slice}), which only requires the full conditional target up to a normalizing constant. In Figure \ref{fig:slice_sampler} we present an example of such a Markov Chain move. On the RHS, the initial point is $(u_1, \, u_2) = (0.9, \, 0.3)$. First the support of the full conditional distribution is calculated, ie, $B^u(1) = 0.6$ and plotted as a cross on the second figure. Then, a value $u_1^*=0.8$ is sampled from $\pi(u_1 \, | \, u_2=0.3)$ (a square in the second plot). On top of the second plot we present the full conditional distribution (truncated on the left at $B^u(1)=0.6$. For this value we find that $B^u(2) = 0$ and the support of the next full conditional distribution is $[0,1]$ (the actual density is plotted vertically). The second coordinate $u_2^* =0.7$ is then sampled from $\pi(u_2 \, | \, u_1^*=0.8)$. In the last plot we have the final move, from $(u_1, \, u_2) = (0.9, \, 0.3)$ to $(u_1^*, \, u_2^*) = (0.8, \, 0.6)$.

\begin{figure}%
\includegraphics[width=1\textwidth]{./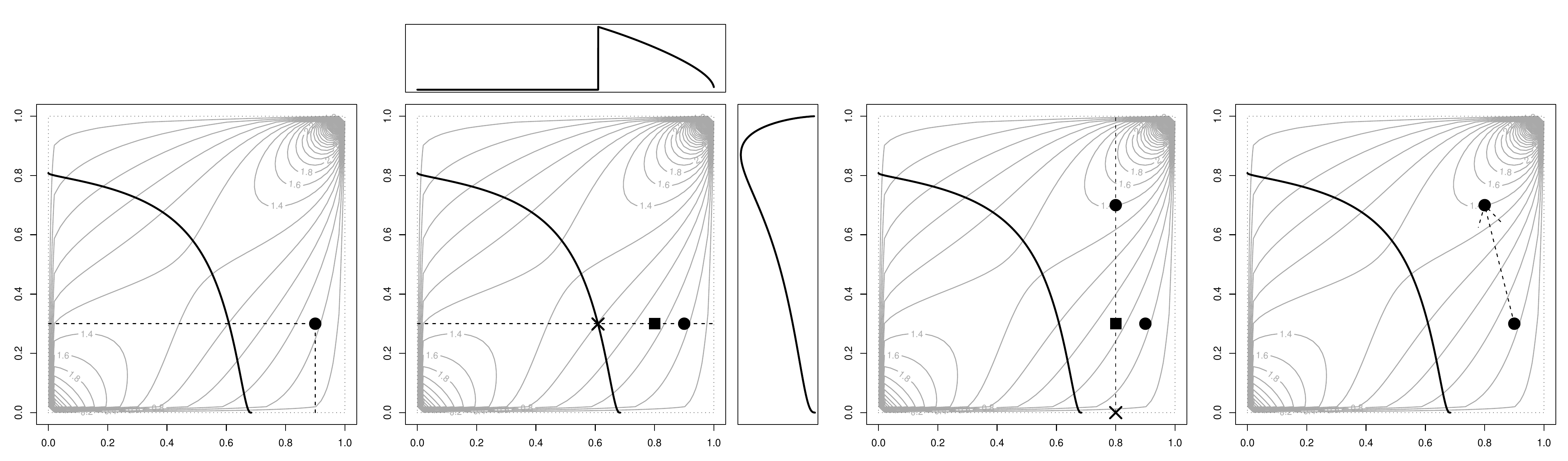}
\caption{\small Example of the move kernel for a Gumbel$(1.5)$ copula with Log-Normal marginals: $(\mu_1 = 0.6, \sigma_1 = 1.4)$, $(\mu_2=0.4, \sigma_2 =1)$. The boundary is such that $F_1^{-1}(u_1) + F_2^{-1}(u_2) > 3.57$.}
\label{fig:slice_sampler}
\end{figure}

\section{Case Studies}
\label{sec:Examples}
In this section we present some simulation examples of the performance of the proposed Copula-Constrained SMC Sampler algorithm. For all the simulations and density calculations we made extensive use of the {\ttfamily R}-package {\ttfamily copula} \cite{hofert2014copula}.

The aim of all methods presented here will be to calculate conditional expectations of the form
\begin{equation}
\Exp\left[X_k \, | \, \sum_{i=1}^d X_i > VaR_\alpha(\sum_{i=1}^d X_i) \right], \quad \text{ for } k=1,...,d
\label{eq:target_exp}
\end{equation}
where the $\alpha$ quantile ($VaR_\alpha$) is assumed to be perfectly known (see comments bellow) and \\
$\alpha \in (0.1, 0.2, ..., 0.9, 0.95, 0.99, 0.995, 0.999, 0.9995, 0.9999, 0.99995)$. Moreover, as in \cite{arbenz2014importance}, we will assume the marginal distributions of $\textbf{X}$ are Log-Normal with
$$X_i \sim LN(10-0.1i, \, 1+0.2i), \quad i=1,...,d.$$

For all the examples presented here, since we are not able to express the $VaR_\alpha$ of the aggregated process in a closed form, the first step is to to calculate a reliable approximation of this quantity, for each level $\alpha$ of interest. This is done through a Monte Carlo simulation of the loss vector $\X = (X_1,...,X_d)$ -- from which we can compute the aggregate loss -- of fixed size $N_q = 10,000,000$. Given a particular sample of size $N_q$, all the quantiles of the aggregate loss can be calculated. This process is, then, repeated for $N_{rep}= 500$ times, and the $\alpha$-quantile is set as the average of the $\alpha$-quantiles over all the $500$ runs. The reader should note that the estimate of extreme quantiles (for example $\alpha = 0.9999$) will be less precise than the estimate of lower quantiles (such as $\alpha = 0.3$), but for the purpose of comparing the proposed algorithm with competing strategies this is irrelevant, as long as the quantile used in the conditioning argument of (\ref{eq:target_exp}) is the same for all methods.


After calculating the quantiles for all levels $\alpha$ (which are, from now on, assumed to be exact) the baseline comparison values for the expectations in (\ref{eq:target_exp}) are calculated as follows. For each level $\alpha$ we sample as many loss vectors $\X$ as necessary in order to have a sample of size $N_{MC} = 1,000$ satisfying the condition $\sum_{j=1}^d X_i > VaR_\alpha(\sum_{i=1}^d X_i)$. At this point we note that this naive Monte Carlo sampling strategy is very inefficient and would never be utilized in practice, due to the huge computational cost, but it provides us reference comparison to our more efficient SMC Sampler. To perform these simulations we required the usage of hundreds of cores from UCL Legion High Performance Computing Facility.

The expectations in (\ref{eq:target_exp}) are, then, estimated as
$$\widehat{\varphi}_{MC} = \frac{1}{N_{MC}} \sum_{j=1}^{N_{MC}} X_i^{(j)},$$
where $\X^{(j)}, \, j=1,...,N_{MC}$ are the samples satisfying the condition. This procedure is repeated $N_{rep}=500$ and the Monte Carlo estimate $\overline{\widehat{\varphi}}_{MC}$ is taken as the average over these $N_{rep}$ repetitions, as follows
$$\overline{\widehat{\varphi}}_{MC} = \frac{1}{N_{rep}} \sum_{k=1}^{N_{rep}} \widehat{\varphi}_{MC}^{(k)},$$
where $\widehat{\varphi}_{MC}^{(k)}$ stands for the estimate (using $N_{MC}$ particles) from the $k$-th run (out of $N_{rep}$). Analogously we can also define the variance of the MC estimator, $Var(\widehat{\varphi}_{MC})$. Moreover, we will denote by $N_{SMC}$ the number of particles used in the SMC algorithm, and by $Var(\widehat{\varphi}_{SMC})$ the variance of its estimate, also calculated using $N_{rep}$ runs.

One may observe that the expected number of samples in the Monte Carlo scheme in order to have $N_{MC}$ samples satisfying the $\alpha$ condition is equal to $M_{MC} = N_{MC}/(1-\alpha)$, which can be prohibitive if $\alpha$ is very close to 1.

For all the examples, the efficiency of the algorithm will be measured with respect to the Variance Reduction when compared with a simple Monte Carlo scheme (properly normalized). More formally, if the SMC algorithm uses $T$ levels to approximate (\ref{eq:target_exp}) then we will denote by Variance Reduction the ratio:
\begin{equation}
\text{Variance Reduction} = N_{MC} \times Var(\widehat{\varphi}_{MC}) \Bigg/ T \times N_{SMC} \times Var(\widehat{\varphi}_{SMC}).
\label{eq:variance_reduction}
\end{equation}
We note this is a conservative measure of the Variance Reduction, as typically practitioners may only use in the denominator $N_{SMC} \times Var(\widehat{\varphi}_{SMC})$. In addition, the Variance Reduction must be analysed in conjunction with the estimation bias. For this purpose we will study the Relative Bias, defined as the relative difference from the SMC estimate to the MC estimate (assumed to be the truth, due to the very large sample sizes taken):
$$\text{Relative Bias} = \frac{\widehat{\varphi}_{SMC} - \widehat{\varphi}_{MC}}{\widehat{\varphi}_{MC}}$$

If the level of interest of the expectations in (\ref{eq:target_exp}) is, for example, $\alpha =0.999$ then, the SMC algorithm designed here will use as intermediate levels the quantiles $\alpha =0.1, 0.2, ..., 0.9, 0.95, 0.99, 0.995$. Although expectations conditional on quantiles at lower levels, such as $0.1,...,0.9$ are not of direct interest for risk managers, as a by-product of the SMC algorithm, weighted samples from all the intermediate levels will be created and all the conditional expectations can be estimated.

\subsection{\textbf{Clayton copula dependence between risk cells}} \label{sec:clayton}
In this first study we assume we have a simple business unit and risk cell structure in which it is assumed that the dependence is on the annual losses and given simply by a Clayton copula. We first study a representative simple case, with dimension $d=5$ (see Definition \ref{def:arch}) and investigate the behaviour of the proposed algorithm for different parameters values with fixed number of particles $N_{SMC}=250$. To choose the parameters of interest we set the multivariate coefficient of (lower) tail dependence $\lambda_l$ (see \cite{de2012multivariate} or Definition \ref{def:mult_tail_dep} and Figure \ref{fig:clayton_lower_tail_dep}) to be approximately equals to $0.25, \, 0.5, \, 0.75 \text{ and } 0.9$, which led to $\theta= 0.16, \, 0.33, \, 0.78 \text{ and } 2.12$ (see Figure \ref{fig:clayton_lower_tail_dep}). To compare with the results presented in \cite{arbenz2014importance}, the parameter value $\theta = 1$ is also considered.

On Figure \ref{fig:clayton_bias_var}, we present the Relative Bias (top row) and Variance Reduction (bottom row) for all the range of different copula parameters and quantile levels. For ease of presentation, only three expectations are shown, namely the first marginal ($i=1$), the last marginal ($i=5$) and the sum of all marginal expectations (which is precisely the Expected Shortfall for the aggregated loss). From the Relative Bias analysis one can see that, regardless of the copula parameter and quantile levels, the estimation error is always smaller than $4\%$ (in absolute value).

Since the estimates are unbiased, it makes sense to look at the Variance Reduction set of plots (bottom row of Figure \ref{fig:clayton_bias_var}). In the vertical axis of the plot the $\log_{10}$(Variance Reduction) is presented, meaning that, for example, the variance of the SMC algorithm is $10^{1.17} \approx 15$ times smaller than the MC scheme when $\theta=2.12$ and $\alpha=0.999$. The horizontal line at $0$ defines the threshold where the SMC method outperforms a simple Monte Carlo: when the Variance Reduction is bellow the line the MC variance is smaller. As one should expect, for lower quantile levels a simple MC scheme should be preferred over the SMC method, since the condition in (\ref{eq:target_exp}) can be easily satisfied with a reasonably small sample size. On the other hand, as soon as the conditioning event becomes rarer, the variance of the simple Monte Carlo scheme starts to increase polynomially fast when compared with the variance of the proposed SMC algorithm.

The rarer the conditioning event the more computationally efficient it becomes to use the SMC Sampler method proposed. As previously mentioned, when $\alpha \approx 1$ the number of Monte Carlo samples required in order to generate one sample satisfying the conditioning event increases like $1/(1-\alpha)$. On the contrary, the SMC sampler is constrained to always use a fixed number of particles, independently of the rareness of the event. This is a significant advantage of such an approach.

\begin{figure}
\begin{center}
\includegraphics[width=0.4\textwidth]{./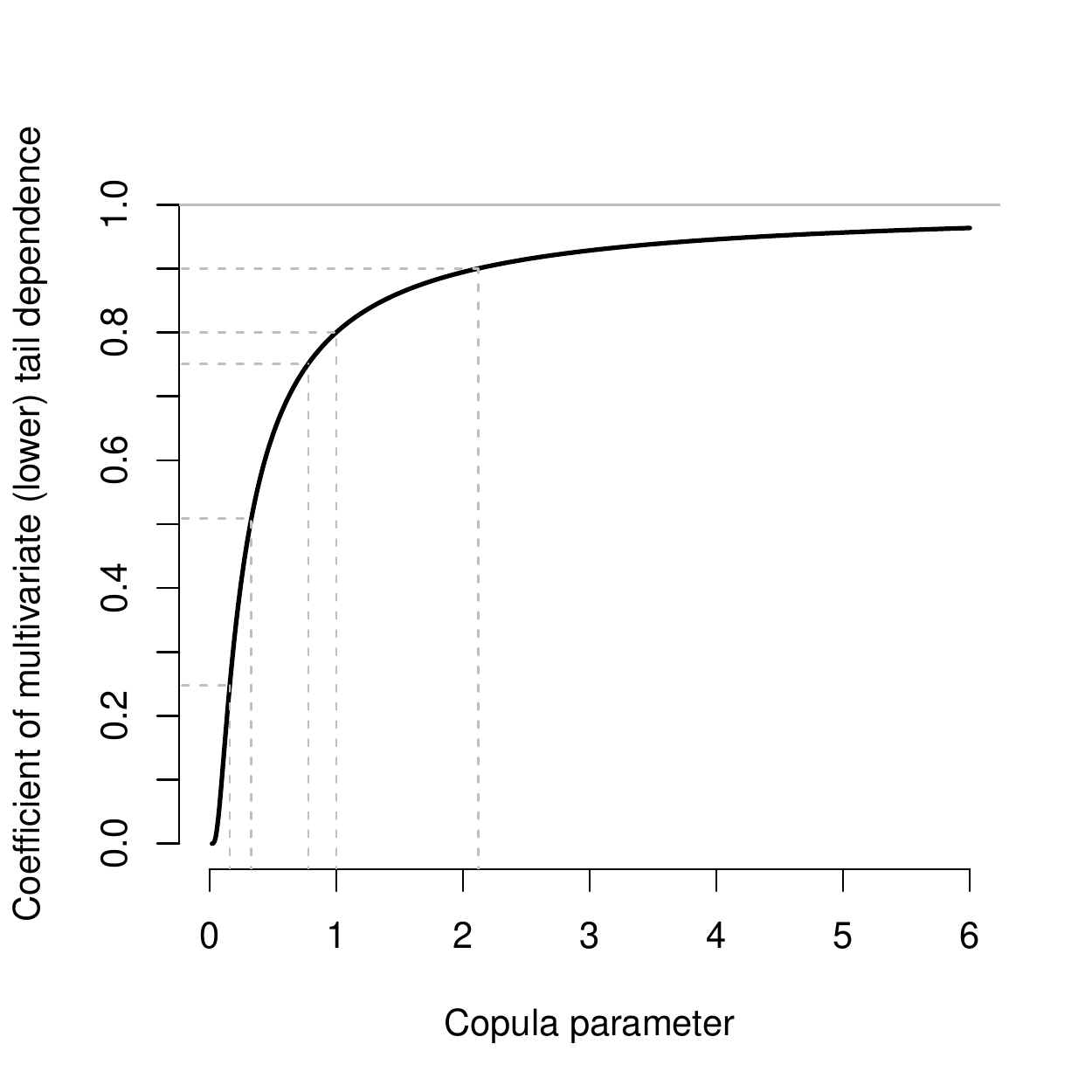}
\end{center}
\caption{Coefficient of multivariate lower tail dependence for a $5$-dimensional Clayton copula.}
\label{fig:clayton_lower_tail_dep}
\end{figure}

\begin{figure}
\begin{center}
\includegraphics[width=1\textwidth]{./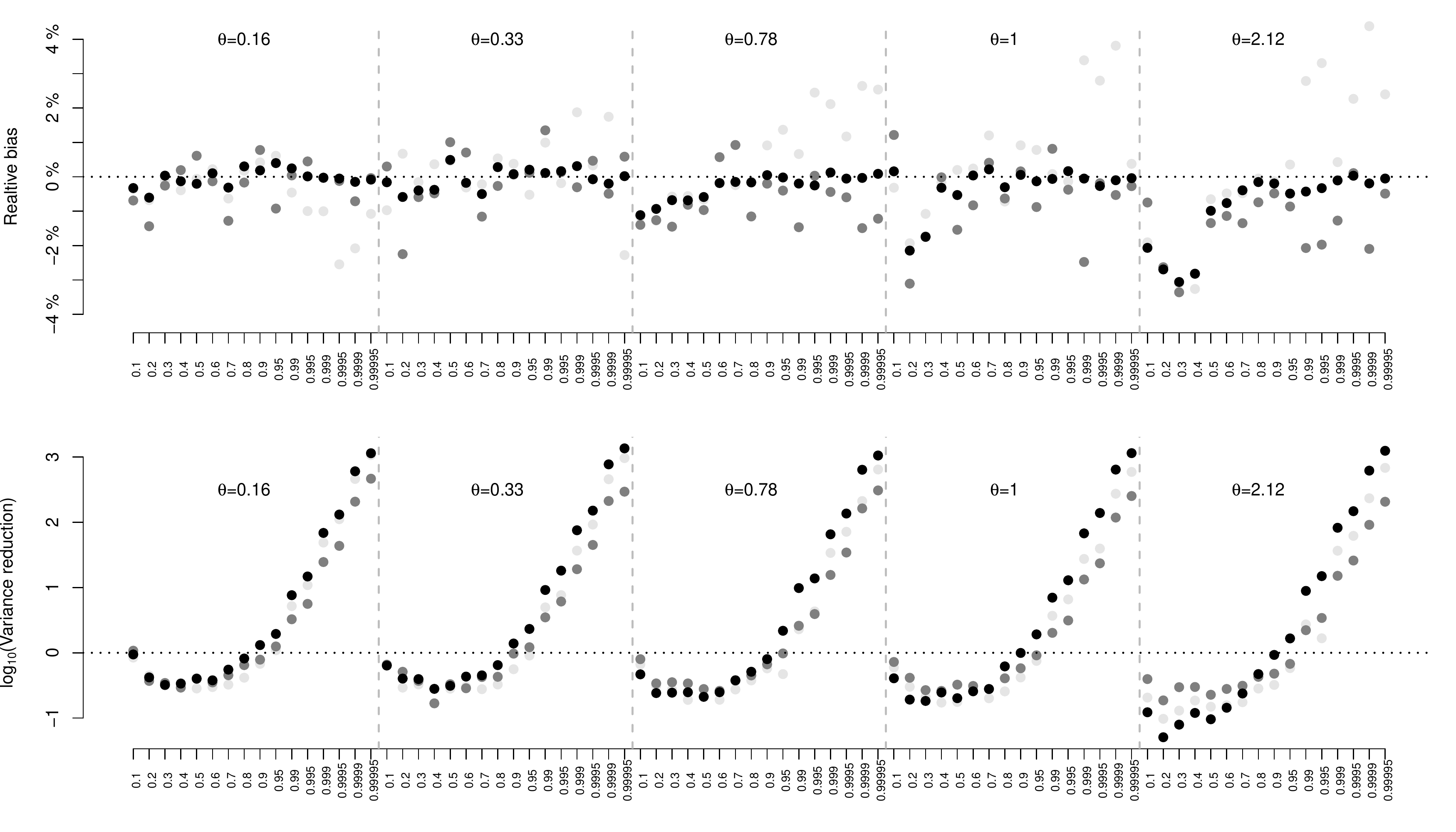}
\end{center}
\caption{Relative Bias (top) and Variance Reduction (bottom) for the $5$-dimensional Clayton copula using the SMC algorithm. Using the notation from (\ref{eq:target_exp}), {\color{light-gray}$\bullet$} Marginal for $i=1$, {\color{mid-gray}$\bullet$} Marginal for $i=5$, {\color{black}$\bullet$} Sum of all the marginal conditional expectations (Expected Shortfall).  }
\label{fig:clayton_bias_var}
\end{figure}

For the ACH Importance Sampling algorithm of \cite{arbenz2014importance}, discussed in Section \ref{sec:ACH}, we follow the suggestion proposed in the original work involving the use of a discrete version of the optimal mixing distribution $F_\Lambda$ (see (\ref{eq:f_v_arbenz})) with mass concentrated on the following 20 points: $x_k= 1-0.5^k$, $k=1,...,k$. For a fixed quantile level $\alpha$ and parameter $\theta$, the calibration of $F_\Lambda$ follows the proposed procedure in \cite{arbenz2014importance}, Section 6.1 which uses the stop-loss as the objective function
$$\widetilde{\Psi}(\textbf{u}) = \max\left\{ \sum_{i=1}^d F_i^{-1}(u_i) - VaR_{\alpha}\left(\sum_{i=1}^d F_i^{-1}(u_i) \right), \, 0 \right\}.$$

Whilst the SMC algorithm is asymptotically unbiased (although it can be seen from Figure \ref{fig:clayton_bias_var} the bias can be negligible even for finite $N_{SMC}$) the IS - ACH is unbiased for any finite sample size $N_{IS} < \infty$. Therefore it is not necessary to analyse the Relative Bias of the method. 

On the other hand, following the notation on Section \ref{sec:ACH} for a fixed parameter $\theta$ a new efficiency measure can be studied as a function of $\alpha$. We will denote by $P_{IS}(\alpha)$ the ``percentage of particles with non-zero weight'' for the $\alpha$ quantile. Formally this quantity is defined as
\begin{equation}
P_{IS}(\alpha) = \frac{\Exp[\widetilde{N}]}{\Exp[N_\textbf{V}]N_{IS} },
\label{eq:M_IS}
\end{equation}
where $N_{IS}$ is the desired sample size of the algorithm and $\widetilde{N}$ and $\Exp[N_\textbf{V}]$ are, respectively, the number of particles with non-zero weight and the number of draws in the rejection algorithm in order to have one sample from $F_\textbf{V}$ (see Section \ref{sec:ACH}). Intuitively we should expect some Variance Reduction if, and only if, the quantity $P_{IS}(\alpha)$ is larger than $1-\alpha$.

As in the analysis made for the SMC algorithm, for the IS-ACH we also look at the (rescaled) Variance Reduction. To take into account the rejection steps in the algorithm we worked with the following Variance Reduction formula
\begin{equation}
\text{Variance Reduction} = N_{MC} \times Var(\widehat{\varphi}_{MC}) \Bigg/ \Exp[N_\textbf{V}] \times N_{IS} \times Var(\widehat{\varphi}_{IS}).
\label{eq:varRed_IS}
\end{equation}

\begin{figure}
\begin{center}
\includegraphics[width=1\textwidth]{./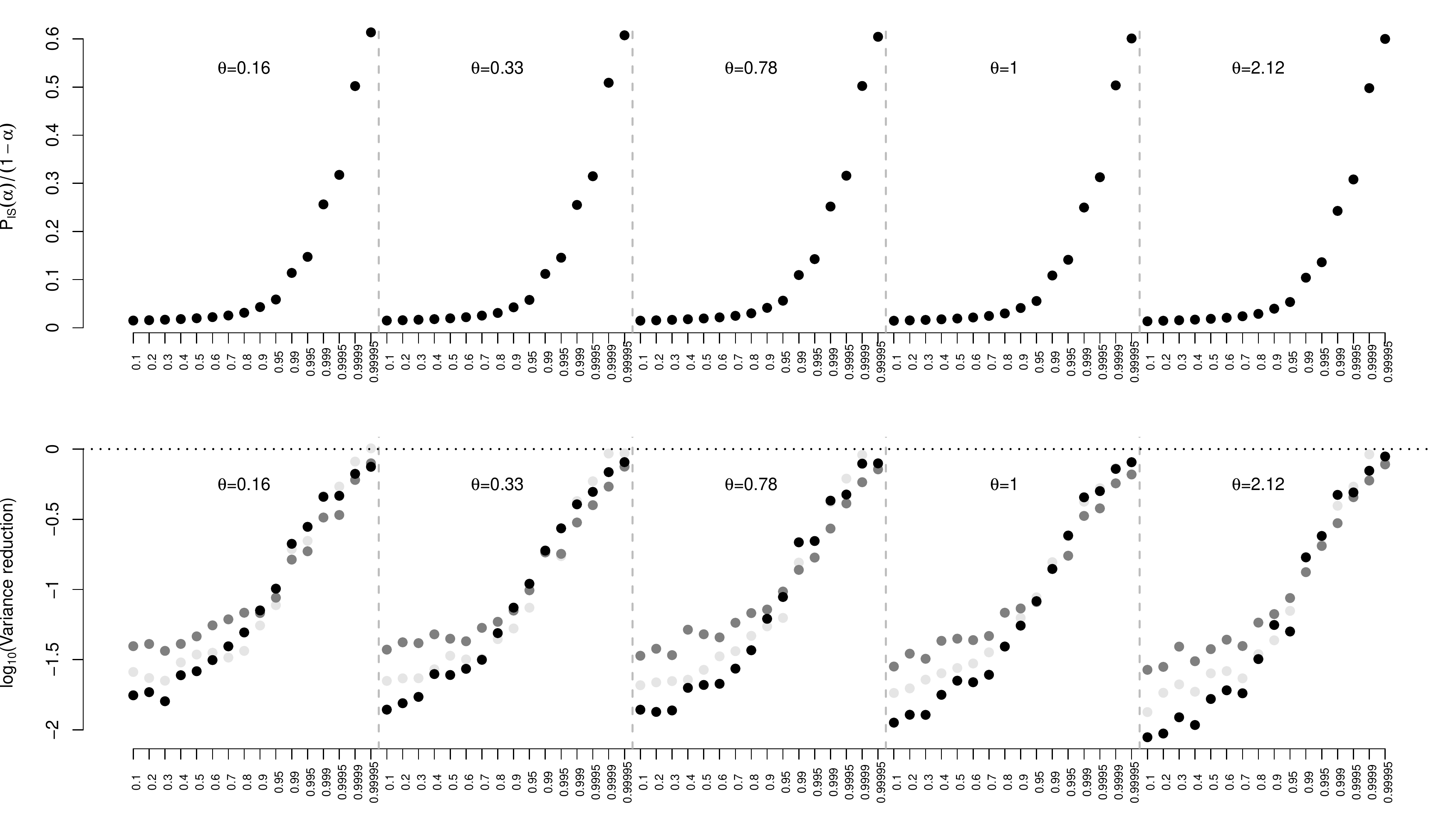}
\end{center}
\caption{Ratio between the percentage of particles with non-zero weight and $1-\alpha$ (top) and Variance Reduction (bottom) for the $5$-dimensional Clayton copula using the IS - ACH algorithm. Using the notation from (\ref{eq:target_exp}), {\color{light-gray}$\bullet$} Marginal for $i=1$, {\color{mid-gray}$\bullet$} Marginal for $i=5$, {\color{black}$\bullet$} Sum of all the marginal conditional expectations (Expected Shortfall).  }
\label{fig:clayton_IS_eff_var}
\end{figure}

From Figure \ref{fig:clayton_IS_eff_var} (top) we can see that the percentage of particles with non-zero weight, $P_{IS}(\alpha)$, is always smaller than the $1-\alpha$, indicating an inefficiency of the IS - ACH. This inefficiency is verified in the bottom of the same figure, where the scaled Variance Reduction (as of \ref{eq:varRed_IS}) is presented. As in the SMC case, the Variance Reduction factor increases as a function of $\alpha$, but in the IS-ACH case it is always smaller than 1. Although this is the case, we can expect the method to be efficient (in the Variance Reduction sense) as we get even closer to $\alpha = 1$.

\subsection{\textbf{Gumbel copula dependence between risk cells}}
In the second example we analyse the impact of the dimension in the estimation of conditional expectations when the copula is assumed to be from the Gumbel family (see Definition \ref{def:arch}). In this case we have chosen one parameter value $(\theta = 1.25)$ in order to have values for the coefficient of multivariate upper tail dependence ($\lambda_u$ from \cite{de2012multivariate} and Definition \ref{def:mult_tail_dep}) ranging from very mild dependence $(\lambda_u \approx 0.25)$ up to very strong dependence $(\lambda_u \approx 0.9)$ in a highly constrained copula density, ie, a single Gumbel copula in up to 7 dimensions.

The SMC algorithm was studied for examples including dimensions $d=2,3,...,7$ with $N_{SMC}=250$ particles and the results are presented on Figure \ref{fig:gumbel_bias_var}. From the top row we can see the Relative Bias of the conditional expectations for low dimensional copulas (eg, $d=2$ or $d=3$) is well behaved, being at most $5\%$ of the true (Monte Carlo) value when $d=2$ for all quantiles, but when the dimensionality of the problem increases a larger bias is observed. This is expected, as a single Gumbel copula in 7 dimensions, for instance, is highly constrained and its mass is mostly concentrated in a small area of the upper right quadrant of the 7-dimensional hypercube $[0,1]^7$. In the worst case, for the first marginal of a $d$ dimensional copula the Relative Bias reaches more than $40\%$ of the true value. To reduce this bias, one must increase the number of particles in the SMC Sampler. Here we have selected a very conservative set of $N_{SMC}=250$ particles. Next we studied the bias reduction as the number of particles increases, verifying the asymptotic unbiasedness of the SMC Sampler when $N_{SMC} \rightarrow \infty$. 

From the bottom row of Figure \ref{fig:gumbel_bias_var} we can see that in all dimensions presented the SMC method is highly effective regarding decreasing the variance of the estimates when the quantile is larger than $0.999$ but, as in the Clayton case, it is less efficient than a simple MC when the quantile is low.

\begin{figure}
\begin{center}
\includegraphics[width=0.4\textwidth]{./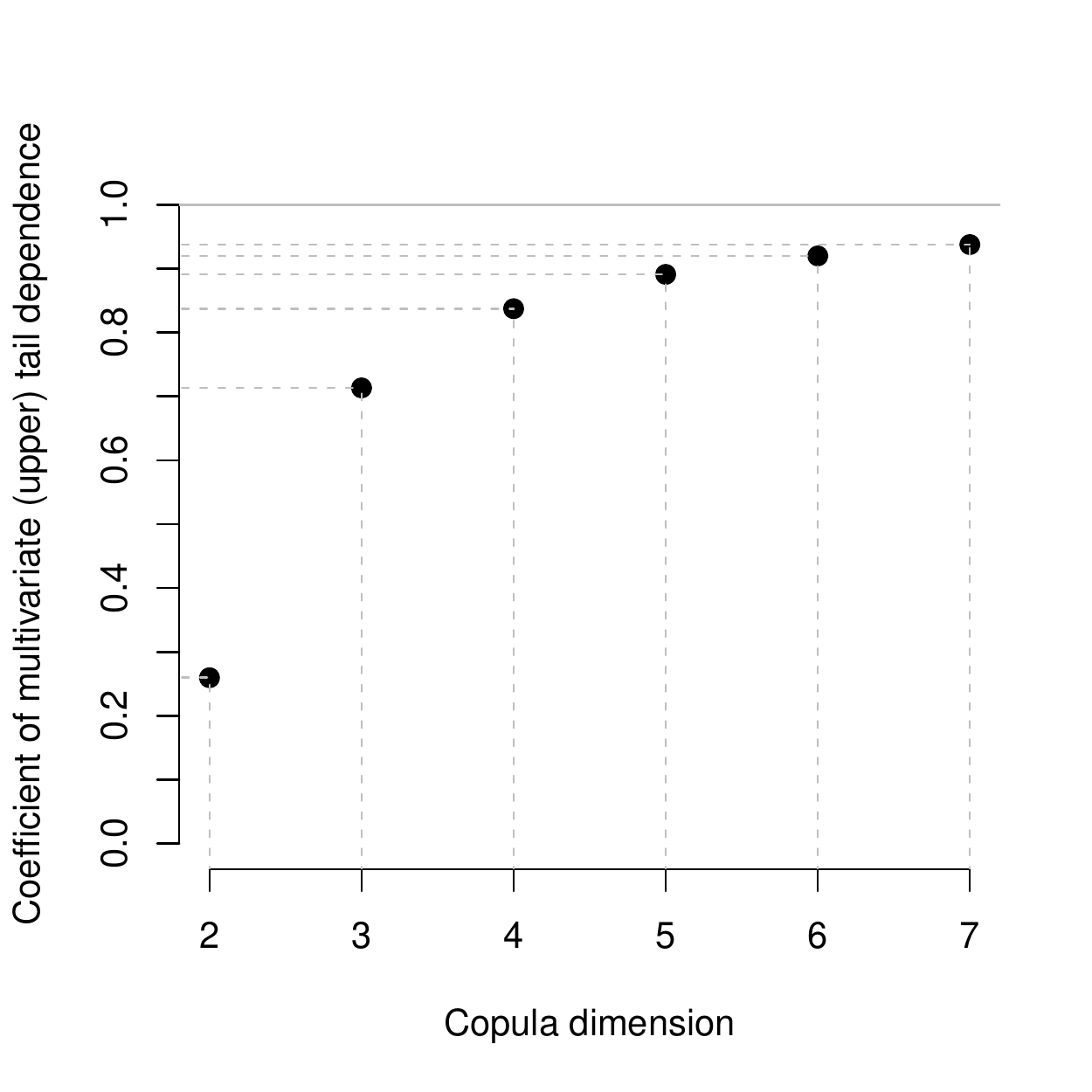}
\end{center}
\caption{Coefficient of multivariate upper tail dependence for a Gumbel(1.25) copula.}
\label{fig:gumbel_upper_tail_dep}
\end{figure}

\begin{figure}
\begin{center}
\includegraphics[width=1\textwidth]{./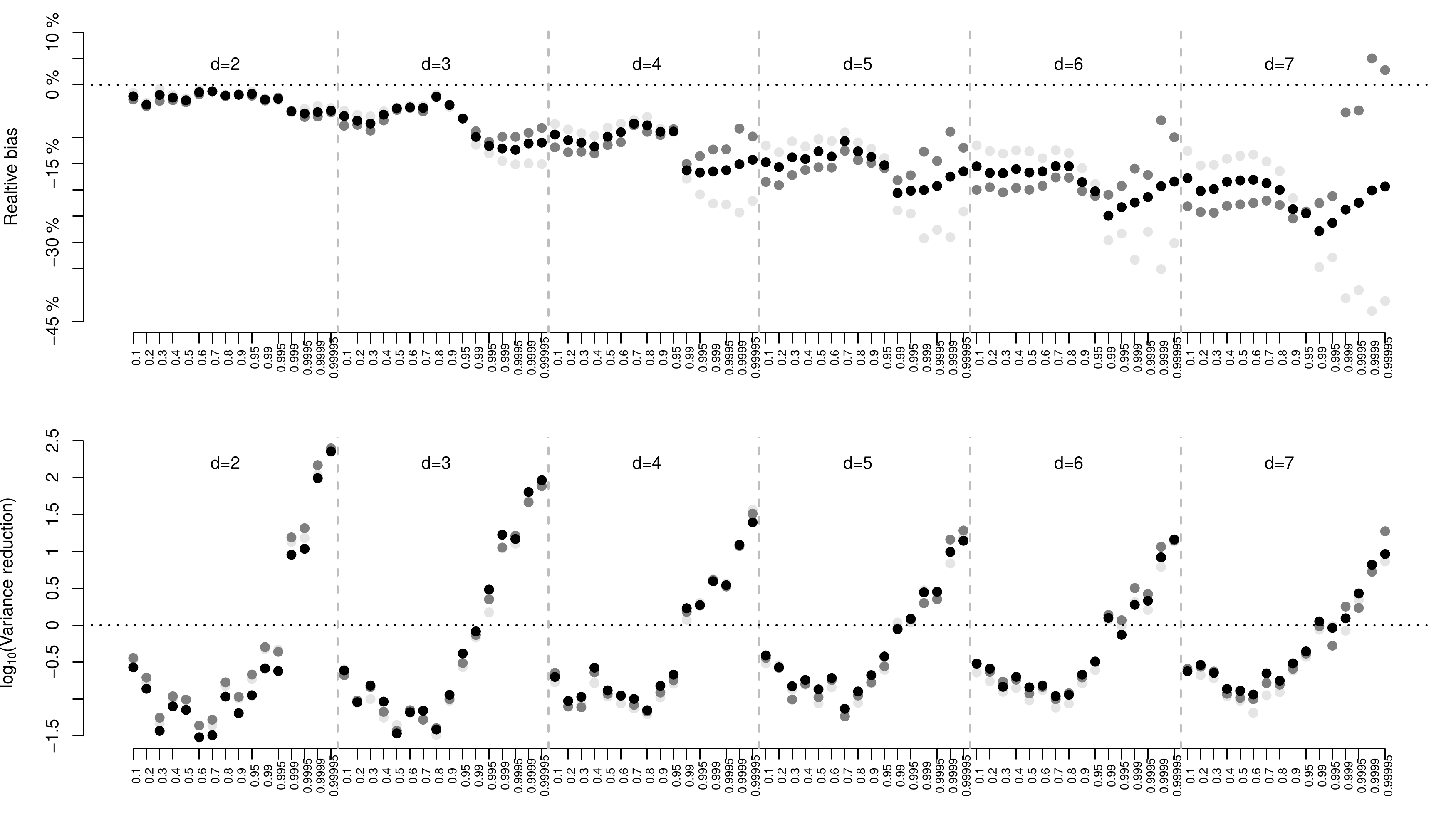}
\end{center}
\caption{Relative Bias (top) and Variance Reduction (bottom) for the Gumbel(1.25) copula using the SMC algorithm. Using the notation from (\ref{eq:target_exp}), {\color{light-gray}$\bullet$} Marginal for $i=1$, {\color{mid-gray}$\bullet$} Marginal for $i=d$, {\color{black}$\bullet$} Sum of all the marginal conditional expectations (Expected Shortfall).  }
\label{fig:gumbel_bias_var}
\end{figure}

Even though the bias involved in the SMC procedure is large for dimensions larger than $d=3$, Figure \ref{fig:gumbel_bias_var_large_N} shows that one can decrease the absolute bias by increasing the sample size used in the SMC algorithm. For example, for the first marginal in $6$ dimensions the Relative Bias goes from $-35\%$ to $-30\%$ when the number of particles increases from $N_{SMC}=250$ to $N_{SMC}=1,000$. The drawback of the increase in the sample size is that the method gets less effective in the Variance Reduction sense, although even in the case where $d=6$ and $N_{SMC}=1,000$ we still observe some humble improvement in the variance.

\begin{figure}
\begin{center}
\includegraphics[width=1\textwidth]{./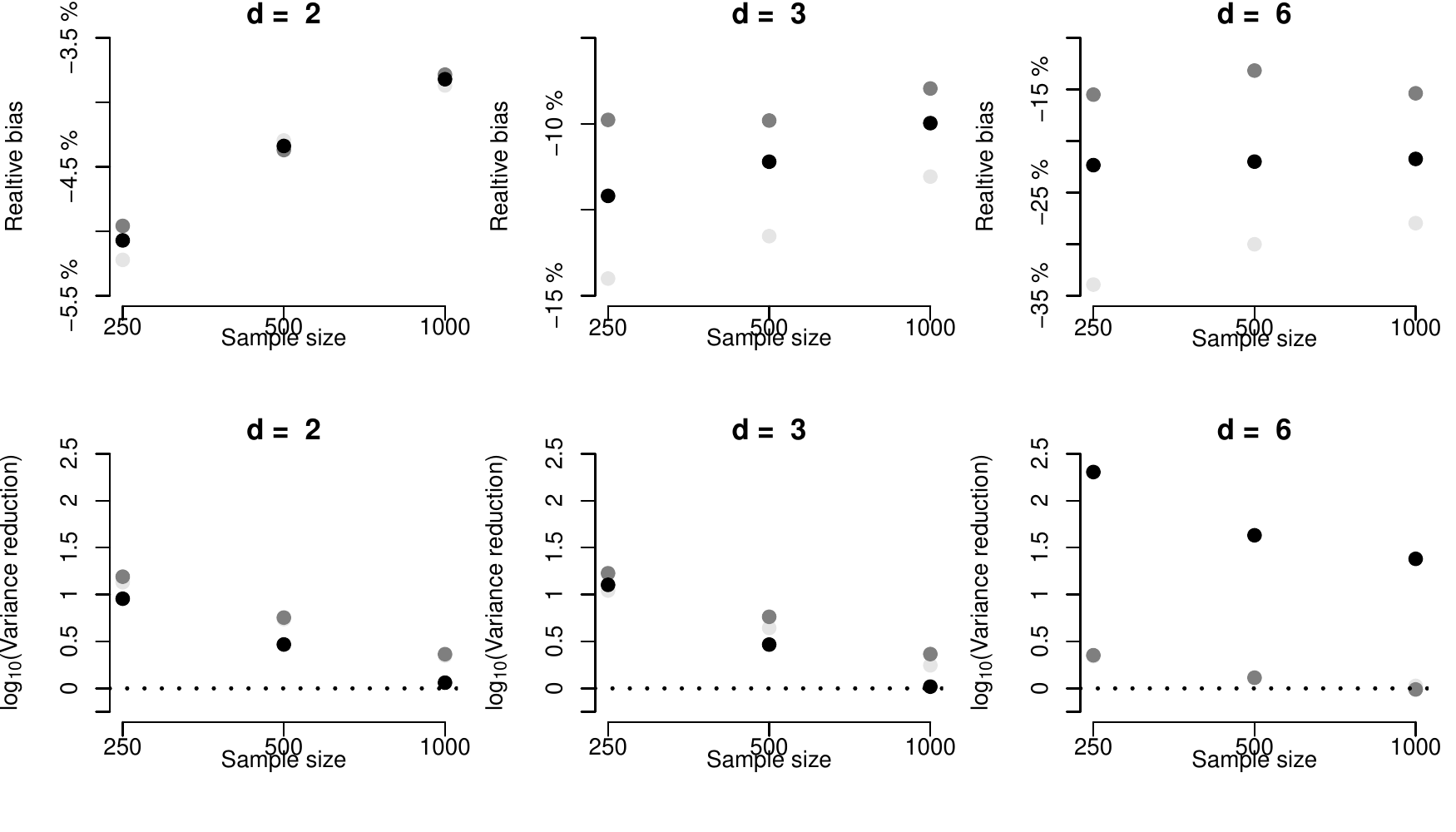}
\end{center}
\caption{Relative Bias (top) and Variance Reduction (bottom) for the Gumbel(1.25) copula using the SMC algorithm with $N_{SMC} = 250, \, 500$ and $1,000$ particles. Using the notation from (\ref{eq:target_exp}), $\alpha = 0.999$, {\color{light-gray}$\bullet$} Marginal for $i=1$, {\color{mid-gray}$\bullet$} Marginal for $i=d$, {\color{black}$\bullet$} Sum of all the marginal conditional expectations (Expected Shortfall).  }
\label{fig:gumbel_bias_var_large_N}
\end{figure}

It is important to note that the estimation of expectations of the form (\ref{eq:target_exp}) in the Gumbel model is extremely challenging, specially due to the fact that, differently from the Clayton copula, the Gumbel copula possess an intricate dependence structure near the upper right corner of the unit cube. In this case the exploration of the $[0,1]^d$ needs to be done in a (even more) careful way, in order to avoid regions with low probability density. In practice it will also be important to consider the design of the mutation kernel in the SMC Sampler algorithm if higher dimensions are considered.

\subsection{\textbf{Hierarchical Clayton copula dependence between Business Units and Event Types}}
As a final example, we show the utilization of the SMC Sampler method in a hierarchical allocation process (as in Section \ref{sec:hierarchical_allocation}). As a toy model, we assume a bank is divided in two different Business Units (B.U.). For the first one it is assumed that Operational Losses are due to three different Event Types (E.T.), while for B.U.2 losses may come from four different E.T.. For the simulation and also density calculation, in this example we made use the R-package {\ttfamily HAC} \cite{okhrin2014}.

This bank structure can be conveniently modelled with the help of a Hierarchical Archimedean Copula (HAC), also known as Nested Archimedean Copula (see \cite{okhrin2014} and references therein). For this example we have chosen a Hierarchical Archimedian Copula as in Figure \ref{fig:HAC}. The dependence of the three E.T.'s on B.U.1 is given by a Clayton copula with parameter $\theta_1 = 0.75$, while within the $4$ E.T.'s in B.U.2 the dependence is modelled through a Clayton copula with parameter $\theta_2 = 1$. Moreover, any loss on B.U.1 is related to losses in B.U.2 through a Clayton copula with parameter $\theta_0 = 0.5$. The copula for this model is given by
$$C(\ubold) = C_0(C_1(\ubold_1; \, \theta_1), \, C_2(\ubold_2; \, \theta_2); \, \theta_0),$$
where $C( \, \cdot \, ; \, \theta)$ denotes a Clayton copula with parameter $\theta$ and $\ubold = (u_1,...,u_7)$, $\ubold_1 = (u_1,u_2,u_3)$, $\ubold_2 = (u_4,u_5, u_6, u_7)$. The reader should note that $C_0( \, \cdot \, ; \, \theta_0)$ is not a copula between aggregated losses. It is also important to stress the fact that this choice of parameters will ensure the Hierarchical Copula is a well defined copula, since all the members are from the same family and the parameters are decreasing from the highest to the lowest level (see, for example, \cite{hofert2010sampling}).

\begin{figure}
\begin{center}
\includegraphics[width=0.6\textwidth]{./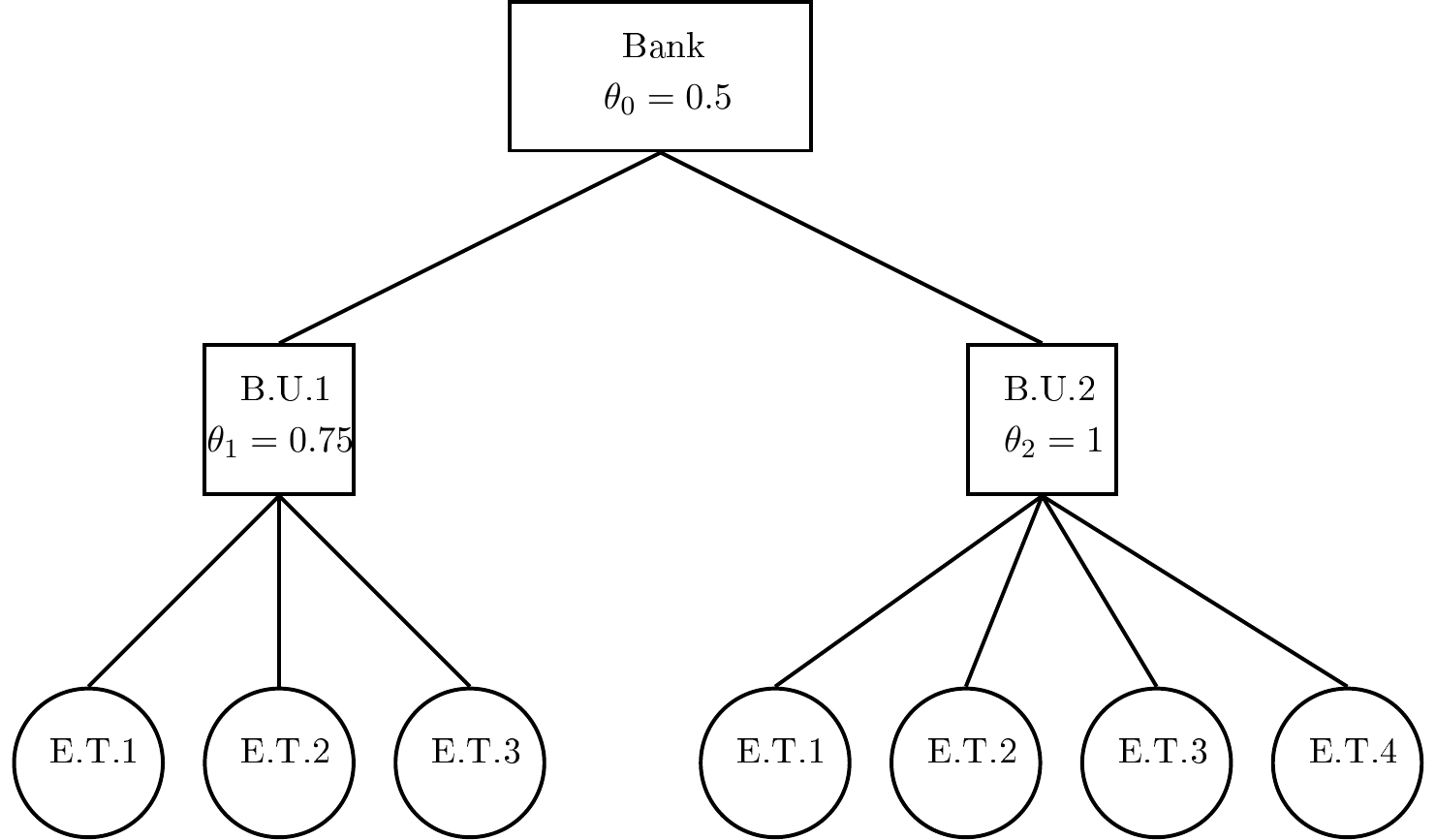}
\end{center}
\caption{Hierarchical Clayton Copula.}
\label{fig:HAC}
\end{figure}

As in the non-nested Clayton case, from Figure \ref{fig:HAC_bias_var} we can see the SMC Sampler procedure is unbiased for $N_{SMC}=250$, with Relative Bias smaller than $5\%$ in absolute terms. The method also decreases the variance of the estimates when the quantile level in the conditional event is larger than $\alpha = 0.99$, from where we can state its effectiveness.

\begin{figure}
\begin{center}
\includegraphics[width=0.9\textwidth]{./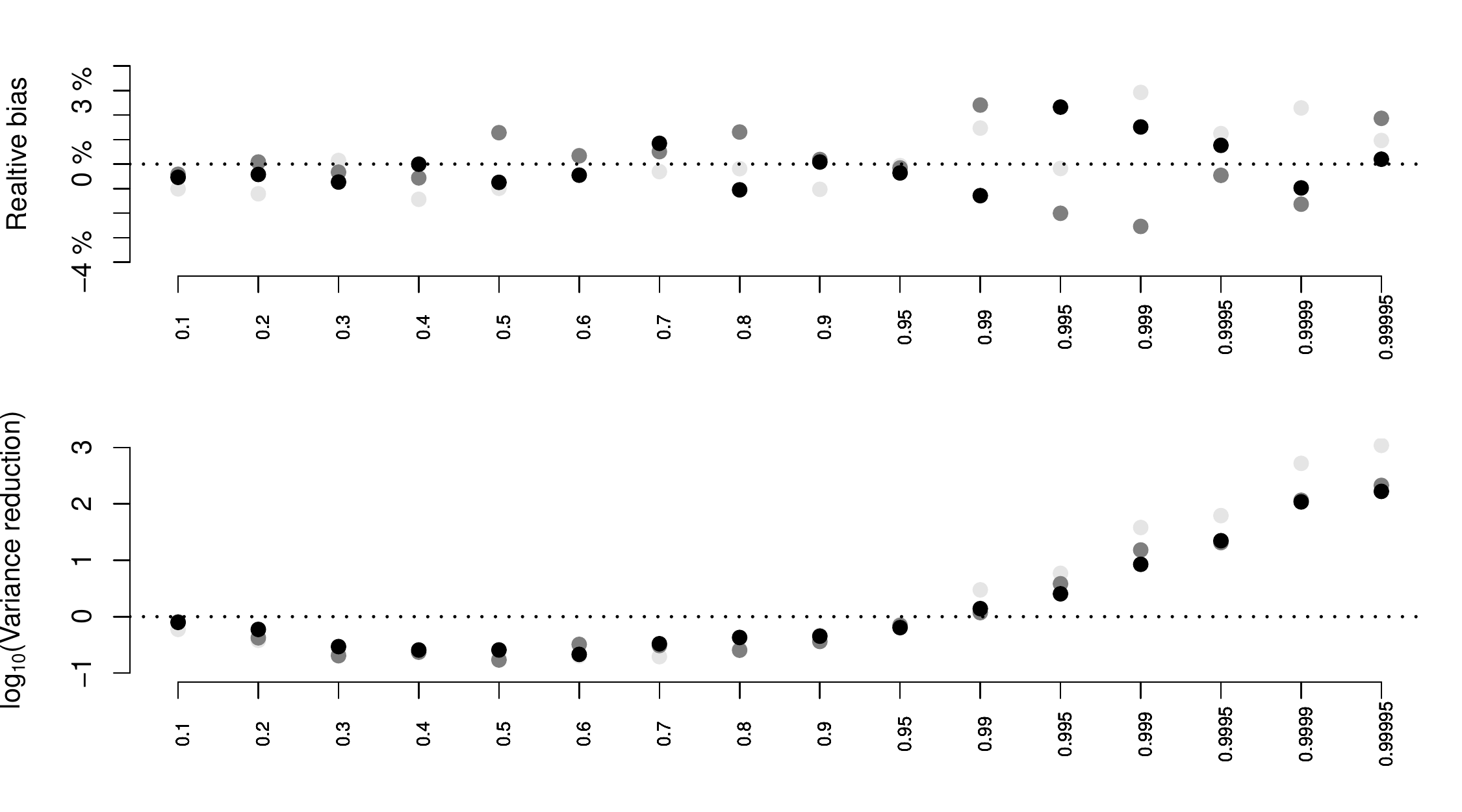}
\end{center}
\caption{Relative Bias (top) and Variance Reduction (bottom) for the Hierarchical Clayton copula from Figure \ref{fig:HAC} using the SMC algorithm with $N_{SMC} = 250$ particles. Using the notation from (\ref{eq:target_exp}), {\color{light-gray}$\bullet$} Marginal for $i=1$, {\color{mid-gray}$\bullet$} Marginal for $i=7$, {\color{black}$\bullet$} Sum of all the marginal conditional expectations (Expected Shortfall).  }
\label{fig:HAC_bias_var}
\end{figure}

\section{Conclusion and Final Remarks} \label{sec:conclusions}
With focus on the capital allocation problem for copula-dependent risks, we presented a Sequential Monte Carlo (SMC) Sampler algorithm to calculate conditional expectations where the conditioning event is rare. We exploit the copula structure to design a SMC algorithm whose efficiency is analysed through the Variance Reduction (see Equation \ref{eq:variance_reduction}) when compared to a simple Monte Carlo scheme. 

If, in addition to the Variance Reduction, the computational time is to be computed, one should be extremely careful, as it will be strongly dependent the programming language used, the hardware and also the actual implementation of the algorithms. For example, in {\ttfamily R} most of the basic functions (including sampling from simple distributions) can be used in a vectorial form, meaning that applying the function to a vector will be much faster than calculating the function values serially, ie, inside a {\ttfamily for} loop. For the sake of simplicity, both in the simple MC and in the SMC schemes we sampled one value at a time and did not make use of any vectorial form.

Due to the nature of the allocation problem, there is no need for the algorithm presented to be run \emph{online}. In most of the cases it will be used once a month or even once a year. Nevertheless, if the performance of the SMC algorithm needs to be improved specialized libraries such as the \emph{LibBi} (see \cite{murray2013bayesian}) can be used.

Although the code used for this paper has not been thoroughly optimised, some computational analysis can still be done. For the Clayton example from Section \ref{sec:clayton} (with parameter $\theta=1$), for each quantile level, Table \ref{tbl:clayton_time} presents (1) the time (in minutes) necessary to run the SMC algorithm; (2) how many times the SMC algorithm is slower than the simple Monte Carlo; (3) the Variance Reduction factor (as of (\ref{eq:variance_reduction})) for the quantities in (\ref{eq:target_exp}), where ES denotes the sum of the expectations (Expected Shortfall).

\begin{table}[ht]
\centering
\begin{tabular}{rrr|rrr}
	\hline
\rowcolor{white}
  & & & \multicolumn{3}{c}{\textbf{Variance Reduction}} \\
	\hline
\textbf{Quantile} & \textbf{SMC time} & $\displaystyle \frac{\text{\textbf{SMC time}}}{\text{\textbf{MC time}}}$ & $\textbf{k=1}$ & $\textbf{k=5}$ & \textbf{ES}\\[2ex]
  \hline
  0.1 & 0.32 & 53.83 & 0.34 & 1.95 & 0.46 \\ 
  0.2 & 0.64 & 68.97 & 0.27 & 1.16 & 0.29 \\ 
  0.3 & 0.97 & 73.94 & 0.11 & 0.14 & 0.37 \\ 
  0.4 & 1.30 & 75.04 & 0.07 & 1.89 & 0.19 \\ 
  0.5 & 1.63 & 73.09 & 0.19 & 0.16 & 0.18 \\ 
  0.6 & 1.97 & 68.15 & 0.35 & 0.03 & 0.28 \\ 
  0.7 & 2.31 & 61.41 & 0.42 & 0.21 & 0.78 \\ 
  0.8 & 2.65 & 52.55 & 0.30 & 0.38 & 0.21 \\ 
  0.9 & 3.00 & 39.28 & 0.44 & 2.94 & 1.83 \\ 
  0.95 & 3.35 & 26.34 & 0.29 & 0.33 & 4.10 \\ 
  0.99 & 3.70 & 9.64 & 2.79 & 5.36 & 3.54 \\ 
	\hline
  0.995 & 4.04 & 4.56 & 2.98 & 6.54 & 21.87 \\ 
  0.999 & 4.39 & 1.28 & 16.56 & 7.20 & 71.37 \\ 
  0.9995 & 4.74 & 0.56 & 94.74 & 88.12 & 172.51 \\ 
  0.9999 & 5.08 & 0.15 & 224.82 & 154.54 & 272.74 \\ 
  0.99995 & 5.42 & 0.06 & 1891.81 & 205.48 & 593.32 \\  
	\hline
\end{tabular}
\label{tbl:clayton_time}
\caption{Computational time (in minutes) and Variance Reduction for the SMC algorithm when compared to a simple Monte Carlo scheme.}
\end{table}

On this particular implementation of the algorithm, for low quantiles we can see that the MC scheme is considerably faster for the same accuracy of the SMC method. For example, when $\alpha= 0.4$ the MC method is 75 times faster than the SMC and the Variance Reduction on the first marginal expectation is 0.07, meaning that the MC variance is around 14 times smaller than the SMC variance.

On the other hand, for higher quantiles, in particular after $\alpha=0.99$, the SMC method starts to become more appealing, since the Variance Reduction gets larger than the difference in time. It's worth noticing the proposed SMC method has been designed to be used on extreme quantiles where lower quantiles are only used as intermediate steps.

The computing times presented on Table \ref{tbl:clayton_time} were measured using {\ttfamily R 3.1.0} in a Intel Xeon E5-1650, 3.20GHz and 16GB RAM. For each quantile level the algorithms were run 10 times and the values presented are an average over these 10 runs.

The {\ttfamily R} code used is available upon request from the authors.

\begin{acknowledgements}
The authors would like to thank Prof. Mario W\"{u}thrich (ETH, Switzerland) and Prof. Pierre Del Moral (UNSW, Australia) for fruitful discussions related to this work. The authors also acknowledge the use of the UCL Legion High Performance Computing Facility (Legion@UCL), and associated support services. RST has been funded by the CNPq (Brazil) through a Ci\^{e}ncia sem Fronteiras scholarship and part of this project has been developed while he was an Endeavour Research Fellow at CSIRO, Australia. GWP acknowledges the support of the Institute of Statistical Mathematics, Japan; the Oxford-Man Institute, UK and the Systemic Risk Center, London School of Economics, UK.
\end{acknowledgements}

\appendix\normalsize
\section{Euler's homogeneous function theorem}
\label{sec:eulerThm}

\begin{definition}[Homogeneous function] A function $f \, : \, U \subset \R^d \rightarrow \R$ is said to be homogeneous of degree $\tau$ if, for all $h >0$ and $u \in U$ with $hu \in U$ the following equation holds:
$$f(hu) = h^\tau f(u).$$
\end{definition}

\begin{theorem}[Euler's homogeneous function theorem] Let $U \subset \R^n$ be an open set and $f \, : \, U \rightarrow \R$ be a continuously differentiable function. Then $f$ is homogeneous of degree $\tau$ if, and only if, it satisfies the following equation:
$$\tau f(\textbf{u}) = \sum_{i=1}^n u_i \frac{\partial f}{\partial u_i}, \ \ \ \textbf{u}=(u_1,...,u_n) \in U, \, h>0$$

\end{theorem}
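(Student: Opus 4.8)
The plan is to prove the two implications of the equivalence separately, using the chain rule for the forward direction and a scalar ordinary differential equation along rays for the converse.

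First I would show that homogeneity of degree $\tau$ implies the differential identity. Assuming $f(h\textbf{u}) = h^\tau f(\textbf{u})$ for all admissible $h>0$, I would differentiate both sides with respect to $h$. The left-hand side, by the chain rule applied to the map $h \mapsto f(h\textbf{u})$, yields $\sum_{i=1}^n u_i \frac{\partial f}{\partial u_i}(h\textbf{u})$, while the right-hand side yields $\tau h^{\tau-1} f(\textbf{u})$. Evaluating the resulting equality at $h=1$ gives precisely $\sum_{i=1}^n u_i \frac{\partial f}{\partial u_i}(\textbf{u}) = \tau f(\textbf{u})$. Continuous differentiability of $f$ is exactly what licenses this term-by-term differentiation.

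For the converse, I would fix $\textbf{u} \in U$ and reduce the multivariate statement to a one-dimensional ODE in the ray parameter. Define $g(h) := f(h\textbf{u})$ on the set of $h>0$ with $h\textbf{u} \in U$; since $U$ is open this set contains an open interval around $h=1$. Differentiating gives $g'(h) = \sum_{i=1}^n u_i \frac{\partial f}{\partial u_i}(h\textbf{u})$, and inserting the Euler identity evaluated at the point $h\textbf{u}$ — namely $\tau f(h\textbf{u}) = \sum_{i=1}^n (h u_i)\frac{\partial f}{\partial u_i}(h\textbf{u})$ — produces the separable equation $h\, g'(h) = \tau\, g(h)$. I would then observe that $\frac{d}{dh}\big[ h^{-\tau} g(h)\big] = h^{-\tau}\big(g'(h) - \tfrac{\tau}{h} g(h)\big) = 0$, so $h^{-\tau} g(h)$ is constant and equal to its value $g(1) = f(\textbf{u})$ at $h=1$. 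Hence $g(h) = h^\tau f(\textbf{u})$, which is the asserted homogeneity.

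The routine steps are the chain-rule differentiation and the integration of the scalar ODE; the only point deserving care is the domain. The ODE argument is valid on the connected component of $\{h>0 : h\textbf{u} \in U\}$ containing $h=1$, so I would note that the conclusion $f(h\textbf{u}) = h^\tau f(\textbf{u})$ holds for every $h$ reachable from $1$ within $U$ along the ray through $\textbf{u}$. When $U$ is a cone, which is the setting relevant to the risk-measure function $r_\rho$ in Equation (\ref{eq:eulerPositiveHomogeneous}), this covers all admissible $h$ and no further argument is needed. I expect this domain bookkeeping, rather than the analysis itself, to be the only genuine obstacle.
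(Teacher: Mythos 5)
Your proof is correct, but there is nothing in the paper to compare it against: the theorem is stated in Appendix \ref{sec:eulerThm} without proof, as a classical result imported solely to justify the derivation of Equation (\ref{eq:eulerPositiveHomogeneous}). Your argument is the standard one --- chain rule in the ray parameter $h$ for the forward implication, and the integrating-factor computation $\frac{d}{dh}\bigl[h^{-\tau}g(h)\bigr]=0$ for the converse --- and both halves are carried out correctly. Your domain caveat is not mere bookkeeping but genuinely necessary: as stated for an arbitrary open $U$, the ``if'' direction is only valid on the connected component of $\{h>0 : h\textbf{u}\in U\}$ containing $h=1$, and a counterexample on a ray-disconnected $U$ (two disjoint balls on a ray, with $f$ scaled inconsistently between them) would defeat the unqualified statement. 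In the paper's application this is harmless, since the risk-measure function $r_\rho$ is used on a set of portfolio weights that is implicitly treated as a cone, exactly as you observe, and there positive homogeneity $r_\rho(t\bm{\lambda})=t\,r_\rho(\bm{\lambda})$ holds along every full ray. One small slip worth fixing: in the forward direction the differentiated identity $\sum_{i=1}^n u_i \frac{\partial f}{\partial u_i}(h\textbf{u}) = \tau h^{\tau-1} f(\textbf{u})$ requires only differentiability of $f$ at the points $h\textbf{u}$, not term-by-term differentiation of a sum; continuous differentiability is what you need in the converse, to guarantee that $g$ is $C^1$ so the ODE argument applies.
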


\section{Copulas and Sklar's theorem}
\label{sec:gaussianCopula}

\begin{definition}[Copula] A d-dimensional copula is a distribution function on $[0,1]^d$ with uniform marginal distributions.
\end{definition}

\begin{theorem}[Sklar] Let $F_\X$ be a joint distributions with margins $F_1,...,F_d$. Then there exists a copula $C:[0,1]^d \rightarrow [0,1]$  such that 
\begin{equation}\label{copula}
	F_\X ( \x ) = C\big( F_1(x_1),...,F_d(x_d)\big), \ \ \forall \x =(x_1,...,x_d) \in \overline{\R}^d.
\end{equation}
If the margins are continuous then $C$ is unique, given by
$$C(u_1,...,u_d) = F_\X(F_1^{-1}(u_1), ..., F_d^{-1}(u_d))$$
Conversely, if $C$ is a copula and $F_1,...,F_d$ are univariate distributions, then the $F$ defined in (\ref{copula}) is a joint distribution function with margins $F_1,...,F_d$.
\end{theorem}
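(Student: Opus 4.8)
The plan is to establish Sklar's theorem in three pieces: the existence of a representing copula $C$, its uniqueness when the margins are continuous, and the converse direction. Since the body of the paper works exclusively with continuous margins, I would first dispatch the continuous case, where a clean probabilistic argument is available through the probability integral transform, and then indicate how atoms are handled in the general case. The converse amounts to verifying that the defining axioms of a multivariate distribution function are preserved under the composition $C(F_1(\cdot),\dots,F_d(\cdot))$.

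For existence and uniqueness under continuous margins, I would start from a random vector $\X=(X_1,\dots,X_d)$ with joint cdf $F_\X$ and margins $F_i$, and set $U_i := F_i(X_i)$. The probability integral transform gives that each $U_i$ is uniform on $[0,1]$, since continuity of $F_i$ ensures $\Prob[F_i(X_i)\le u_i]=u_i$. Defining $C$ to be the joint distribution function of $\Ubold=(U_1,\dots,U_d)$ on $[0,1]^d$, it is by construction a distribution on the unit cube with uniform margins, hence a copula. To obtain the representation I would use that, for continuous $F_i$, the events $\{F_i(X_i)\le F_i(x_i)\}$ and $\{X_i\le x_i\}$ coincide up to a null set, so that $C(F_1(x_1),\dots,F_d(x_d))=\Prob[X_1\le x_1,\dots,X_d\le x_d]=F_\X(\x)$; evaluating $C$ at $u_i=F_i(x_i)$ then reproduces the explicit formula $C(\ubold)=F_\X(F_1^{-1}(u_1),\dots,F_d^{-1}(u_d))$ through $F_i^{-1}(F_i(x_i))=x_i$, with the understanding that on any flat piece of $F_i$ the discrepancy carries no probability mass. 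Uniqueness then follows because any two copulas satisfying the representation must agree on $\mathrm{Ran}(F_1)\times\cdots\times\mathrm{Ran}(F_d)$; since each $F_i$ is continuous its range is dense in $[0,1]$, and every copula is $1$-Lipschitz in each coordinate, so agreement on a dense set forces $C=C'$ everywhere.

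For the general case, where some $F_i$ may have jumps, the obstacle is that $F_i(X_i)$ is no longer uniform and the representation is only pinned down on the (possibly sparse) ranges of the margins. I would resolve this using the distributional transform: introduce independent uniforms $V_1,\dots,V_d$ and set $U_i:=F_i(X_i^-)+V_i\big(F_i(X_i)-F_i(X_i^-)\big)$, which randomizes mass across each atom so that $U_i$ is exactly uniform and $X_i=F_i^{-1}(U_i)$ almost surely. Taking $C$ to be the law of $(U_1,\dots,U_d)$ again yields a copula, and the representation follows as before. This randomization step is the main technical hurdle, as it replaces the clean probability integral transform and is the only place where the continuity hypothesis is genuinely needed. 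An alternative, following Sklar's original route, would instead define $C$ on the subcube $\mathrm{Ran}(F_1)\times\cdots\times\mathrm{Ran}(F_d)$ and invoke a multilinear extension lemma to extend it to a $d$-increasing function on all of $[0,1]^d$; the extension lemma is then the hard part.

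For the converse, I would take a copula $C$ together with univariate cdfs $F_1,\dots,F_d$ and define $F(\x):=C(F_1(x_1),\dots,F_d(x_d))$, then check the four characterizing properties of a joint cdf. Grounding follows from $C$ being grounded together with $F_i(x_i)\to 0$ as $x_i\to-\infty$; the normalization $F(\infty,\dots,\infty)=C(1,\dots,1)=1$ is immediate; the margins come out as $F_i$ because $C$ has uniform margins, i.e. $C(1,\dots,1,F_i(x_i),1,\dots,1)=F_i(x_i)$; and right-continuity follows from the (Lipschitz) continuity of $C$ composed with the right-continuous, non-decreasing $F_i$. The only property requiring a short computation is the $d$-increasing (rectangle) inequality, which transfers from $C$ to $F$ because the $C$-volume of the box $\prod_i (F_i(a_i),F_i(b_i)]$ equals the $F$-volume of $\prod_i(a_i,b_i]$ whenever $a_i\le b_i$, using monotonicity of each $F_i$. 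Assembling these four facts shows $F$ is a genuine joint distribution function with the prescribed margins, completing the proof.
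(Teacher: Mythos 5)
The paper states Sklar's theorem in Appendix \ref{sec:gaussianCopula} purely as background and gives no proof at all, so there is no in-paper argument to compare against; your proposal must therefore stand on its own, and it does — it is a correct sketch along the standard modern lines. In the continuous case the probability integral transform makes each $U_i = F_i(X_i)$ uniform, the law of $(U_1,\dots,U_d)$ is a copula realizing the representation, and your uniqueness argument is sound because every copula is $1$-Lipschitz in each coordinate and $\mathrm{Ran}(F_i) \supseteq (0,1)$ is dense in $[0,1]$ when $F_i$ is continuous. For the general case you correctly invoke R\"uschendorf's distributional transform, which is the clean way to handle atoms (the alternative you mention, extending the subcopula from $\mathrm{Ran}(F_1)\times\cdots\times\mathrm{Ran}(F_d)$ by a multilinear interpolation lemma, is Sklar's original 1959 route), and the converse verification — groundedness, normalization, margins via the uniform margins of $C$, right-continuity via the Lipschitz continuity of $C$ composed with right-continuous $F_i$, and the transfer of rectangle volumes from $\prod_i (F_i(a_i), F_i(b_i)]$ under $C$ to $\prod_i (a_i,b_i]$ under $F$ — is complete. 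Two points of polish, neither a genuine gap: first, $F_i^{-1}(F_i(x_i)) = x_i$ fails on flat pieces of $F_i$, so the cleaner derivation of the explicit formula is to substitute $x_i = F_i^{-1}(u_i)$ into the already-established representation and use $F_i(F_i^{-1}(u_i)) = u_i$, valid for all $u_i \in (0,1)$ precisely because $F_i$ is continuous (you acknowledge the flat-piece issue but argue in the less convenient direction); second, your remark that the distributional transform ``is the only place where the continuity hypothesis is genuinely needed'' reads backwards — continuity is what makes the plain probability integral transform work, and the randomization is the substitute for it where continuity fails.
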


Moreover, if we assume that $F_1,...,F_d$ are differentiable, then the joint density function of $\X$ can be written as
$$f_\X(\x) = c\big( F_1(x_1),...,F_d(x_d)\big) \prod_{i=1}^d{f_i(x_i)},$$
where
$$\displaystyle c(u_1,...,u_d) = \frac{\partial^d C(u_1,...,u_d)}{\partial u_1 ... \partial u_d}$$
and $f_i$ is the density of $X_i$.

\begin{definition}[Archimedean generator] An Archimedean generator is a continuous, decreasing function $\psi:[0, \infty] \rightarrow [0,1]$ that satisfies $\psi(0) = 1, \ lim_{t \rightarrow \infty} \psi(t) =0$ and is strictly decreasing on $[0, \ \inf\{t \ : \ \psi(t) =0 \}]$.
\end{definition}

\begin{definition}[Archimedean copulas]  A $d$-dimensional copula is called Archimedean if it is of
\begin{equation}
C(\ubold; \ \psi) = \psi(\psi^{-1}(u_1) +  \psi^{-1}(u_d)), \quad \ubold = (u_1,...,u_d) \in [0,1]^d,
\label{eq:archimedean}
\end{equation}
where $\psi$ is the Archimedean generator.
\label{def:arch}
\end{definition}

\begin{table}[ht]
	\centering
		\begin{tabular}{lll}
\hline
\textbf{Family} & \textbf{Parameter} & \textbf{Generator} $\psi(t)$ \\
\hline
\rowcolor{white}
Clayton & $\theta \in (0,\infty)$ & $(1+t)^{-1/\theta}$ \\
Gumbel & $\theta \in [1,\infty)$ & $\exp{-t^{1/\theta}}$ \\
\hline
		\end{tabular}
	\caption{Commonly used Archimedean generators}
	\label{tab:generators}
\end{table}

\begin{definition}[Multivariate coefficients of tail dependence] \label{def:mult_tail_dep} For the copulas defined in Table \ref{tab:generators} the multivariate coefficients of upper and lower tail dependence are defined, respectively, as
\begin{align*}
\lambda_u &= \lim_{u\rightarrow 1^-} \Prob[U_1 > u \, | \, U_2 > u,...,U_d>u] \\
					&= \lim_{t\rightarrow 0^+} \frac{\sum_{i=1}^d { n \choose n-i} i (-1)^i \psi'(it)}{\sum_{i=1}^{n-1}{ n -1 \choose n-1-i} i (-1)^i \psi'(it) }
\end{align*}

\begin{align*}
\lambda_l & = \lim_{u\rightarrow 0^+} \Prob[U_1 \leq u \, | \, U_2 \leq u,...,U_d\leq u] \\
          & = \frac{d}{d-1}\lim_{t \rightarrow \infty} \frac{\psi'(d t)}{\psi'\big( (d-1)t\big)}
\end{align*}

\end{definition}

%
%
 \bibliographystyle{spmpsci}
 \bibliography{bib_CapitalAllocation}

\end{document}